%% file: BHT26.tex
\documentclass[authoryear,orivec,cleveref,thm-restate]{lipics-v2021}
\usepackage{graphicx} 
\usepackage{lineno}
\usepackage{url}
\usepackage{float}
\input{0macros}
\hideLIPIcs
\nolinenumbers
\authorrunning{L\'eonard Brice, Thomas A. Henzinger, K. S. Thejaswini}
\title{Algorithms for Equilibria in Concurrent Stopping Games}
\keywords{Nash Equilibria, Risk Measure}

\Copyright{L\'eonard Brice, Thomas A. Henzinger, K. S. Thejaswini}

\author{Léonard Brice}{Institute of Science \& Technology Austria \and \url{https://lnrdbrice.github.io/}}{leonard.brice@ist.ac.at}{https://orcid.org/0000-0001-7748-7716}{}

\author{Thomas A. Henzinger}{Institute of Science \& Technology Austria \and \url{https://pub.ista.ac.at/~tah/}}{tah@ist.ac.at}{https://orcid.org/0000-0002-2985-7724}{}

\author{K. S. Thejaswini}{Université Libre de Bruxelles \and \url{https://thejaswiniraghavan.github.io/}}{thejaswini.raghavan@ulb.be}{https://orcid.org/0000-0001-6077-7514}{}

\category{} 

\relatedversion{}
\ccsdesc[500]{Theory of computation~Automata over infinite objects}

\EventEditors{C. Aiswarya, Ruta Mehta, and Subhajit Roy}
\EventNoEds{3}
\EventLongTitle{45th IARCS Annual Conference on Foundations of Software Technology and Theoretical Computer Science (FSTTCS 2025)}
\EventShortTitle{FSTTCS 2025}
\EventAcronym{FSTTCS}
\EventYear{2025}
\EventDate{December 17--19, 2025}
\EventLocation{BITS Pilani, K K Birla Goa Campus, India}
\EventLogo{}
\SeriesVolume{360}
\ArticleNo{34}
\begin{document}

\maketitle
\begin{abstract}

Concurrent games are a standard model for multi-agent systems,
with Nash equilibrium as their central solution concept. The associated
\emph{constrained existence problem}---does a game admit a Nash equilibrium
whose expected payoff lies within a prescribed interval for every
player?--- is undecidable, and remains so even for 10-player \emph{stopping} games, in which a
terminal state is reached almost surely under every strategy profile. We give
two routes to tractability.

We first relax exactness and consider the problem of approximate constrained existence problem, parametrised by $\varepsilon$-NE, which decides whether an
\(\varepsilon\)-Nash equilibrium with the prescribed payoffs exists.
The algorithm runs in exponential time, and only
polynomially in the bit-size of \(\varepsilon\). We complement it with a 
\PSPACE-hardness lower bound that holds already for turn-based games, and for
pure equilibria as well.

We then relax the solution concept, turning to \emph{extreme risk-sensitive
equilibria} (XRSE), recently introduced for turn-based stochastic games. Here the players
are partitioned into optimists and pessimists, who evaluate a strategy profile
by the best, respectively the worst, payoff attainable with positive
probability, instead of the expected payoff. We prove that the constrained existence problem for XRSE is
\NP-complete on concurrent games, as for turn-based games.

\end{abstract}
\section{Introduction}\label{sec:intro}
\input{1intro}

\section{Preliminaries}\label{sec:prelim}
\input{2prelims}

\section{An approximate algorithm for Nash equilibria} \label{sec:nash}
\input{3nash}

\section{An exact algorithm for XRSE} \label{sec:xe}
\input{4xe}

\section{Discussion} \label{sec:disc}
\input{5Disc}

\bibliography{biblio}

\appendix

\section{Appendix for \cref{sec:nash}}
\input{6Appendix3}

 \section{Appendix for \cref{sec:xe}}\label{app:xe}

\input{7Appendix4}

\end{document}

%% file: 0macros.tex
\usepackage{amsmath}
\usepackage{amssymb}
\usepackage{amsthm}
\usepackage{algorithm}
\usepackage{algorithmicx}
\usepackage[noend]{algpseudocode}
\usepackage[pdftex,dvipsnames]{xcolor}
\usepackage{tikzlings}
\usepackage{orcidlink}
\usepackage{tikzducks}
\usepackage{tikzsymbols}
\usepackage{tikz}
\usepackage{mathtools}
\usepackage{thmtools}
\usepackage{hyperref}
\usepackage{xspace}
\usepackage[new]{old-arrows}
\usepackage{float}
\usepackage{complexity}
\usepackage{caption}
\usepackage{subcaption}
\usepackage{graphicx}
\usepackage{mathabx}
\usepackage{stmaryrd}
\usepackage{wasysym}
\usepackage{pgfplots}
\pgfplotsset{compat=1.17}

\newcommand{\subp}[1]{\subparagraph*{#1.}}

\algdef{SE}[DOWHILE]{Do}{doWhile}{\algorithmicdo}[1]{\algorithmicwhile\ #1}%

\newcommand{\prob}{\mathbb{P}}

\newcommand{\reg}{\mathsf{reg}}
\newcommand{\dev}{\mathsf{dev}}

\newcommand{\Nb}{\mathbb{N}}
\newcommand{\Rb}{\mathbb{R}}
\newcommand{\Eb}{\mathbb{E}}
\newcommand{\Qb}{\mathbb{Q}}

\newcommand{\PM}{\mathbb{PM}}

\newcommand{\X}{\mathbb{X}}

\newclass{\TOWER}{TOWER}
\newclass{\ACKERMANN}{ACKERMANN}
\newclass{\EXPTIME}{EXPTIME}
\newclass{\IIEXPTIME}{2-EXPTIME}
\newclass{\NEXPTIME}{NEXPTIME}
\newclass{\SQRTSUM}{SqrtSum}
\newclass{\THREESAT}{3SAT}
\newclass{\PTIME}{PTIME}

\usepackage{graphicx}
\usepackage{array}
\usepackage{tikz}
\usetikzlibrary{automata, arrows, positioning, decorations.markings, decorations.pathreplacing}
\usetikzlibrary {decorations.pathmorphing, decorations.shapes}
\usetikzlibrary{backgrounds,fit,shapes.geometric}
\usetikzlibrary {graphs, quotes}
\usetikzlibrary{calc}
\usetikzlibrary{shapes,shapes.geometric,fit}
\usepackage[new]{old-arrows}
\usetikzlibrary {graphs,shapes.geometric,quotes}
\usetikzlibrary{decorations.pathreplacing}
\usetikzlibrary{automata}
\usetikzlibrary{calc}
\usetikzlibrary{arrows,automata,decorations.pathmorphing}
\usetikzlibrary{shapes,shapes.geometric,fit,calc,automata}
\usepackage{xargs} 
\usepackage{todonotes}
\newcommandx{\theju}[2][1=]{\todo[linecolor=blue,backgroundcolor=blue!25,bordercolor=blue,#1]{\tiny T: #2}}
\newcommandx{\leon}[2][1=]{\todo[linecolor=red,backgroundcolor=red!25,bordercolor=red,#1]{\tiny L:#2}}
\newcommandx{\aliA}[2][1=]{\todo[linecolor=green,backgroundcolor=red!25,bordercolor=red,#1]{\tiny L:#2}}

\usepackage{latexsym}
\usepackage{wasysym}

\newtheorem{problem}{Problem}

\theoremstyle{claimstyle}

\renewcommand{\Game}{\mathcal{G}}

\newcommand{\Cubes}{\mathcal{C}}

\newcommand{\Core}{\mathcal{C}}

\newcommand{\Eve}{\mathfrak{E}}

\newcommand{\Dist}{\mathcal{D}}

\newcommand{\bsigma}{{\bar{\sigma}}}
\newcommand{\btau}{{\bar{\tau}}}

\newcommand{\bchi}{{\bar{\chi}}}
\newcommand{\bx}{{\bar{x}}}
\newcommand{\by}{{\bar{y}}}
\newcommand{\bz}{\bar{z}}

\newcommand{\ba}{\bar{a}}

\newcommand{\val}{\mathsf{val}}

\newcommand{\Hist}{\mathsf{Hist}}

\newcommand{\punish}{\mathsf{punish}}
\newcommand{\Supp}{\mathsf{Supp}}
\newcommand{\anchor}{\mathsf{anchor}}
\newcommand{\last}{\mathsf{last}}

\newcommand{\Av}{\mathsf{Av}}
\newcommand{\irank}[2]{{#1}\mbox{-}\mathrm{rank}(#2)}
\DeclareMathOperator*{\argmin}{arg\,min}
\newcommand{\anc}[1]{\mathsf{anchor}_{#1}}

\newcommand{\Lab}{\Lambda}
\newcommand{\restr}{{\upharpoonright}}

\newcommand{\Chi}{\mathrm{X}}

\renewcommand{\epsilon}{\varepsilon}
\renewcommand{\phi}{\varphi}
\renewcommand{\l}{\ell}

\newcommand{\pay}[2]{\stackrel{\square}{#1}\stackrel{\circ}{#2}}
\newcommand{\act}[2]{\stackrel{\square}{#1}\stackrel{\circ}{#2}}

\renewcommand{\Circle}{\mathbin{\raisebox{-0.34ex}{\scalebox{1.5}{$\circ$}}}}

\tikzset{every node/.append style={initial text={}, node distance=15mm}}

\tikzset{stoch/.style={state, fill=black, inner sep=0.5pt, text=white, minimum size=0pt}}
\tikzset{circlev/.style={state,minimum size=20pt,fill, top color=white, bottom color=black!20}}
\tikzset{squarev/.style={state, rectangle, minimum size=20pt,fill, top color=white, bottom color=black!20}}
\tikzset{diamondv/.style={state, diamond, minimum size=20pt,fill, top color=white, bottom color=black!20}}
\tikzset{trianglev/.style={state, regular polygon, regular polygon sides=3, minimum size=15pt, inner sep=0.8pt, fill, top color=white, bottom color=black!20}}
\tikzset{pentagonv/.style={state, regular polygon, regular polygon sides=5, minimum size=20pt,fill, top color=white, bottom color=black!20}}
\tikzset{hexagonv/.style={state, regular polygon, regular polygon sides=6, minimum size=20pt,fill, top color=white, bottom color=black!20}}
\tikzset{gadget/.style={state, rectangle, rounded corners, dashed}}
\tikzset{mem/.style={state, minimum size=25pt, text=white, fill, top color=black!40, bottom color=black!60, rectangle, rounded corners}}

%% file: 1intro.tex
We now live surrounded by decentralised automated systems involving multiple agents---\emph{multi-agent systems}.
\emph{Stochastic games} constitute a powerful tool to model those, with applications in epidemic processes~\cite{Lef81}, formal verification~\cite{FKNP11}, learning theory~\cite{AJKS21}, cyber-physical systems~\cite{SEC16}, distributed and probabilistic programs~\cite{dAHJ01}, and probabilistic planning~\cite{TKI10}.

More expressive than their turn-based counterparts, \emph{concurrent} games are stochastic games meant to capture synchronous behaviors.
A play in a concurrent game can be described as a sequence of moves of a token on the vertices of a graph: from a given vertex, each player selects, simultaneously and independently, an \emph{action}.
The selected action profile induces a probability distribution over the outgoing edges: the token then moves to a new vertex according to that distribution.
In this paper, we consider \emph{terminal-reward} concurrent games (also known as \emph{simple quantitative} concurrent games): the graph contains some \emph{terminal vertices}, where the game stops, and each player receives a specified reward.
Preferences of each player among the different plays depend entirely on which terminal vertex is reached, no matter which path was taken.

In stochastic games, a natural solution concept is \emph{Nash equilibrium} (NE).
A strategy profile is a Nash equilibrium if no player can improve their expected payoff by deviating from their strategy, assuming the other players stick to theirs.
Nash equilibria are known to exist in stochastic games, but may be unsatisfying, as they yield poor payoffs to the players~\cite{CMJ04,Umm10}.
A more ambitious question is known as the \emph{constrained existence problem}: given a stochastic game and a constraint (usually given as bounds on the players' expected payoffs), does the game contain an NE that satisfies the constraint?
This problem hits the undecidability wall, even in turn-based games with a number of players fixed to 10~\cite{UW11}.
To overcome that barrier and find decidable variants, three natural approaches arise.

The first one consists in restricting either the class of games considered, or the class of strategies allowed.
Restrictions on the stochastic nature of the problem are unsatisfying: the problem remains undecidable when the players are not allowed to randomise, as well as when the game is deterministic~\cite{DBLP:conf/concur/UmmelsW11}.
It is only known to be decidable (in polynomial space) when those two restrictions apply simultaneously, thus ruling out any form of stochasticity~\cite{DBLP:journals/corr/BouyerBMU15}.
If the players are restricted to \emph{memoryless strategies}, i.e., play only according to the current vertex, then the problem reduces to deciding the validity of a formula in the existential theory of the reals, which can be done in polynomial space~\cite{HS20}. 
Such a restriction, although interesting for some formalisms, is of little help in settings such as control theory, where the game graph models an environment for a robot to traverse and the multiple players depict different independent controllers with their own objectives~\cite{BB6robotics,KB08,GNPW20}. Requiring memoryless strategies amounts to the absurdity of a robot repeating the same action every time it revisits a location; encoding the needed memory into the game graph itself causes a significant increase in the size of such an enhanced game graph~\cite{KB08}.

The second approach consists in designing \emph{approximate} algorithms, i.e., algorithms that return \emph{yes} on positive instances, and \emph{no} on instances that are sufficiently far from positive---but may return either outputs on instances that are close to the border (using the notion of $\epsilon$-NE), with a specified acceptable error term.
However, to this date, approximate algorithms have only been designed combined with the first approach: the approximate constrained existence problem of \emph{memoryless} NEs was recently shown to be in the class $\exists\Rb \cap \NP^\NP$ \cite{ABCT25}.

The third approach consists in looking for variants of the notion of NE itself, for which the equivalent problem would be decidable.
For classical refinements, such as subgame-perfect equilibria, an easy adaptation of the classical undecidability proof shows that the same result holds.
However, a new equilibrium notion was recently proposed for which an algorithm exists: \emph{extreme risk-sensitive equilibria}~\cite{BHT25} (XRSEs).
XRSEs are defined as Nash equilibria, with the slight difference that instead of maximising their \emph{expected} payoffs, the players intend to maximise either the minimal payoff they get with positive probability (\emph{pessimistic} players), or the maximal one (\emph{optimistic} players).
The constrained existence problem of XRSEs was shown decidable, and $\NP$-complete, in turn-based stochastic games, under randomised or pure strategies.
However, the case of concurrent games was left open.

\subparagraph*{Contributions.}
In this paper, we focus on a classical restriction of concurrent games, namely \emph{stopping} games.
A concurrent game is stopping if for every strategy profile, it is almost sure that some terminal vertex will be reached.
This notion rules out a specific case of plays, those that remain in the game forever, which is a sound restriction to model processes that are expected to eventually end, but with no duration bound.
The undecidability result cited above still applies with that restriction.
In this setting, our contribution is twofold.

First, we provide the first known approximate algorithm for the constrained existence problem of NEs, without any restriction on the players' strategies.
That algorithm relies on three ideas: (1) we show that every NE can be transformed into an $\epsilon$-NE \emph{with finite memory horizon}, meaning that it eventually follows a memoryless strategy profile.
Then, (2) we define the \emph{characteristic vector} of a strategy profile, containing the information of each player's expected payoff, and the maximal expected payoff they can obtain by deviating.
Thus, that vector is a finite-dimension object that contains the necessary data to decide whether the strategy profile witnesses a positive instance of the problem.
Finally, (3) we show how to iteratively construct an over-approximation of the set of characteristic vectors of strategy profiles with memory horizon $k$, for any integer $k$.
The over-approximation relies on a discretisation of the space of characteristic vectors: instead of computing the exact set, we compute a union of cubes that covers it.
Based on step (1), we compute a number $k$ (which depends on the error term) such that an over-approximation of strategy profiles with memory horizon $k$ is sufficient to decide the problem.
The resulting algorithm takes exponential time: thus, switching from the exact problem to the approximate version changes the complexity from undecidable to $\EXPTIME$.
We also give a $\PSPACE$ lower bound.

Our second contribution concerns XRSEs, whose constrained existence problem we prove is $\NP$-complete on concurrent stopping games (\cref{thm:Npcomplete}), matching the lower bound of the same problem in the turn-based setting~\cite[Lemma~20]{BHT25}; the work is in the $\NP$-upper bound.
It reduces, as in turn-based games, to showing that only polynomial memory for a witness XRSE to exist.
Concurrency however makes this step non-trivial and a na\"ive adaptation of the turn-based proof would lead to an exponential blow-up in the number of players. 
A further concurrent-specific obstacle is the punishment of players that deviate. When one player deviates, the other players can be assume to form a punishing coalition: however, this punishment may require randomisation, and the punishing players cannot resort to a common randomness source to do so, hence this setting is not reducible to a two-player one, unlike the assumptions in earlier work~\cite{AHK02,dAHK07}. Since the team in a concurrent game cannot correlate its randomisation, we appeal to recent results that show that memoryless team-punishment strategies without shared randomness~\cite{BHMST26}.





\subparagraph*{Structure of the paper.}
In \Cref{sec:prelim}, we introduce the necessary definitions.
In \Cref{sec:nash}, we give our approximate algorithm for Nash equilibria, and the lower bound.
In \Cref{sec:xe}, we give our exact algorithm for XRSEs.
We end with a discussion in \Cref{sec:disc}.

%% file: 2prelims.tex
\subp{Tuples}
For any set $X$ and any finite set $I$ (assumed clear from the context), a tuple $(x_i)_{i \in I}$ will usually be denoted by $\bx$; and conversely, when we are given a tuple $\bx$, we denote its $i$th coordinate as $x_i$.
For a fixed element $i \in I$, we also use the notation $\bx_{-i} = (x_j)_{j \in I \setminus \{i\}}$, and denote by $(\bx_{-i}, x'_i)$ the tuple whose $i$th element is $x'_i$, and whose $j$th element, for every $j \neq i$, is $x_j$.
When the set $I$ corresponds to players, we use the word \emph{profile} instead of \emph{tuple}.

\subp{Distributions and Support} 
For a finite set $X$, we denote by $\Dist(X)$ the set of probability distributions over $X$, i.e. the set of mappings $d: X \rightarrow [0,1]$ satisfying $\sum_{x \in X} d(x) = 1$.
For a given distribution $d$, we denote the \emph{support} of $d$ as $\Supp(d) = \{x \in X \mid d(x) > 0\}$.

\subp{Games} 
In this paper, we focus on \emph{terminal-reward concurrent stochastic multiplayer games}.
When the context is clear, we simply call them \emph{concurrent games} or \emph{games}.

\begin{definition}[Concurrent game]
    A \emph{concurrent game} is a tuple:
    $$\Game = (V, v_0, T, \Pi, (A_i)_{i \in \Pi}, (\Av_i)_{i \in \Pi}, \Delta, \mu)$$
    consisting of:
\begin{itemize}
    \item a finite set $V$ of \emph{vertices}, containing an \emph{initial vertex} $v_0$ and a subset $T \subseteq V$ of \emph{terminal vertices};
    
    \item a finite set $\Pi$ of \emph{players};
    
    \item for each player $i \in \Pi$, a finite set $A_i$ of \emph{actions}, and a mapping $\Av_i: V \setminus T \to A_i$ that maps each vertex $v$ to the set of actions \emph{available} to player $i$ in $v$.
    We also write $A = \prod_i A_i$ for the set of action profiles, and $\Av(v) = \prod_{i \in \Pi} \Av_i(v)$ for the action profiles available in $v$;
    
    \item a \emph{transition function} $\Delta: (V \setminus T) \times \prod_{i \in \Pi} A_i \rightarrow \mathcal{D}(V)$.
    For an action profile $\ba$ played at vertex $v$, the quantity $\Delta(v, \ba)(w)$ denotes the probability of transitioning to vertex $w$;
    
    \item a payoff function $\mu: T \rightarrow \mathbb{R}^\Pi$ assigning a reward profile to each terminal vertex. For each player $i$ and terminal vertex $t$, we denote by $\mu_i(t)$ the $i$th coordinate of the profile $\mu(t)$.
\end{itemize}
\end{definition}

As specific case, a game is \emph{deterministic} if for every vertex $v$ and action profile $\ba$, the support of the distribution $\Delta(v, \ba)$ is a singleton.
It is \emph{turn-based} if from every $v$, only one player faces a non-trivial choice, i.e., the set $\Av_i(v)$ is a singleton for all players except one.
A \emph{Markov decision process} (MDP) is a game with one player.
A \emph{Markov chain} is a game with zero players.

\subp{A note on representation} 
For each vertex, the set of transition probabilities is given as a matrix with dimension $|\Pi|$.
The size of the transition matrix is exponential in the number of players that have more than one distinct action available at that vertex. 
If every player has more than one action available, this leads to 
a size exponential in the number of players; but if at most $c$ players have a non-trivial choice in each state, where $c$ is a constant, then the space needed to represent the transition function is bounded by $O(|V| (\max_i |A_i|)^c)$.

\subp{Plays and histories} 

A \emph{play} in the game $\Game$ is a sequence of states and action profiles $\pi = (v_0, \ba_0, v_1, \dots)$ that is either infinite, or ends with a terminal vertex, i.e. a word in the set $v_0 A (VA)^\omega \cup v_0 A (VA)^*T$.
A \emph{history} is a finite prefix $h$ of a play that ends with a vertex, i.e. a word in the set $(VA)^*V$.
Given a history $h$, we denote the last vertex in $h$ by $\last(h)$.
The set of histories is denoted by $\Hist$.

The payoff function $\mu$ is extended to plays, by defining $\mu(\pi) = \mu(t)$ if the play $\pi$ eventually reaches the terminal vertex $t$, and $\mu(\pi) = (0)_{i \in \Pi}$ if $\pi$ is infinite.

\subp{Strategies}

A \emph{strategy} for player $i$ is a mapping $\sigma_i$ that takes a history $h$ ending in a non-terminal state $v$, and outputs a probability distribution over the set $\Av_i(v)$.
A strategy is \emph{pure} if for every history $h$, there is an action $a$ with $\sigma_i(h)(a) = 1$.
It is \emph{memoryless} if for every two histories $h, h'$, we have $\sigma_i(h) = \sigma_i(h')$ whenever $\last(h) = \last(h')$.
It is \emph{positional} if it is simultaneously memoryless and pure.
These definitions extend to strategy profiles, and given a history $h$ and a strategy profile, we abusively write $\bsigma(h)$ for the distribution $\ba \mapsto \prod_i \sigma_i(a_i)$.
Given a strategy $\sigma_i$ and a history $hv$, we denote by $\sigma_{i \restr hv}$ the residual strategy the maps each history $h'$ starting in $v$ to the distribution $\sigma_i(hh')$.

A play $\pi = (v_0, \ba_0, v_1, \dots)$ is \emph{compatible} with a strategy $\sigma_i$ if for every step $k$, there exists an action profile $\ba \in \prod_{i \in \Pi} \Supp(\sigma_i(\pi_{\le k}))$ such that $v_{k+1} \in \Supp(\Delta(v_k, \ba))$.
This definition extends naturally to strategy profiles, and to histories.

A complete strategy profile $\bsigma = (\sigma_i)_{i \in \Pi}$ defines a probability distribution $\prob_\bsigma$ over plays in the game $\Game$.
Then, we liberally identify every history $h$ to the event:
$$\{\pi \mid \text{the history } h \text{ is a prefix of } \pi\},$$
by writing for instance $\prob_\bsigma(h)$ for the probability that the history $h$ is followed.

Under such a probability measure, the payoff function $\mu$, extended to plays, can be seen as a (measurable) random variable ranging over the space $\Rb^\Pi$.
We write $\Eb(\bsigma)$ for its expected value, and $\Eb_i(\bsigma)$ for player $i$'s expected payoff.

\subp{Stopping Games} 
In this work, we restrict our attention to \emph{stopping games}.
A game is \emph{stopping} if for every strategy profile $\bsigma$, the probability of reaching the set $T$ under $\bsigma$ is exactly 1.
This hypothesis guarantees that the set of infinite plays has a probability measure of 0, independently from the strategy profile considered.
Consequently, the assumption $\mu(\pi) = (0)_{i \in \Pi}$ for infinite plays $\pi$ has no impact on expected payoffs.

\subp{Nash equilibria and their constrained existence problem}
The most classical solution concept in multiplayer games is \emph{Nash equilibrium}.

\begin{definition}[Nash equilibrium]
    A strategy profile $\bsigma$ in the game $\Game$ is a \emph{Nash equilibrium} if and only if for each player $i$ and every strategy $\sigma'_i$, we have $\Eb_i(\bsigma_{-i}, \sigma'_i) \leq \Eb_i(\bsigma)$.
\end{definition}

Then, a natural decision problem to study is the following one.

\begin{problem}[Constrained existence problem of NEs]
    Given a concurrent game $\Game$, and two payoff profiles $\bx, \by \in \Rb^\Pi$, does there exist a Nash equilibrium $\bsigma$ in $\Game$ satisfying $x_i \leq \Eb_i(\bsigma) \leq y_i$ for every player $i$?
\end{problem}

However, it is known from the work of Ummels and Wojczak~\cite[Theorem~4.9]{UW11} that this problem is undecidable.
Their proof goes by a reduction from the halting problem of two-counter machines.
A careful reading of that proof shows that the game that is constructed is always stopping, hence this result holds in our more restricted setting.

\begin{theorem}[\cite{UW11}]
    The constrained existence problem of NEs is undecidable, even in stopping games.
\end{theorem}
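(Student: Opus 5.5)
The plan is to reuse the undecidability proof of Ummels and Wojczak~\cite[Theorem~4.9]{UW11} as given, and only check that every game produced by their reduction is stopping. I will not build a new reduction. Their construction maps a two-counter machine $M$ to a game $\Game_M$ with a constraint $(\bx,\by)$ such that $M$ halts if and only if $\Game_M$ has a Nash equilibrium meeting the constraint. Since the correctness of that equivalence is established in~\cite{UW11}, the new content is a structural property: under every strategy profile, the set $T$ is reached with probability $1$. If that property holds, the reduction lands in the class of stopping games, and undecidability follows immediately from the halting problem.

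To verify the stopping property, I will go through the gadgets of the construction in turn: the increment, decrement and zero-test gadgets for each counter, and the gadgets that fix the players' expected payoffs. In each gadget I want a uniform lower bound $p>0$ on the probability of reaching a terminal vertex before the play leaves the gadget, valid for every choice of actions. The natural source of such a $p$ is the stochastic vertices. In the Ummels--Wojczak gadgets, counter values are encoded in expected payoffs, using the probabilities of moving to particular terminals, so every pass through a gadget should include a stochastic vertex with a transition of positive probability into $T$. Once I have this, I will combine it with finiteness of the game graph. Every cycle, and hence every infinite play, passes through such a vertex infinitely often. The probability of never reaching $T$ is then bounded by $\lim_n (1-p)^n = 0$, and this bound is uniform over strategy profiles.

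The hard step is showing that no cycle avoids those exit vertices. Two kinds of loop could escape: the loop that simulates the machine's main control flow, and loops inside the gadgets where a player can choose to stay. If some such cycle could be followed forever under some strategy with positive probability, I would make a minimal modification: add, at some vertex of each such cycle, a small-probability transition to a fresh terminal whose payoff vector is chosen to match the payoffs the original analysis assigns to infinite plays. The modified game is then stopping by the argument above. I then need to check that the Nash equilibrium characterisation in~\cite{UW11} still holds after this change. This should work because their constraints compare expected payoffs to thresholds, and the small exit probability can be absorbed by rescaling the payoffs. This consistency check is where most of the care is needed.
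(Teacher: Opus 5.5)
Your plan is essentially the paper's: the paper also just invokes the two-counter-machine reduction of \cite[Theorem~4.9]{UW11} and asserts that a careful reading shows the constructed game is always stopping, so only your first branch (a uniform positive exit probability on every cycle) is needed. You can drop the leak-and-rescale fallback, which is unnecessary and, as sketched, not a genuine rescaling: a leak weights each continuation by a factor that depends on how many times it traverses the cycle, and this would disturb the exact probability encodings the reduction relies on.
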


Here, we overcome this difficulty by exhibiting two new decidable variants: the approximated version of this problem, and the constrained existence problem of XRSEs.



%% file: 3nash.tex
\subsection{The problem}

As discussed in the introduction, the constrained existence problem of Nash equilibria is undecidable.
In this section, we therefore consider its approximate version, and present the first known algorithm to decide it.

\begin{problem}[Approximate constrained existence problem of NEs in stopping games] \label{pb:ne}
    Given a stopping game $\Game$, two payoff profiles $\bx, \by \in \Qb^\Pi$, and a rational quantity $\epsilon > 0$, such that we have the guarantee that either:
    \begin{itemize}
        \item there exists a Nash equilibrium $\bsigma$ in $\Game$ satisfying $x_i \leq \Eb_i(\bsigma) \leq y_i$ for each player $i$,
        \item or there exists no $\epsilon$-Nash equilibrium satisfying $x_i - \epsilon \leq \Eb_i(\bsigma) \leq y_i + \epsilon$ for each player $i$,
    \end{itemize}
    are we in the first case?
\end{problem}

We assume an instance of the problem, and denote the number of non-terminal vertices in the game $\Game$ by $N$, the maximum reward absolute value by $R = \max_{i\in\Pi} \max_{t \in T} |\mu_i(t)|$, and the least transition probability $p_{\min} = \min_{v\in V} \min_{\ba \in \Av(v)} \min_{w \in \Supp(\Delta(v, \ba))} \Delta(v, \ba)(w)$.
The quantity $\epsilon$ is sometimes called \emph{error term}.

\subsection{A bound on stopping probabilities}

The game $\Game$ is assumed stopping: under every strategy profile, it is almost sure that some terminal vertex will eventually be reached.
The following lemma shows that the probability of avoiding terminal vertices for a given amount of time can be bounded.

\begin{lemma}
    Let $\lambda = 1- p_{\min}^N$.
    For every strategy profile $\bsigma$ and every $k \in \Nb$, we have $\prob_\bsigma(V^{kN+1}) \leq \lambda^k$.
\end{lemma}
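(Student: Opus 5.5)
The plan is to prove a one-block estimate and then iterate it. The one-block claim is: for every strategy profile $\bsigma$ and every history $h$ with $\last(h) \notin T$, the probability under $\bsigma$, conditioned on $h$, of staying in $V \setminus T$ for the next $N$ steps is at most $\lambda$. Here $V^{kN+1}$ denotes the event that the first $kN+1$ vertices of the play are all non-terminal. Once the one-block claim holds, we condition on the length-$(jN+1)$ prefix for $j = 0, \dots, k-1$. Writing
\[ \prob_\bsigma(V^{(j+1)N+1}) = \sum_{h} \prob_\bsigma(h)\, \prob_{\bsigma_{\restr h}}(V^{N+1} \text{ from } \last(h)) \le \lambda\, \prob_\bsigma(V^{jN+1}), \]
with $h$ ranging over non-terminal histories of length $jN+1$, an induction on $k$ gives $\lambda^k$.

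For the one-block claim, we first show that the bound $\lambda$ holds against \emph{pure} profiles. The natural route is through a deterministic adversary. Consider the one-player "adversary" who, at each vertex, picks both an action profile $\ba \in \Av(v)$ and a successor $w \in \Supp(\Delta(v,\ba))$. The key step is that from every non-terminal vertex $v$, every such adversary strategy reaches $T$ within $N$ steps with positive probability, and this probability is at least $p_{\min}^N$.

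We argue this by contradiction, using that the game is stopping. Suppose instead that there is a positional profile $\bsigma$ from which $T$ is unreachable in the support graph. Then the play never stops, contradicting the stopping assumption. More precisely, we consider the set $W$ of vertices from which some positional profile makes $T$ unreachable; if $W$ were nonempty, choosing at each $v \in W$ the witnessing actions gives a positional profile that stays in $W$ forever with probability 1.

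This shows that every positional profile reaches $T$ with positive probability, hence along a simple path of length at most $N$. For arbitrary profiles the argument goes through an attractor. Let $U_0 = T$, and let $U_{m+1}$ be $U_m$ together with the vertices $v$ such that every action profile $\ba \in \Av(v)$ has some successor in $U_m$. If $U_m$ stabilises without covering $V$, then at each remaining vertex some action profile keeps all successors outside it, and this yields a positional profile that never stops, a contradiction. So $U_N = V$.

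From any vertex in $U_{m}\setminus U_{m-1}$, whatever (mixed, history-dependent) action profile is drawn, the next vertex lies in $U_{m-1}$ with probability at least $p_{\min}$. Chaining at most $N$ such steps, the conditional probability of reaching $T$ within $N$ steps is at least $p_{\min}^N$. This gives the one-block claim, and with it the lemma.

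The main obstacle is justifying $U_N = V$ against randomised, history-dependent strategies. The attractor argument handles this cleanly because the attractor condition quantifies over all action profiles, so mixing is irrelevant.
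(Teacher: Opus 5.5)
Your argument is correct, but it follows a different route from the paper. The paper merges all players into one MDP agent and takes an optimal \emph{pure} strategy $\sigma^0$ for the objective of avoiding $T$ for $kN+1$ steps, via a standard MDP theorem. It argues that under $\sigma^0$ every vertex has a compatible path to $T$ with at most $N+1$ vertices, hence of probability at least $p_{\min}^N$ by purity. It then iterates, and transfers the bound to all profiles by optimality of $\sigma^0$. You instead use the attractor $T = U_0 \subseteq U_1 \subseteq \dots \subseteq U_N$: from $U_m \setminus U_{m-1}$, any mixed action distribution moves into $U_{m-1}$ with probability at least $p_{\min}$. You then conclude by conditioning on prefixes of length $jN+1$.

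Your version needs neither the MDP optimality result nor the reduction to pure strategies. It also supplies a step the paper leaves implicit. The strategy $\sigma^0$ is in general time-dependent, so a shortest compatible path may revisit vertices. The bound $N+1$ on its length therefore does not follow merely from $\sigma^0$ reaching $T$ almost surely; it follows from stopping exactly through your fact $U_N = V$. The paper's version is shorter because it leans on standard results.

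Two cleanups. First, delete the paragraph on an adversary that picks both $\ba$ and the successor $w$. As written its claim is false: a vertex with a $\frac12$-probability self-loop and a $\frac12$-probability edge to a terminal is stopping, yet such an adversary loops forever. Neither that paragraph nor the positional-profile detour is used once you have the attractor.

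Second, your contradiction ``a positional profile that never stops'' needs the vertices outside the stabilised attractor to be reachable from $v_0$, since stopping is defined for plays from $v_0$. Either read stopping as holding from every vertex (the paper implicitly does the same), or only claim that $U_N$ contains every vertex reachable from $v_0$. The latter is all your conditioning step uses.
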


\begin{proof}
    Let us consider the MDP obtained from $\Game$, by merging all players into one agent selecting all players' actions.
    Consider the objective of avoiding terminal vertices for at least $kN+1$ steps.\footnote{By \emph{steps}, we always mean the number of vertex visits---the number of action interactions is then $kN$.}
    By standard probabilistic theorems~\cite[Theorem 4.4.2]{DBLP:books/wi/Puterman94}, the agent has a pure strategy $\sigma^0$ that maximises the probability of satisfying this objective.
    Now, from any vertex $v$, when following that strategy, since the game is stopping, it is almost sure that a terminal will be reached.
    Therefore, there is a path from $v$ to $T$ that is compatible with $\sigma^0$; and that path can be chosen of length at most $N+1$.
    Since the strategy $\sigma^0$ is pure, each transition along that path is followed with probability at least $p_{\min}$, which means that said path is followed with probability at least $p_{\min}^N$.
    Consequently, from any path $v$, when following $\sigma^0$, the probability of avoiding $T$ for $N+1$ steps is smaller than or equal to $1-p_{\min}^N = \lambda$.
    By iterating this reasoning, the probability of avoiding $T$ for $kN+1$ steps is smaller than or equal to $\lambda^k$.
    By optimality of $\sigma^0$, the same bound applies to any strategy profile from $v_0$.
\end{proof}

    \subsection{Strategy profiles with finite memory horizon}

In this subsection, we show that Nash equilibria can be approximated by $\epsilon$-NEs with finite \emph{memory horizon}, a special case of finite-memory strategy profiles.

\begin{definition}[Memory horizon]
    A strategy profile \emph{with memory horizon $k$} is a strategy profile $\bsigma$ such that for every history $h$ of length $k$, there exist memoryless strategy profile $\btau^h$, such that for every history $h'$ of which $h$ is a prefix, we have $\bsigma(h') = \btau(h')$.
\end{definition}

As we will see, a succinct representation of strategy profiles with finite horizon can be constructed in a finite amount of time.
This yields us an algorithm thanks to the following lemma: an NE can be approximated by an $\epsilon$-NE with finite memory horizon, by following the NE for some number of steps and then truncate it, to follow memoryless strategies.

\begin{restatable}[App.~\ref{app:truncation}]{lemma}{lmTruncation}\label{lm:truncation}
    Let $\bsigma$ be a Nash equilibrium in the game $\Game$.
    Let $\delta > 0$, and let $k$ such that $\lambda^k \leq \frac{\delta}{4R}$.
    Then, there exists a $\delta$-Nash equilibrium with memory horizon $kN$ where each player has an expected payoff that is at most different by $\delta/2$.
\end{restatable}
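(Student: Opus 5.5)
We construct the profile $\bsigma'$ by truncation: it follows $\bsigma$ on every history of length at most $kN$, and from each history $h$ of length $kN$ it switches to a memoryless profile $\btau^h$. By the preceding lemma, whatever happens after step $kN$ only matters with probability at most $\lambda^k \le \frac{\delta}{4R}$.

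The profile $\bsigma'$ is a profile with memory horizon $kN$. For $\btau^h$ we pick a memoryless Nash equilibrium of the game started at $\last(h)$. Such an equilibrium exists in stopping concurrent games: it is the stationary equilibrium of discounted-like (stopping) stochastic games, by Fink/Takahashi-style existence. If we prefer to avoid that citation, we can take any memoryless profile, since the accounting below does not actually need $\btau^h$ to be an equilibrium.

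The payoff bound comes next. The measures $\prob_\bsigma$ and $\prob_{\bsigma'}$ agree on all histories of length at most $kN+1$. Plays that have terminated by then get identical payoffs, and the remaining plays have total mass at most $\lambda^k$ under both measures, again by the preceding lemma. On these remaining plays the payoffs differ by at most $2R$. Hence
\[
|\Eb_i(\bsigma') - \Eb_i(\bsigma)| \le 2R\lambda^k \le \delta/2
\]
for every player $i$.

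For the $\delta$-NE property, fix a player $i$ and a deviation $\sigma''_i$. We compare $\Eb_i(\bsigma'_{-i},\sigma''_i)$ with $\Eb_i(\bsigma_{-i},\sigma''_i)$. The two profiles again coincide up to step $kN$, and the lemma bounds the probability of surviving $kN+1$ steps under every profile, including deviating ones. The same argument therefore gives a difference of at most $\delta/2$. Chaining the three estimates,
\[
\Eb_i(\bsigma'_{-i},\sigma''_i) \le \Eb_i(\bsigma_{-i},\sigma''_i) + \delta/2 \le \Eb_i(\bsigma) + \delta/2 \le \Eb_i(\bsigma') + \delta,
\]
where the middle inequality holds because $\bsigma$ is a Nash equilibrium. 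So $\bsigma'$ is a $\delta$-Nash equilibrium.

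The main subtlety is that the survival bound must hold uniformly over all profiles, deviations included. The preceding lemma gives exactly that, since it quantifies over every strategy profile. Consequently the choice of the memoryless continuation is irrelevant, and we may simply take any memoryless profile, for example one playing a fixed action everywhere.
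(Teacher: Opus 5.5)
Your proof is correct and follows the same truncation argument as the paper. Like the paper, you agree with $\bsigma$ up to step $kN$, continue with an arbitrary memoryless profile, use the uniform survival bound $\prob(V^{kN+1})\le\lambda^k$ (valid for every profile, deviations included) to compare $\Eb_i(\bsigma'_{-i},\sigma''_i)$ with $\Eb_i(\bsigma_{-i},\sigma''_i)$, and then invoke that $\bsigma$ is an NE. Your accounting is slightly more careful than the paper's: the paper asserts $\Eb_i(\bsigma')=\Eb_i(\bsigma)$ despite the arbitrary continuation, whereas you bound this difference by $2R\lambda^k\le\delta/2$ and spend that slack in the final chain, which matches what the statement actually claims.
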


In all what follows, we fix $K$ as the least integer satisfying $\lambda^K \leq \frac{\epsilon}{8R}$---such that for every NE, there a $\frac{\epsilon}{2}$-NE with memory horizon $KN$ that has the same expected payoffs.
That number grows exponentially with the instance size.

\begin{lemma}
    The number $K$ grows exponentially with the game size, and polynomially with the bit-size of $\epsilon$.
\end{lemma}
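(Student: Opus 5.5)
The plan is to solve the defining inequality for $K$ explicitly and bound the resulting expression. By minimality, $K = \left\lceil \frac{\log(8R/\epsilon)}{\log(1/\lambda)} \right\rceil$, where $\lambda = 1 - p_{\min}^N$; if $8R \le \epsilon$ then $K = 0$ and there is nothing to prove. The numerator is $\log 8 + \log R + \log(1/\epsilon)$. Here $\log R$ is at most the bit-size of the rewards, and $\log(1/\epsilon)$ is at most the bit-size of $\epsilon$. The numerator is therefore polynomial in the input size, and linear in the bit-size of $\epsilon$.

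For the denominator, I use the elementary inequality $\log\frac{1}{1-x} \ge x$ for $x \in (0,1)$, which follows from $\ln(1-x) \le -x$. Applied with $x = p_{\min}^N$, it gives $\frac{1}{\log(1/\lambda)} \le p_{\min}^{-N}$ (taking natural logarithms, or absorbing a constant factor otherwise). Hence
$$K \le 1 + p_{\min}^{-N}\bigl(\ln 8 + \ln R + \ln(1/\epsilon)\bigr).$$

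It remains to bound $p_{\min}^{-N}$. Since $p_{\min}$ is a rational given in the input, $1/p_{\min} \le 2^{b}$, where $b$ is the bit-size of its denominator. So $p_{\min}^{-N} \le 2^{bN}$, and $bN$ is polynomial in the size of the game. This factor is exponential in the game size and does not involve $\epsilon$. Multiplying it by the polynomial factor $\ln R + \ln(1/\epsilon) + O(1)$ gives a bound that is exponential in the game size and polynomial (in fact linear) in the bit-size of $\epsilon$, as claimed.

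The only step needing care is the denominator: because $\lambda$ is exponentially close to $1$, the naive $\log(1/\lambda)$ must be lower-bounded through $p_{\min}^N$ rather than estimated directly. If the statement is also read as asserting a matching lower bound, meaning that $K$ can genuinely be exponential, I would point out that $-\ln(1-x) \le 2x$ for $x \le 1/2$. This shows the estimate is tight up to constants whenever $p_{\min}^N \le 1/2$, for instance in a game with $p_{\min} = 1/2$ and $N$ large.
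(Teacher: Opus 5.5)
Your proof is correct and follows the paper's route: solve $\lambda^K \le \frac{\epsilon}{8R}$ for $K$, note that the numerator $\log(8R/\epsilon)$ is linear in the bit-sizes of $\epsilon$ and $R$, and control the denominator $\log(1/\lambda)$ through $p_{\min}^N$. Your inequality $\ln\frac{1}{1-x} \ge x$ justifies the denominator step more cleanly than the paper's asymptotic $-\log(\lambda) \sim p_{\min}^N$, because it gives directly the one-sided bound $1/\log(1/\lambda) \le p_{\min}^{-N}$ that the upper bound on $K$ needs.
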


\begin{proof}
    We have:
    $$K = \left\lceil \frac{-\log\left(\frac{\epsilon}{8R}\right)}{-\log(\lambda)} \right\rceil \leq \frac{\log\left(\frac{\epsilon}{8R}\right)}{\log(\lambda)} + 1.$$
    On the one hand, the quantity $-\log\left(\frac{\epsilon}{8R}\right)$ is linear in the bit sizes of $\epsilon$ and $R$.
    On the other hand, we have
    $-\log(\lambda) = -\log(1-p_{\min}^N) \sim p_{\min}^N = O\left(2^{\|\Game\|}\right).$
\end{proof}

    \subsection{Characteristic vectors}

The main idea of our algorithm is that although a history-dependent strategy profile may not be described with finite space, only a small amount of data is needed to decide whether it witnesses a positive instance.
That data is captured by its \emph{characteristic vector}.

\begin{definition}[Characteristic vector]
    Given a game $\Game$ and a strategy profile $\bsigma$ in $\Game$, we define the \emph{characteristic vector} of $\bsigma$ as the vector $\bchi^\bsigma \in \Rb^{\Pi \times \{\reg, \dev\}}$ where for each player $i$, we have $\chi_{i\reg} = \Eb_i(\bsigma)$, and $\chi_{i\dev} = \sup_{\sigma'_i} \Eb_i(\bsigma_{-i}, \sigma'_i)$.
\end{definition}

Note that the strategy profile $\bsigma$ is an $\epsilon$-NE if and only if we have $\chi^\bsigma_{i\dev} \leq \chi^\bsigma_{i\reg} + \epsilon$.

    \subsection{An over-approximation of finite-memory-horizon strategy profiles}

We now show how to construct, for every $k$, an over-approximation of the set of characteristic vectors of strategy profiles with memory horizon $k$.
We fix $D = 2(K+1)\cdot\frac{R}{\epsilon}$, and discretise the space $[-R, R]^{\Pi \times \{\reg, \dev\}}$ into $D^{2|\Pi|}$ cubes of the form:
$$C = \prod_{\gamma \in \Pi \times \{\reg, \dev\}} \left[R\frac{d_\gamma}{D}, R\frac{d_\gamma+1}{D}\right],$$
where each $d_\gamma$ is an integer ranging from $-D$ to $D-1$.
The set of those cubes is denoted $\Cubes$.

\begin{definition}[The sequence $(\Chi_k)_k$] \label{def:Chin}
    We define the sequence $(\Chi_k)_k$, where each $\Chi_k$ is a mapping that maps each vertex to a subset of $[-R, R]^{\Pi \times \{\reg, \dev\}}$, inductively as follows.
    First, for each vertex $v$, the set $\Chi_0(v)$ is the union of all cubes $C \in \Cubes$ that contain a vector $\bchi^\bsigma$, where $\bsigma$ is a memoryless strategy from $v$.
    Then, for each $k$, the set $\Chi_k(v)$ is the union of all cubes $C \in \Cubes$ such that there exist:
    \begin{itemize}
        \item a vector $\bchi \in C$;
    
        \item vectors $\bchi^{\ba w} \in \Chi_{k-1}(w)$ for each action profile $\ba \in \Av(v)$ and $w \in V$,
        
        \item and coefficients $\alpha_{ia} \in [0,1]$ for each player $i$ and action $a \in \Av_i(v)$,
    \end{itemize}
    such that, for each player $i$:
    \begin{itemize}
        \item we have $\sum_{a  \in \Av_i(v)} \alpha_{ia} = 1$;

        \item we have $\chi_{i\reg} = \sum_{\ba \in \Av(v)} \sum_{w \in V} \Delta(v, \ba)(w) \left(\prod_j \alpha_{ja_j} \right) \chi^{\ba w}_{i\reg}$;

        \item and we have $\chi_{i\dev} = \max_{a_i \in \Av_i(v)} \sum_{\ba_{-i} \in \Av_{-i}(v)} \sum_{w \in V} \Delta(v, \ba)(w) \left(\prod_{j \neq i} \alpha_{ja_j} \right)  \chi^{\ba w}_{i\dev}$.
    \end{itemize}
\end{definition}

Intuitively, each coefficient $\alpha_{ia}$ represents the probability that player $i$ performs action $a$ at the first step.
Then, the vector $\bchi^{\ba w}$ represents the characteristic vector of the strategy profile that is followed after performing the action profile $\ba$ and reaching the vertex $w$: the vector $\bchi$ is then the characteristic vector of a strategy profile from $v$.
Since the whole cube $C$ is added to the set $\Chi_k(v)$, and not only the vector $\bchi$, this set is an over-approximation of the set of characteristic vectors of strategy profiles with memory horizon $k$; but since the cube size was chosen small enough, the accumulated errors is controlled.

\begin{restatable}[App.~\ref{app:Chin_meaning}]{lemma}{lmChinMeaning}\label{lm:Chin_meaning}
    For every $k$ and each $v$, the set $\Chi_k(v)$ contains all vectors of the form $\bchi^\bsigma$, where $\bsigma$ is a strategy profile from $v$ with memory horizon $k$.
    Conversely, for every vector $\bchi \in \Chi_k(v)$, there exists a strategy profile $\bsigma$ from $v$ with memory horizon $k$ satisfying $\| \bchi - \bchi^\bsigma \| \leq (k+1)\frac{R}{D}$.
\end{restatable}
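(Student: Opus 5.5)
We argue by induction on $k$, proving both directions together. Throughout we use the decomposition of a strategy profile $\bsigma$ from $v$ into its first-step behaviour $\alpha_{ia} = \sigma_i(v)(a)$ and its residual profiles $\bsigma_{\restr v\ba w}$ after each action profile $\ba$ and successor $w$. The residuals of a profile with memory horizon $k$ have memory horizon $k-1$. Conversely, gluing a first-step distribution to residuals of horizon $k-1$ gives a profile of horizon $k$.

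\emph{Base case $k=0$.} A profile with memory horizon $0$ is memoryless. Its characteristic vector lies in some cube by definition of $\Chi_0(v)$, which gives the first inclusion. For the converse, any $\bchi \in \Chi_0(v)$ lies in a cube $C$ that contains some $\bchi^\bsigma$ with $\bsigma$ memoryless. The sup-norm diameter of a cube is $R/D$, so $\|\bchi - \bchi^\bsigma\| \le R/D$.

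\emph{Key one-step identity.} Before the inductive step we establish the following fact. For any profile $\bsigma$ from $v$ with first-step coefficients $\alpha$ and residuals $\bsigma^{\ba w}$, the vector $\bchi^\bsigma$ satisfies the two equations of Definition~\ref{def:Chin} with $\bchi^{\ba w} = \bchi^{\bsigma^{\ba w}}$. The $\reg$ equation is linearity of expectation over the first step. For the $\dev$ equation, a best deviation of player $i$ decomposes into a first action choice followed by best responses in each subgame. Two observations justify this. First, a deviating player may as well play a pure action at the first step, since a mixture is never better than its best component. Second, since the residual deviation can be chosen independently for each $(\ba,w)$, the supremum splits as a $\max_{a_i}$ of sums of the subgame suprema $\chi^{\ba w}_{i\dev}$. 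Here we use that player $i$'s deviation after $v\ba w$ faces $\bsigma^{\ba w}_{-i}$, which is exactly the quantity defining $\chi^{\ba w}_{i\dev}$. The weights factor as $\prod_{j\ne i}\alpha_{ja_j}$ because the players randomise independently. Given this identity, the first claim follows directly: by induction, $\bchi^{\bsigma^{\ba w}} \in \Chi_{k-1}(w)$, and then $\bchi^\bsigma$ itself witnesses that its cube is included in $\Chi_k(v)$.

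\emph{Converse, inductive step.} Take $\bchi \in \Chi_k(v)$. It lies in a cube $C$ with a witness $\bchi' \in C$, coefficients $\alpha$, and vectors $\bchi^{\ba w} \in \Chi_{k-1}(w)$ satisfying the equations; hence $\|\bchi - \bchi'\| \le R/D$. By induction, for each $(\ba,w)$ there is a profile $\bsigma^{\ba w}$ from $w$ with horizon $k-1$ and $\|\bchi^{\ba w} - \bchi^{\bsigma^{\ba w}}\| \le kR/D$. We glue these residuals to the first step $\alpha$ to obtain $\bsigma$ of horizon $k$. By the one-step identity, $\bchi^\bsigma$ is given by the same formulas applied to the $\bchi^{\bsigma^{\ba w}}$. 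Both formulas are 1-Lipschitz in sup-norm in the inner vectors, since each is a convex combination, or a max of sub-convex combinations (the weights sum to $1$ for each $a_i$). Hence $\|\bchi' - \bchi^\bsigma\| \le kR/D$, and by the triangle inequality $\|\bchi - \bchi^\bsigma\| \le (k+1)R/D$.

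\emph{Expected obstacle.} The main obstacle is the rigorous justification of the $\dev$ decomposition, in particular exchanging the supremum over history-dependent deviations with the first-step max and sum. I would handle it with the two bounds. For the upper bound, condition any deviation on its first action and subgame. For the lower bound, take $\eta$-optimal deviations in each of the finitely many subgames, then let $\eta \to 0$.
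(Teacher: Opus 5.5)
Your proposal is correct and follows the paper's proof essentially step for step. Both argue by induction on $k$, handle the base case via the sup-norm cube diameter $R/D$, prove the inclusion by splitting a profile into first-step coefficients and horizon-$(k-1)$ residuals, and prove the converse by gluing the inductively obtained residuals and applying the triangle inequality. The one difference is that you actually justify the $\dev$ identity (a pure first action suffices, residual deviations are chosen independently for each $(\ba,w)$, and $\eta$-optimal responses give the lower bound) and the 1-Lipschitz property of the max-of-convex-combinations formula, whereas the paper only calls these steps ``analogous'' to the $\reg$ case.
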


Finally, let us show that those sets can be computed in exponential time.

\begin{restatable}[App.~\ref{app:computing_Chin}]{lemma}{lmComputingChin}\label{lm:computing_Chin}
    Given the game $\Game$, a vertex $v$ and a number $k$, the set $\Chi_k(v)$ can be computed in exponential time.
\end{restatable}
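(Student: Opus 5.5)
We compute the sets $\Chi_0(v), \Chi_1(v), \dots, \Chi_k(v)$ for all vertices simultaneously, by induction on $k$. Each $\Chi_j(v)$ is a union of cubes from $\Cubes$, so we store it as the list of the integer indices $(d_\gamma)_\gamma$ of its cubes. There are $D^{2|\Pi|}$ cubes. Since $D = 2(K+1)R/\epsilon$ and $K$ is exponential in the game size and polynomial in the bit-size of $\epsilon$ (previous lemma), $D$ is exponential in the input, and so is $D^{2|\Pi|}$. Hence every $\Chi_j(v)$ is an exponential-size object. The whole computation reduces to one test per cube $C$, per vertex $v$ and per stage $j$: is $C$ included in $\Chi_j(v)$? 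There are $k+1$ stages. (When $k$ exceeds $KN$ we may stop early, or simply note that the paper only needs $k \le KN$, which is exponential.) The total cost is therefore exponential, provided each test takes exponential time.

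For the base case $\Chi_0(v)$, the test asks whether some memoryless profile $\bsigma$ from $v$ has $\bchi^\bsigma \in C$. We express this in the existential theory of the reals. The variables are the probabilities $\sigma_i(w)(a)$ for each vertex $w$. We add variables $x_{i,w}$ for the expected payoffs, together with the linear system $x_{i,w} = \sum_{\ba} \prod_j \sigma_j(w)(a_j) \sum_{w'} \Delta(w,\ba)(w') x_{i,w'}$, with $x_{i,t} = \mu_i(t)$ on terminals. This system has a unique solution because the game is stopping. For the deviation values, the best response of player $i$ against $\bsigma_{-i}$ is the value of a stopping MDP. We characterise it by variables $y_{i,w}$ satisfying the Bellman optimality equations $y_{i,w} = \max_{a_i} \sum_{\ba_{-i}} \dots$, which again have a unique solution by the stopping assumption. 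The $\max$ is encoded with a disjunction. Finally we add the linear constraints that $(x_{i,v}, y_{i,v})_i$ lies in $C$.

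This formula has size polynomial in the game (times the size of the transition tables) and a fixed number of quantifier-free atoms. Existential theory of the reals is decidable in \PSPACE, hence in exponential time, and we run it once per cube.

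The inductive step has the same shape. Given $\Chi_{j-1}$ as explicit unions of cubes, we guess for each pair $(\ba, w)$ a cube $C^{\ba w} \subseteq \Chi_{j-1}(w)$ containing $\bchi^{\ba w}$. Then we write an existential-theory-of-the-reals formula in the variables $\alpha_{ia}$, $\bchi$ and $\bchi^{\ba w}$. It contains linear box constraints for membership in $C$ and in each $C^{\ba w}$, and the polynomial equalities of Definition~\ref{def:Chin}, with the $\max$ again written as a disjunction.

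The main obstacle I expect is this inductive step, because naively enumerating one cube per pair $(\ba, w)$ gives doubly exponentially many combinations. To avoid this, I would not enumerate. Instead I put a single disjunction into the formula: $\bigvee_{C' \subseteq \Chi_{j-1}(w)} (\bchi^{\ba w} \in C')$, one for each $(\ba, w)$. This is a formula of exponential size with a polynomial number of variables, namely $|\Pi|\max|A_i| + |\Av(v)||V|\cdot 2|\Pi| + 2|\Pi|$. Deciding the existential theory of the reals takes time polynomial in the formula size and exponential only in the number of variables (Renegar / Basu–Pollack–Roy). Each test therefore runs in exponential time.

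Summing over exponentially many cubes, over the vertices, and over the stages up to $k$ gives the exponential bound claimed in the lemma.
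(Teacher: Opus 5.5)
Your proof is correct and has the same skeleton as the paper's: compute $\Chi_0,\dots,\Chi_k$ stage by stage, with one existential-theory-of-the-reals test per candidate cube. The difference is in how you organise the inductive step.

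\textbf{The paper's inductive step.} For each candidate cube $C$, the paper enumerates every tuple $(C_{\ba w})_{\ba,w}$ of cubes with $C_{\ba w}\subseteq\Chi_{\ell-1}(w)$. It tests each tuple with a polynomial-size sentence, so it only needs Canny's $\PSPACE$ bound.

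\textbf{Your inductive step.} You fold the choice of successor cubes into a single formula per $C$, as disjunctions over the cubes of $\Chi_{j-1}(w)$. This forces you to use the sharper Renegar/Basu--Pollack--Roy bound $(sd)^{O(n)}$, for $s$ atoms of degree $d$ in $n$ variables. Strictly speaking, that bound is not ``polynomial in the formula size'' once $n$ grows. Still, with $s=2^{\mathrm{poly}}$ and $n,d$ polynomial it is $2^{\mathrm{poly}}$, so your conclusion stands.

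\textbf{The obstacle you avoid is not real.} Naive enumeration is not doubly exponential. The number of tuples is at most $|\Cubes|^{1+|\Av(v)|\cdot|V|}$. Here $\log|\Cubes|$ is polynomial, as you yourself observe when bounding $D^{2|\Pi|}$. Also, $|\Av(v)|$ is bounded by the size of the explicitly given transition table. So this count is again $2^{\mathrm{poly}}$. The paper's route is therefore simpler and relies on a weaker decision procedure; yours trades the enumeration for one larger formula. Both give exponential time.

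\textbf{The base case.} Your write-up handles $\Chi_0(v)$ properly, which the paper's proof never explains. You guess a memoryless profile and use the absorbing-chain equations for $\chi_{i\reg}$ and the Bellman optimality equations for $\chi_{i\dev}$. Both systems have a unique solution because the game is stopping, so this settles the base case.
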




\subsection{Algorithm}

We can now use the results proven above to present an algorithm.

\begin{theorem}
    The approximate constrained existence problem of NEs in stopping games is decidable in time exponential in the game size, and polynomial in the bit-size of $\epsilon$.
\end{theorem}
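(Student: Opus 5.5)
The algorithm computes $\Chi_{KN}(v_0)$ using \cref{lm:computing_Chin}. It then answers \emph{yes} if and only if $\Chi_{KN}(v_0)$ contains a vector $\bchi$ satisfying, for every player $i$,
$$x_i - \tfrac{\epsilon}{4} \leq \chi_{i\reg} \leq y_i + \tfrac{\epsilon}{4} \quad\text{and}\quad \chi_{i\dev} \leq \chi_{i\reg} + \tfrac{3\epsilon}{4}.$$
Since $\Chi_{KN}(v_0)$ is a union of cubes from $\Cubes$, this check is a finite test over cubes. For each cube we test whether some point satisfies the linear constraints above, which is a small LP, or simply check the corners. It costs time polynomial in $|\Cubes| = (2D)^{2|\Pi|}$. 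The exact constants in the test will be tuned so that they match the error bounds below.

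\emph{Completeness.} Suppose a Nash equilibrium $\bsigma$ exists with $x_i \leq \Eb_i(\bsigma) \leq y_i$. By \cref{lm:truncation} applied with $\delta = \epsilon/2$, and using the choice of $K$ ($\lambda^K \leq \epsilon/(8R)$), there is an $\frac{\epsilon}{2}$-NE $\bsigma'$ with memory horizon $KN$ whose payoffs differ from those of $\bsigma$ by at most $\epsilon/4$. By the first half of \cref{lm:Chin_meaning}, $\bchi^{\bsigma'} \in \Chi_{KN}(v_0)$. This vector satisfies the test, so the algorithm answers \emph{yes}.

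\emph{Soundness.} Suppose the algorithm finds some $\bchi \in \Chi_{KN}(v_0)$. The second half of \cref{lm:Chin_meaning} gives a profile $\bsigma$ with memory horizon $KN$ and $\|\bchi - \bchi^\bsigma\| \leq (KN+1)R/D$.

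This is the step I expect to be the main obstacle. With $D = 2(K+1)R/\epsilon$, that error is $(KN+1)\epsilon/(2(K+1))$, which is not small because of the extra factor $N$. I would therefore first reconcile the indices: either run the recursion of \cref{def:Chin} for $K$ rounds where each round accounts for $N$ steps, or simply enlarge $D$ to $2(KN+1)\cdot 4R/\epsilon$. Enlarging $D$ keeps it exponential in the game size and polynomial in the bit-size of $\epsilon$, so only constants in the analysis change.

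With the error bound by $\epsilon/8$, the guarantees follow. We get $\chi^\bsigma_{i\dev} \leq \chi^\bsigma_{i\reg} + \frac{3\epsilon}{4} + \frac{\epsilon}{4} \leq \chi^\bsigma_{i\reg} + \epsilon$, so $\bsigma$ is an $\epsilon$-NE. Its payoffs lie in $[x_i-\epsilon, y_i+\epsilon]$. Hence the instance is not in the second case of \cref{pb:ne}, and by the promise it is in the first case, so \emph{yes} is correct.

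\emph{Complexity.} $K$ is exponential in the game size and polynomial in the bit-size of $\epsilon$ by the lemma on $K$. $D$ is polynomial in $K$, $R$ and $1/\epsilon$, and $|\Cubes|$ is exponential in $|\Pi|$ and the bit-sizes. \Cref{lm:computing_Chin} computes $\Chi_{KN}(v_0)$ in exponential time, via $KN$ iterations each exponential. The final test is polynomial in $|\Cubes|$.

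A remaining concern is the dependence on the bit-size of $\epsilon$. Both $1/\epsilon$ in $D$ and the factor $K$ enter the number of cubes, which is $D^{2|\Pi|}$. This gives $\mathrm{poly}(1/\epsilon)^{|\Pi|}$, which is polynomial in the bit-size of $\epsilon$ only in the sense permitted by the statement. I would verify this accounting carefully against the bound of \cref{lm:computing_Chin}.
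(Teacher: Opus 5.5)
Your proposal follows the paper's proof. It uses the same over-approximations $\Chi_k$ and a relaxed acceptance test on the cubes of $\Chi_k(v_0)$. When an NE exists, it combines \cref{lm:truncation} with $\delta=\epsilon/2$ and the first half of \cref{lm:Chin_meaning}; when the test succeeds, it combines the converse half of \cref{lm:Chin_meaning} with the promise. Testing only $\Chi_{KN}(v_0)$ instead of every $k\le KN$ is harmless, since a profile with memory horizon $k$ also has memory horizon $k'$ for all $k'\ge k$.

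Your index check is correct and repairs a slip in the paper. Its proof writes $(k+1)\frac{R}{D}\le(K+1)\frac{R}{D}$ although $k$ ranges up to $KN$. Enlarging $D$ to $8(KN+1)R/\epsilon$ is legitimate: \cref{lm:Chin_meaning,lm:computing_Chin} are parametric in $D$, and $D$ stays exponential. Your asymmetric thresholds ($\epsilon/4$ on payoffs, $3\epsilon/4$ on the deviation gain) also close the soundness chain. The paper's uniform $\epsilon/2$ only yields $\chi^\bsigma_{i\dev}\le\chi_{i\reg}+\epsilon$, not $\chi^\bsigma_{i\dev}\le\chi^\bsigma_{i\reg}+\epsilon$.

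The genuine gap is the claimed dependence on $\epsilon$, and your closing worry is justified. The sentence \emph{polynomial in the bit-size of $\epsilon$ only in the sense permitted by the statement} is not an argument, and the bound does not follow from this method.

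Since $D=\Theta(KNR/\epsilon)$, the set $\Chi_{KN}(v_0)$ can consist of at least $D$ cubes. For instance, suppose some player can randomise at $v_0$ between terminals paying her $0$ and $R$. Then the achievable values of $\chi_{i\reg}$ fill an interval of length $R$. Even listing the set then takes time $\Omega(1/\epsilon)$, which is exponential in the bit-size of $\epsilon$. Moreover, the procedure of \cref{lm:computing_Chin} enumerates one cube for each pair $(\ba,w)$, i.e.\ $|\Cubes|^{O(|\Av(v)|\cdot|V|)}$ tuples.

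What your argument actually establishes is a running time of the form $2^{\mathrm{poly}(\|\Game\|)}\cdot(1/\epsilon)^{\mathrm{poly}(\|\Game\|)}$. This is exponential in the full input, hence membership in $\EXPTIME$, and polynomial in $1/\epsilon$ for a fixed game. It is not polynomial in the bit-size of $\epsilon$.

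The paper's complexity paragraph has the same blind spot. It attributes the $\epsilon$-dependence only to the number $K$ of iterations and counts cubes as exponential in $\|\Game\|$ alone. Yet its own proof of \cref{lm:computing_Chin} concedes that $D$ is exponential in the instance size. Genuinely polynomial dependence on $\log(1/\epsilon)$ would need a representation of the sets of characteristic vectors other than a uniform grid of mesh $\Theta(\epsilon)$, and neither proof supplies one. You should either state the weaker bound or keep that part explicitly flagged as unproven.
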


\begin{proof}
    We first present the algorithm, then prove its correctness, and later on its complexity.

    \subparagraph*{The algorithm.}
    For $k$ ranging from $0$ to $KN$, compute all sets $\Chi_k(v)$.
    If, for some $k$, there is a vector $\bchi \in \Chi_k(v_0)$ such that $x_i - \frac{\epsilon}{2} \leq \chi_{i\reg} \leq y_i + \frac{\epsilon}{2}$ and $\chi_{i\dev} \leq \chi_{i\reg} + \frac{\epsilon}{2}$ for each player $i$, then stop the algorithm and return \emph{yes}.
    Otherwise, return \emph{no}.

    \subparagraph*{If the algorithm returns \emph{yes}, then we have a positive instance.}
    Assume it returns \emph{yes} at step $k$.
    Let $\bchi \in \Chi_k(v_0)$ be the vector detected.
    By \Cref{lm:Chin_meaning}, there is a strategy profile $\bsigma$ from $v_0$ such that $\left\| \bchi - \bchi^\bsigma \right\| \leq (k+1)\frac{R}{D} \leq (K+1)\frac{R}{D} = \frac{\epsilon}{2}$, by definition of $D$.
    Then, for each player $i$, we have:
    $$\chi^\bsigma_{i\reg} \leq \chi_{i\reg} + \frac{\epsilon}{2} \leq y_i + \frac{\epsilon}{2} + \frac{\epsilon}{2} \leq y_i + \epsilon$$
    by the stopping criterion; and similarly, we have $\chi^\bsigma_{i \reg} \geq x_i - \epsilon$, and:
    $$\chi^\bsigma_{i\dev} \leq \chi_{i\dev} + \frac{\epsilon}{2} \leq \chi_{i\reg} + \frac{\epsilon}{2} + \frac{\epsilon}{2} = \chi_{i\reg} + \epsilon.$$
    This shows that the strategy profile $\bsigma$ is an $\epsilon$-NE with $x_i-\epsilon \leq \Eb_i(\bsigma) \leq y_i+\epsilon$ for each player $i$, and by the guarantee given on the problem instances, that we have a positive instance.

    \subparagraph*{If the algorithm returns \emph{no}, then we have a negative instance.}

    If the algorithm returns \emph{no}, then the set $\Chi_{KN}(v_0)$ contains no vector satisfying the stopping criterion.
    By \Cref{lm:Chin_meaning} again, this means there is no $\frac{\epsilon}{2}$-NE with memory horizon $KN$ satisfying $x_i-\frac{\epsilon}{2} \leq \Eb_i(\bsigma) \leq y_i+\frac{\epsilon}{2}$ for each player $i$.
    Now, by \Cref{lm:truncation}, 
    applied with $\delta = \frac{\epsilon}{2}$, any Nash equilibrium satisfying $x_i \le \Eb_i(\bar{\sigma}) \le y_i$ for each player $i$ would yield an $\frac{\epsilon}{2}$-NE with memory horizon $KN$ whose expected payoffs lie in $[x_i - \frac{\varepsilon}{4},\, y_i + \frac{\varepsilon}{4}]
\subseteq
[x_i - \frac{\varepsilon}{2},\, y_i + \frac{\varepsilon}{2}]$: there
is therefore no such Nash equilibrium, that is, we have a negative
instance.

    \subparagraph*{Complexity.}
    The maximal number of steps of this algorithm is $K$, which is exponential in the size of the game and linear in the bit-size of $\epsilon$.

    At each step, by \Cref{lm:computing_Chin}, the computation of the sets $\Chi_k(v)$ takes exponential time in the size of $\Game$.
    Checking the stopping criterion amounts to checking, for each cube $C \subseteq \Chi_k(v)$ whether it intersects the polytope of vectors satisfying the stopping criterion: the can be done in a time polynomial in the size of the systems of inequalities representing those polytopes, i.e. in time polynomial in the instance size.
    Since the set $\Chi_k$ may contain exponentially (in the size of $\Game$) many cubes, checking the stopping criterion takes time exponential in the size of $\Game$ and polynomial in the size of $\epsilon$.
    Overall, this algorithm shows the desired complexity.
\end{proof}

\subsection{The pure case}

We quickly turn our attention to a variant of \Cref{pb:ne}, the constrained existence problem of \emph{pure NEs}, where the players are restricted to pure strategies.
The algorithm described above can be adapted to that version.

\begin{theorem}
    The approximate constrained existence problem of pure NEs in stopping games is decidable in time exponential in the game size, and polynomial in the bit-size of $\epsilon$.
\end{theorem}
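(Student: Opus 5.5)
The plan is to replay the proof of the mixed case, replacing randomised objects by pure ones at each stage. Concretely, I would introduce a variant $(\Chi^{\mathsf{p}}_k)_k$ of \Cref{def:Chin}, with three changes.
\begin{itemize}
  \item The base set $\Chi^{\mathsf{p}}_0(v)$ is the union of the cubes containing a characteristic vector $\bchi^\bsigma$, where $\bsigma$ ranges over the (finitely many) \emph{positional} profiles from $v$.
  \item The coefficients $\alpha_{ia}$ are restricted to $\{0,1\}$, so each player picks a single action $a_i$ at the first step.
  \item Since a pure profile only ever plays one action profile $\ba$ at each history, we only require vectors $\bchi^{\ba w} \in \Chi^{\mathsf{p}}_{k-1}(w)$ for the chosen profile $\ba$ and for the unilateral deviations $(\ba_{-i}, a'_i)$.
\end{itemize}
Deviating players may still use arbitrary strategies. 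However, the deviation value $\chi_{i\dev}$ is a one-player MDP optimum, so pure deviations suffice and the $\max_{a_i}$ formula remains correct.

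The first step is an analogue of \Cref{lm:truncation} for pure NEs: truncating a pure NE $\bsigma$ after $kN$ steps must yield a pure $\delta$-NE with memory horizon $kN$. In the appendix proof, the continuation after each history $h$ of length $kN$ is replaced by a memoryless profile. I would choose this profile positional, for instance a profile of positional punishing strategies against each potential deviator, or simply any positional profile. The error bound comes only from $\prob(V^{kN+1}) \le \lambda^k$, which holds for every profile, so the same estimate goes through. I would check that the appendix argument never needs the continuation to be randomised. If it does, for example if it uses optimal punishment in concurrent games, then that step is the main obstacle. Optimal punishment by a coalition in a concurrent game may require randomisation, so pure punishers may hold the deviator only to a weaker pure min-max value.

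To get around this, I would avoid choosing the continuation as a new punishing profile. Instead, I would keep the original pure NE's own behaviour up to horizon $kN$ and then switch to an arbitrary positional profile. A deviator can change payoffs only on histories of length at least $kN$, which carry probability at most $\lambda^k$. Any after-horizon change therefore costs at most $2R\lambda^k \le \delta/2$, for both the regular payoff and the deviation payoff. This needs no punishment argument at all, and I would check that it matches the constants of \Cref{lm:truncation}.

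The second step is the analogue of \Cref{lm:Chin_meaning}, proved by the same induction on $k$. The soundness direction is the same with pure choices. For completeness, I would build a pure profile by following the chosen actions $\ba$ and gluing the pure profiles obtained inductively for each continuation vertex $w$; the error accumulates by $R/D$ per level, as before.

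The third step is computability in exponential time, as in \Cref{lm:computing_Chin}. This step is in fact easier than in the mixed case. There are finitely many positional profiles, at most exponentially many, so $\Chi^{\mathsf{p}}_0$ is computed by solving Markov chains and MDPs. At the inductive step, we enumerate the action profiles $\ba$ and, for each relevant pair $(\ba,w)$, a cube of $\Chi^{\mathsf{p}}_{k-1}(w)$. The resulting constraints are linear, since the product of the $\alpha$'s disappears. I would organise the enumeration per player (choose the cubes for the deviations of player $i$ independently of the other players) to keep the count exponential rather than doubly exponential.

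Finally, the algorithm, its correctness and its complexity analysis are verbatim those of the previous theorem, with $\Chi^{\mathsf{p}}_k$ in place of $\Chi_k$.
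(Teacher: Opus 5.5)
Your proposal is correct and follows the paper's proof. The paper simply restricts the coefficients $\alpha_{ia}$ of \Cref{def:Chin} to $\{0,1\}$ and declares the truncation, over-approximation and computability arguments analogous. Your additional care makes explicit what that one-line proof leaves implicit: positional profiles in $\Chi_0$ and after the horizon (needed so that the converse direction of \Cref{lm:Chin_meaning} yields pure witnesses), and pure deviations sufficing for $\chi_{i\dev}$.

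One caveat is inherited from the paper rather than introduced by you. The enumeration is already singly exponential without your per-player reorganisation. However, the algorithm ranges over $|\Cubes| = (2D)^{2|\Pi|}$ cubes with $D \geq 2R/\epsilon$, so its running time grows at least like $1/\epsilon$. Neither argument therefore actually delivers a dependence polynomial in the bit-size of $\epsilon$.
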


\begin{proof}
    The algorithm described above applies also to pure NEs.
    The only necessary modification is in the definition of the sets $\Chi_k(v)$, where the coefficients $\alpha_{ia}$ must all be equal to either $0$ or $1$.
    With that slight change, all proofs are analogous.
\end{proof}

\subsection{Hardness}

We close this subsection with a lower bound on the complexity of this problem.

\begin{restatable}[App.~\ref{app:pspace_hardness}]{theorem}{thmPSPACEHardness}\label{thm:pspace_hardness}
    The approximate constrained existence problem of NEs in stopping games is $\PSPACE$-hard, even in turn-based stopping games.
    So is the approximated constrained existence problem of pure NEs.
\end{restatable}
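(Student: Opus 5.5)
I will reduce from a standard \PSPACE-complete problem that already lives naturally in turn-based game graphs: the truth of a quantified Boolean formula $\Phi = \exists x_1 \forall x_2 \cdots Q_n x_n\, \psi$, with $\psi$ in CNF. The reduction builds a turn-based, acyclic (hence trivially stopping) game with two ``main'' players, Eve and Adam. Eve owns the vertices choosing existential variables and Adam owns those choosing universal variables, arranged in a chain $u_1, \dots, u_n$ where each $u_j$ has two successors $x_j$ and $\neg x_j$ that then lead to $u_{j+1}$. After the assignment phase, Adam picks a clause $C$ and then Eve picks a literal $\ell$ of $C$. The memory of the history records the assignment, and the game then checks whether $\ell$ was set true. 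Since the game must be finite and stopping while still letting the outcome depend on the past, I will not encode the assignment in the graph. Instead, the payoffs must depend only on terminal vertices, so consistency has to be enforced through equilibrium incentives. I will use one auxiliary player per literal (or per variable) who can ``challenge'' an inconsistent choice, in the style of the classical \PSPACE-hardness proofs for pure NE constrained existence in games with memoryful strategies.

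Concretely, at the literal-choice vertex Eve moves to a vertex owned by the variable-player $p_{x}$. That player may either accept, leading to a terminal in which Eve gets $1$, or ``punish''. The payoffs are designed so that $p_x$ prefers to punish exactly when the literal is false under the assignment played earlier. I will realise this by giving $p_x$ a payoff that depends on an earlier branch: when $\neg x$ was chosen at $u_j$, the play also passes through a side gadget that gives $p_x$ a stake. The constraint vector will require Eve to get payoff $1$ and Adam $0$. Correctness then takes two directions. First, if $\Phi$ is true, Eve's winning Skolem strategy, combined with acceptance by the variable-players on true literals and punishment on the off-path threats, forms a pure NE with exactly the required payoffs. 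Second, if $\Phi$ is false, then in any $\epsilon$-NE Adam can deviate to a falsifying universal choice and falsified clause. This drives Eve's payoff to $0$, or else forces a variable-player to forgo a gain of at least some fixed constant.

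The main obstacle is robustness to $\epsilon$. The hard direction must exclude $\epsilon$-NEs whose payoffs are only $\epsilon$-close to the constraints, and it must also handle mixed strategies, since the statement covers randomised NEs. I will fix $\epsilon$ to a constant such as $1/4$ and make every payoff gap in the construction at least $1$. Then a deviation gaining at least $1$ still violates $\epsilon$-stability, and expected payoffs within $\epsilon$ of $0$/$1$ force the outcome to hold with probability at least $1 - O(\epsilon)$. The delicate step is arguing that probabilistic mixing by Eve over assignments cannot average her way to payoff $1-\epsilon$ when $\Phi$ is false. For this I will use the acyclic structure and backward induction over the quantifier chain. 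At each universal vertex, Adam's $\epsilon$-best response to a distribution over Eve's continuations yields an $\epsilon$-approximate version of the false branch. By choosing gaps of size $2^{n}$ rather than $1$, or by rescaling the payoffs, the errors accumulate only to $n\epsilon$ within the bound. The pure-NE variant then follows immediately, since the witness built in the positive direction is pure and the negative direction excludes all $\epsilon$-NEs, pure ones included.
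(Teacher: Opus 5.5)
There is a genuine gap in who resolves the universal quantifiers. When Adam, a strategic player, picks the universal variables and the clause, ``for all'' is not enforced in a Nash equilibrium. A player's behaviour on histories of probability $0$ does not affect his expected payoff, so it is unconstrained by the equilibrium condition. In an equilibrium Adam commits to some branch, and every other universal branch is off-path. His deviation to a falsifying branch pays off only if some variable player then punishes Eve. That punishment happens on a history of probability $0$ under the profile, so the profile may simply prescribe acceptance there.

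Your dichotomy ``Eve's payoff drops, or a variable player forgoes a gain'' therefore fails: the forgone gain lies off-path and does not violate that player's $\epsilon$-NE condition. For instance, take the false formula $\exists x_1 \forall x_2\,(x_2)$. Suppose Adam sets $x_2$ true on-path, Eve claims $x_2$, and nobody punishes after any deviation by Adam. Then Adam has nothing to gain by deviating, yet Eve gets $1$ and Adam $0$. The backward induction via ``Adam's $\epsilon$-best response'' that you sketch is zero-sum or subgame-perfect reasoning, not Nash reasoning. The paper avoids this by making the universal vertices $x_k?$ and the clause vertex $\phi$ \emph{stochastic} (uniform). Every valuation of the universal variables and every clause then carries probability at least of order $2^{-n}/m$, so incentive violations there cannot be hidden off-path.

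The consistency check is also not implementable as you describe. In a terminal-reward game, $p_x$'s preference between ``accept'' and ``punish'' at a given vertex is fixed by two terminal payoffs, so it cannot depend on an earlier branch. A side gadget that the play merely passes through changes nothing unless it contains a choice leading to a terminal. The paper puts that choice on the assignment path:
\begin{itemize}
\item player $L$ owns the literal vertex $L$, which has an exit to $t_\bot$ (Eve gets $0$, all others $1$);
\item player $\bar L$ gets $0$ at $t_L$.
\end{itemize}
Vertex $\bar L$ is visited exactly when $L$ is false. So if Eve later claims a false $L$ with non-negligible probability, player $\bar L$ gains by exiting.

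The falsifying branch has probability only of order $2^{-n}/m$, so the deviation gain is of that order too. With unit payoff gaps, $\epsilon$ must therefore be exponentially small. The paper takes $\epsilon = 1/(3\cdot 2^n m+1)$, which still has polynomial bit-size. A constant $\epsilon$ such as $1/4$ does not work unless payoffs are rescaled by a factor of order $2^n m$. Your remark on the pure case is correct once the reduction is repaired.
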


%% file: 4xe.tex
\subsection{Definitions and problem} 

We start this section by recalling the definition of extreme risk measures precisely, along with the definition of extreme risk-sensitive equilibria.
We assume the player set $\Pi$ is partitioned into two sets: $P$ (pessimists) and $O$ (optimists). 
    \subp{Pessimistic Risk Measure (PM)} For $i \in P$, player $i$ maximises the lowest payoff achieved with positive probability, i.e., the
 quantity    $\mathbb{PM}_i(\bsigma) = \sup \{ x \in \mathbb{R} \mid \mathbb{P}(\mu_i \ge x) = 1 \}. $
    In stopping games, this is exactly $\min \{ \mu_i(\pi) \mid \pi \text{ is a play with } \prob_\bsigma(\pi) > 0  \}$.
    
    \subp{Optimistic Risk Measure (OM)} For $i \in O$, player $i$ maximises the highest payoff achieved with positive probability:
    $$ \mathbb{OM}_i(\bsigma) = \inf \{ x \in \mathbb{R} \mid \mathbb{P}(\mu_i \le x) = 1 \} = \max \{ \mu_i(\pi) \mid \pi \text{ is a play with } \prob_\bsigma(\pi) > 0 \}.$$

We group these under the   notation $\mathbb{X}_i(\bsigma)$, referring to PM when $i \in P$ and OM if $i \in O$. 
We also write $\X(\bsigma) = (\X_i(\bsigma))_{i \in \Pi}$.

\begin{definition}[Extreme risk-sensitive equilibria (XRSE)] 
    Given a concurrent game $\mathcal{G}$ and a partition $(P, O)$, a strategy profile $\bsigma$ is an \emph{extreme risk-sensitive equilibrium (XRSE)} if  for every player $i \in \Pi$ and every alternative strategy $\sigma_i'$, we have
$\mathbb{X}_i(\bsigma_{-i}, \sigma_i') \le \mathbb{X}_i(\bsigma).$
\end{definition}

We focus on the following decision problem: 

\begin{problem}[Constrained existence problem of XRSEs]
    Given a game $\Game$, a partition $(P,O)$ of the set of players into pessimists and optimists, and two payoff profiles $\bx, \by \in \Qb^\Pi$, does there exist an XRSE $\bsigma$ in $\Game$ satisfying $x_i \leq \X_i(\bsigma) \leq y_i$ for each player $i$?
\end{problem}

We show that in concurrent stopping games, the constrained existence problem of XRSEs is in $\NP$.
The proof goes by showing that if an XRSE exists, there also exist an \emph{equivalent} XRSE, meaning that every player has the same extreme risk measure, that uses (polynomial) \emph{finite memory}.
Let us therefore define that notion.


\begin{definition}[Memory structure and memory states]
A \emph{memory structure} for a game $\mathcal{G}$ is a tuple $\mathcal{M} = (M, m_0, \alpha_{\mathsf{up}}, \bsigma^{\mathsf{choice}})$ consisting of:

\begin{itemize}
    \item a finite set $M$ of \emph{memory states}; 
    \item an \emph{initial} memory state $m_0 \in M$;
    \item a \emph{memory update function} $\alpha_{\mathsf{up}}: M \times V \times A \times V \rightarrow M$ updates the memory state based on the \emph{edge} that is taken---i.e., the source vertex, the action profile that is performed, and the new vertex reached;
    \item a tuple $\bsigma^\mathsf{choice} = (\sigma_i^\mathsf{choice})_{i\in \Pi}$ of \emph{choice functions} 
    $\sigma_i^{\mathsf{choice}}: M \times V \rightarrow \Dist(A_i)$ that output, on each vertex and in each state, the action distribution that is prescribed to player $i$.
\end{itemize}
\end{definition}

Given a memory structure $\mathcal{M}$, we extend the update function into a function $\hat{\alpha}_{\mathsf{up}}: \Hist \rightarrow M$ where $\hat{\alpha}_{\mathsf{up}}(v_0) = m_0$ and $\hat{\alpha}_{\mathsf{up}}(h \cdot \ba \cdot v) = \alpha_{\mathsf{up}}(\hat{\alpha}_{\mathsf{up}}(h), \last(h), \ba, v)$.
Then, a memory structure defines a strategy profile $\bsigma$, with $\sigma_i: h \mapsto \sigma^{\mathsf{choice}}_i(\hat{\alpha}(h))$ for each player $i$.
Note the order: when an edge is taken, the memory is updated first, and then the choice function is applied, based on the new memory state.
A strategy profile $\bsigma$ is \emph{finite-memory} if it is defined by a memory structure.





\subsection{An example}






The core of our proof is the sufficiency of polynomial finite-memory XRSEs; but the proof of that statement is significantly more involved than in the turn-based case.
Let us first use an example to give intuition on how and when memory is needed.


\begin{example} \label{ex:anchoring}
Consider the following game, with two pessimistic players $\Box$ and $\Circle$, and six vertices $\{v,v_L, v_R, t_\bot, t_\top, t_L,t_R \}$, where the terminal vertices are $t_\bot, t_\top, t_L,$ and  $t_R$, and $v$ is the starting vertex. The actions available at the start vertex are: $\{H, T, P\}$ (\emph{heads}, \emph{tails}, and \emph{punish}). 
The other vertices $v_L,v_R$ are such that each player has only one action, and therefore represented as stochastic vertices in \cref{fig:example}(a). 

In this game, is there an XRSE $\bsigma$ such that $\X_\circ(\bsigma) = \X_\square(\bsigma) = 1$?
The answer is yes, and we describe one such XRSE below. 

\begin{figure}
  \centering
  \resizebox{\linewidth}{!}{%
  \begin{tikzpicture}[node distance=1.5cm and 1.5cm,
      elab/.style={font=\scriptsize, inner sep=1pt},
      memlab/.style={font=\scriptsize, inner sep=1pt},
      term/.style={font=\small, inner sep=1pt}]
    \begin{scope}
      \node[circlev, initial above] (v0) {$v$};
      \node[above right=of v0] (ttop) {$\pay{2}{2} : t_\top$};
      \node[above left=of v0]  (tbot) {$t_\bot:~\pay{0}{0}$};
      \node[stoch, below right=of v0] (vR) {$v_R$};
      \node[stoch, below left=of v0]  (vL) {$v_L$};
      \node[right=0.7cm of vR] (tR) {$t_R:~\pay{1}{2}$};
      \node[left=0.7cm of vL]  (tL) {$t_L:~\pay{2}{1}$};
      \draw[->] (v0) -- (ttop) node[midway, right=1pt, elab] {$\act{H}{T},\act{T}{H}$};
      \draw[->] (v0) -- (tbot) node[midway, left=2pt, elab]  {$\act{P}{*},\act{*}{P}$};
      \draw[->] (v0) to[bend right=15] node[midway, above left=-3pt, elab]  {$\act{T}{T}$} (vL);
      \draw[->] (v0) to[bend left=15]  node[midway, above right=-3pt, elab] {$\act{H}{H}$} (vR);
      \draw[->] (vL) to[bend right=15] node[midway, below right=-3pt, elab] {$\tfrac12$} (v0);
      \draw[->] (vR) to[bend left=15]  node[midway, below left=-3pt, elab]  {$\tfrac12$} (v0);
      \draw[->] (vL) -- (tL) node[midway, above, elab] {$\tfrac12$};
      \draw[->] (vR) -- (tR) node[midway, above, elab] {$\tfrac12$};
      \node[font=\small] at (0,-2.5) {(a)};
    \end{scope}
    \begin{scope}[xshift=7.3cm, yshift=+1.5cm]
      \node[circlev] (A) at (0,0) {$v$};
      \node[memlab, above=1pt of A] {$\anc{\{\square,\circ\}}$};
      \node[term] (ttA) at (0,-1.8) {$t_\top$};
      \node[stoch] (vL1) at (-1.5,-0.9) {$v_L$};
      \node[stoch] (vR1) at ( 1.5,-0.9) {$v_R$};
      \node[term] (tL1) at (-2.5,-0.9) {$t_L$};
      \node[term] (tR1) at ( 2.5,-0.9) {$t_R$};
      \node[circlev] (B) at (-1.5,-1.8) {$v$};
      \node[memlab, left=1pt of B] {$\anc{\square}$};
        \node[circlev] (C) at ( 1.5,-1.8) {$v$};

      \node[stoch] (vL2) at (-1.5,-3.1) {$v_R$};
      \node[stoch] (vR2) at (1.5,-3.1) {$v_L$};
    \node[term] (tL2) at (-2.5,-3.1) {$t_R$};
      \node[term] (tR2) at ( 2.5,-3.1) {$t_L$};

      \node[circlev] (empv0) at (0,-3.1) {$v$};
      \node[memlab, below=1pt of empv0] {$\anc{\emptyset}$};
      \node[memlab, right=1pt of C] {$\anc{\circ}$};
      \draw[->] (A) -- (ttA) node[midway,right=1pt,elab] {$\act{H}{T},\act{T}{H}$};
      \draw[->] (A) -- (vL1) node[midway,above left=-0pt,elab]  {$\act{T}{T}$};
      \draw[->] (A) -- (vR1) node[midway,above right=-0pt,elab] {$\act{H}{H}$};
      \draw[->] (vL1) -- (tL1) node[midway,above,elab] {$\tfrac12$};
      \draw[->] (vL1) -- (B)   node[midway,left,elab]  {$\tfrac12$};
      \draw[->] (vR1) -- (tR1) node[midway,above,elab] {$\tfrac12$};
      \draw[->] (vR1) -- (C)   node[midway,right,elab] {$\tfrac12$};
      \draw[->] (B) -- (ttA) node[midway,below=1pt,elab] {$\act{H}{T},\act{T}{H}$};
      \draw[->] (C) -- (ttA) node[midway,below=1pt,elab] {$\act{H}{T},\act{T}{H}$};
      \draw[->] (B) -- (vL2) node[midway,left=1pt,elab] {$\act{H}{H}$};
      \draw[->] (C) -- (vR2) node[midway,right=1pt,elab] {$\act{T}{T}$};
    \draw[->] (vL2) -- (tL2) node[midway,below,elab] {$\tfrac12$};
      \draw[->] (vL2) -- (empv0)   node[midway,below,elab]  {$\tfrac12$};
      \draw[->] (vR2) -- (tR2) node[midway,below,elab] {$\tfrac12$};
      \draw[->] (vR2) -- (empv0)   node[midway,below,elab] {$\tfrac12$};
      \draw[->] (empv0) -- (ttA)   node[midway,right,elab, yshift=-1mm] {$\act{T}{H}$};
      \node[font=\small] at (0,-4.1) {(b)};
    \end{scope}
  \end{tikzpicture}}
  \caption{\textbf{(a)} The game of Example~\ref{ex:anchoring}. \textbf{(b)}. A representation of the strategy profile $\bsigma$, along with the memory states. Punishing plays are not depicted.}
  \label{fig:example}
\end{figure}

\subp{The strategy profile}
Since both players are pessimists, achieving $\mathbb{X}_{\circ} = \mathbb{X}_{\square} = 1$ means that each player must be given a payoff of $1$ with positive
probability. 
When the play starts, both players choose randomly between $H$ and $T$. If the vertex $v_L$ has been visited before in the history, then player $\Circle$ already had a positive probability of getting payoff $1$: then, player $\Square$ always performs $H$ and player $\Circle$ randomises between $H$ and $T$, so that player $\Square$ also gets payoff $1$ with positive probability. If $\Square$ deviates from this positional strategy, then action $P$ is played by player $\Circle$ at any further visit of $v$, to punish him.
The players behave symmetrically when $v$ is visited after $v_R$ has been seen, and when player $\Circle$ deviates.

Here, memory is used to remember who has already been given a payoff $1$ with positive probability---and who still needs to be given such a payoff.
In other words, which players are being \emph{anchored}.
We define the concept formally later; but we describe here the corresponding memory structure.

\subparagraph*{The memory structure.}
The strategy profile described above is defined by a memory structure with six states: $\{\mathsf{anchor}_{\circ,\square}$, $\mathsf{anchor}_{\circ}$, $\mathsf{anchor}_{\square}$, $
\anc{\emptyset}$, $\mathsf{punish}_{\circ}$, $\mathsf{punish}_{\square}\}$. The choice functions at $v$ are:
\begin{itemize}
  \item in state $\mathsf{anchor}_{\circ,\square}$ (initial): both players play $H$ and $T$
    uniformly at random;
  \item in state $\mathsf{anchor}_{\circ}$: player $\Circle$ plays $H$; player $\Square$
    plays $H$ and $T$ uniformly at random;
  \item in state $\mathsf{anchor}_{\square}$: player $\Square$ plays $T$; player $\Circle$ plays $H$ and $T$ uniformly at random;
  \item in state $\anc{\emptyset}$: player $\Circle$ plays $H$, player $\Square$ plays
    $T$ (the game ends in $t_\top$);
  \item in state $\mathsf{punish}_i$: the player other than $i$ plays $P$;
    player $i$'s choice is arbitrary.
\end{itemize}

The memory is updated as follows. From state $\mathsf{anchor}_{\circ,\square}$, upon
 visiting the vertex $v_R$, the memory moves to $\mathsf{anchor}_{\circ}$, and upon
visiting vertex $v_L$ to $\mathsf{anchor}_{\square}$. From state $\mathsf{anchor}_{\square}$,
upon $(H,H)\,v_R\,v$ it moves to $\anc{\emptyset}$; from
$\mathsf{anchor}_{\circ}$, upon $(T,T)\,v_L\,v$ it moves to
$\anc{\emptyset}$. If some player $i$ plays an action outside the
support prescribed by the current state, the memory moves to
$\mathsf{punish}_i$ and remains there.
 
The plays compatible with $\bar\sigma$ reach exactly the terminals
$t_\top$, $t_R$ and $t_L$, and never $t_\bot$;
hence $\mathbb{X}_{\square}(\bar\sigma) =
\mathbb{X}_{\circ}(\bar\sigma)  = 1$. Those plays are described in \cref{fig:example} (b).

\subp{This strategy profile is an XRSE}

Both players being pessimists, it suffices to 
show that whenever they deviate, they always get payoff $0$ or $1$ with positive probability;
we argue for $\Square$, the case of $\Circle$ being symmetric. 
From the second visit to vertex $v$, 
if player $\Square$ deviates, either the game has visited the vertex $v_R$, and then he anyway gets risk measure $1$; or it has visited $v_L$, and then he is prescribed to play deterministically---any deviation is detectable and he will be punished with positive probability, giving him a risk measure of $0$. 

When the game starts, deviating would mean playing $H$ or $T$ purely. 
Then, either he reaches $v_R$ first, and then receives risk measure $1$ since  $t_R$ is reached with probability $1/2$. Or, he reaches $v_L$ first, and then the memory switches to $\mathsf{anchor}_\square$, hence again $t_R$ is reached with positive probability. Either case, he has no incentive to deviate.

\subp{About the necessity of randomisation and of memory}
Critically, both randomisation and memory are required here.
For randomisation: if either player, say $\Square$, was playing $H$ (or $T$) deterministically from the beginning, then player $\Circle$ would an incentive to purely play $T$ so that the terminal $t_\top$ is reached almost surely, giving her payoff $2$.
For memory: if some player, again, say $\Square$, was always randomising between $H$ and $T$ at every occurrence of the vertex $v$, then player $\Circle$ would have a profitable deviation by playing $T$, to reach $t_\top$ or $t_R$, and get payoff $2$.
 \end{example}

\subsection{Anchored players}

We now assume given an XRSE $\bsigma$.
We aim to construct an equivalent finite-memory one.
We denote the set of histories compatible with $\bsigma$ by $H$, and the profile $\X(\bsigma)$ by $\bz$.
Given a history $h \in H$, every history $h \ba v \in H$ is called a \emph{child} of $h$.
Following from the intuition from \Cref{ex:anchoring}, we need to define formally the set of players that are \emph{anchored} after each history.
Let us first define \emph{anchorable} players.

\begin{definition}[Anchorable players]
    Let $h \in H$ be a history compatible with $\bsigma$, and let $i$ be a player.
    We say that player $i$ is \emph{anchorable} at $h$ if either:
    \begin{itemize}
        \item player $i$ is an optimist and $\X_i(\bsigma_{\restr h}) = z_i$;

        \item or player $i$ is a pessimist, and for every strategy $\sigma'_i$, we have $\X_i(\bsigma_{-i\restr h}, \sigma'_i) \leq z_i$.
    \end{itemize}
\end{definition}

However, the fact that some player $i$ is anchored at history $h$ does not say anything about how far we are from reaching the terminal where their risk measure is realised.
This is why we define the notion of \emph{$i$-rank}.

\begin{definition}[$i$-rank]
    Let $i$ be a player, and let $h$ be a history where player $i$ is anchorable.
    The history $h$ has $i$-rank $0$ if it ends in a terminal.
    Otherwise, its $i$-rank is the smallest integer $k$ such that:
    \begin{itemize}
        \item player $i$ is an optimist, and there is a child $h \ba v$ of $h$ that has $i$-rank $k-1$;

        \item or player $i$ is a pessimist, and for every action $a_i \in \Supp(\sigma_i(h))$, there is an action profile $\ba_{-i} \in \Supp(\bsigma_{-i}(h))$ and a vertex $v \in \Supp(\Delta(\last(h), \ba))$ such that the child $h \ba v$ has $i$-rank lesser than $k$.
    \end{itemize}
\end{definition}

\begin{lemma}\label{lm:irank}
    Every history where player $i$ is anchorable has an $i$-rank.
\end{lemma}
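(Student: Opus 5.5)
We argue by contradiction, separately for optimists and pessimists. In both cases we consider the set $B$ of histories $h \in H$ at which player $i$ is anchorable but which have no $i$-rank. From $B$ we build a strategy profile (a deviation of player $i$ in the pessimist case, or $\bsigma$ itself in the optimist case) under which, from $h$, the play stays forever inside $B$ with positive probability. The key tool is that the game is stopping, so almost every play reaches a terminal. A history in $B$ never ends in a terminal, since terminal histories have $i$-rank $0$. Hence staying in $B$ forever with positive probability contradicts the stopping assumption.

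\emph{Optimist case.} Let $h \in B$. Since $\X_i(\bsigma_{\restr h}) = z_i$ is an $\mathbb{OM}$ value and the game is stopping, there is a finite continuation $h h'$ compatible with $\bsigma$ that ends in a terminal $t$ with $\mu_i(t) = z_i$ and has positive probability. Every prefix $g$ of $hh'$ (extending $h$) satisfies $\X_i(\bsigma_{\restr g}) \ge z_i$, because $t$ is still reachable from $g$. Conversely, $\X_i(\bsigma_{\restr g}) \le z_i$, because any terminal reachable from $g$ is reachable with positive probability from the root, so its payoff is at most $\X_i(\bsigma) = z_i$. Hence player $i$ is anchorable along the whole path. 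We then apply backward induction along the finite path: the terminal has rank $0$, and each predecessor has a child of rank $k-1$, so it has rank $k$. This contradicts $h \in B$, so this case needs no stopping-argument beyond the existence of the witnessing terminal.

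\emph{Pessimist case.} This is the main obstacle. First we check that anchorability propagates. If $h$ is anchorable for pessimist $i$, then so is every child $h\ba v$ with $a_i$ arbitrary (even outside $\Supp(\sigma_i(h))$), $\ba_{-i} \in \Supp(\bsigma_{-i}(h))$ and $v \in \Supp(\Delta(\last(h),\ba))$. Indeed, a deviation from $h\ba v$ achieving $\mathbb{PM} > z_i$ could be combined with the first step $a_i$, played with positive weight, into a deviation from $h$. The subtlety is that the other branches must then also be handled by their own good deviations. So we should define anchorability via the supremum over all deviations and take, on each branch, a deviation achieving more than $z_i$, which would give $\mathbb{PM} > z_i$ at $h$. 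This requires every branch to admit such a deviation, so we argue the contrapositive.

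The cleaner route is the following. For $h \in B$, by the definition of rank there is an action $a_i \in \Supp(\sigma_i(h))$ such that every child $h\ba v$ with $\ba_{-i} \in \Supp(\bsigma_{-i}(h))$ either lies in $B$ or is not anchorable. Restricting to $a_i \in \Supp(\sigma_i(h))$, these children are compatible with $\bsigma$. We must show that they are all anchorable, so that they are all in $B$. Suppose some such child $g$ is not anchorable. Then player $i$ has a deviation from $g$ with $\mathbb{PM} > z_i$. Every other child $g'$ reached after playing $a_i$ must also carry value $> z_i$ for the resulting deviation to beat $z_i$ at $h$; those children that are in $B$ are what we must control.

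To resolve this, we define the deviation $\sigma'_i$ that, from every history in $B$, plays such a witnessing action $a_i$ deterministically. From any child that is not anchorable, $\sigma'_i$ switches to a deviation achieving more than $z_i$. Under $(\bsigma_{-i},\sigma'_i)$ from $h$, each play either stays in $B$ forever or exits into a non-anchorable history and then obtains $\mathbb{PM} > z_i$. We choose $a_i$ so that the set of children in $B$ is nonempty whenever possible; otherwise all children are non-anchorable and $\sigma'_i$ yields $\mathbb{PM} > z_i$ at $h$, contradicting anchorability. Since the game is stopping, the plays staying in $B$ forever have probability $0$. Hence $\mathbb{PM}_i(\bsigma_{-i\restr h},\sigma'_i) > z_i$, which contradicts the anchorability of $h$. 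The delicate point, and the one we expect to require the most care, is ensuring that the finitely many branchings, each at a single history, combine into a genuine strategy whose positive-probability terminals all pay more than $z_i$. This follows from taking a minimum over the finitely many terminal payoffs in the stopping game.
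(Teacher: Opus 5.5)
Your final argument is correct and follows the paper's route. For optimists you take a positive-probability path to a terminal paying $z_i$, and you additionally check, which the paper does not, that every intermediate history is anchorable. For pessimists you build a deviation that keeps avoiding ranked histories, and the stopping assumption then forces $\mathbb{PM}_i > z_i$ at $h$. Your switch to a profitable deviation at non-anchorable children is exactly what makes the paper's ``iterate this reasoning from every such child'' rigorous, since that reasoning presupposes anchorability. Do delete your earlier claim that anchorability propagates to every child: it is false in general, and your final argument does not use it.
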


\begin{proof}
    If player $i$ is an optimist, and gets risk measure $z_i$ in the strategy profile $\bsigma_{\restr h}$, then there is an extension $\pi$ of $h$ that reaches a terminal where player $i$ gets payoff $z_i$, i.e., a history where player $i$ is anchorable and with $i$-rank $0$.
    Then the history $h$ has $i$-rank at most $|\pi| - |h|$.

    Now, let $i$ be a pessimist.
    Assume player $i$ is anchorable at some history $h$ that has no $i$-rank.
    Then, there is an action $a_i$ such that for every action profile $\ba_{-i} \in \Supp(\bsigma_{-i}(h))$, and vertex $v \in \Supp(\Delta(\last(h), \ba))$, the child $h \ba v$ has no $i$-rank.
    By iterating this reasoning from every such child, we can construct a strategy for player $i$ that guarantees that no history with an $i$-rank is ever constructed.
    Then, in particular, no history with $i$-rank $0$ is generated, i.e., no terminal where player $i$ gets payoff $z_i$ is ever reached.
    On the other hand, that strategy always performs actions in the support of the strategy $\sigma_i$, hence it never reaches a terminal where player $i$ gets less than $z_i$.
    Thus, since the game is stopping, it gives player $i$ a risk measure that is strictly greater than $z_i$, contradicting the fact that player $i$ is anchorable at $h$.
\end{proof}

We now define a labelling $\Lab: H \to 2^\Pi$; for every history $h$, we will say that the players in the set $\Lab(h)$ are \emph{anchored} at $h$.

\begin{restatable}[App~\ref{app:label}]{lemma}{lmLabel} \label{lm:label}
    There exists a labelling $\Lab: H \to 2^\Pi$ that has the following properties:
    \begin{enumerate}
        \item\label{it:labelling_root} we have $\Lab(v_0) = \Pi$;

        \item\label{it:inclusion} for every history $h$ and child $h  \ba  v$ of $h$, we have $\Lab(h  \ba v) \subseteq \Lab(h)$;

        \item\label{it:anchorable} for every history $h$, all players in $\Lab(h)$ are anchorable at $h$;

        \item\label{it:random_optimist} for every history $h$ and every optimist $i \in \Lab(h)$, there is exactly one child $h  \ba  v$ of $h$ such that $i \in \Lab(h  \ba  v)$, and $h  \ba  v$ has $i$-rank smaller than $h$;

        \item\label{it:random_pessimist} for every history $h$ and every pessimist $i \in \Lab(h)$, for every action $a_i \in \Supp(\sigma_i(h))$, there is exactly one action profile $\ba_{-i}$ and one vertex $v \in \Supp(\delta(\last(h), \ba))$ such that $i \in \Lab(h \ba v)$, and $h  \ba  v$ has $i$-rank smaller than $h$.
    \end{enumerate}
\end{restatable}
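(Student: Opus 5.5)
I will construct $\Lab$ top-down, by induction on the length of histories. The invariant is that every player in $\Lab(h)$ is anchorable at $h$, so that by \Cref{lm:irank} the $i$-rank of $h$ is defined for each such $i$. Each child then receives exactly the anchored players that the parent \emph{hands down} to it.

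The root is the base case. Every player is anchorable at $v_0$: for an optimist $i$ we have $\X_i(\bsigma) = z_i$ by definition of $\bz$, and for a pessimist the XRSE property gives $\X_i(\bsigma_{-i}, \sigma'_i) \le z_i$ for every $\sigma'_i$. Setting $\Lab(v_0) = \Pi$ therefore gives item~\ref{it:labelling_root}.

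For the inductive step, assume $\Lab(h)$ is defined and its players are anchorable at $h$. If $h$ ends in a terminal, there are no children and nothing is to be done. Otherwise, treat each $i \in \Lab(h)$ separately.
\begin{itemize}
\item If $i$ is an optimist, the $i$-rank $k \ge 1$ of $h$ yields a child $h \ba v$ of $i$-rank $k-1$. I fix one such child, chosen canonically (say, minimal in a fixed order on $A \times V$), and put $i$ in its label only.
\item If $i$ is a pessimist, then for each $a_i \in \Supp(\sigma_i(h))$ the definition of $i$-rank yields $\ba_{-i} \in \Supp(\bsigma_{-i}(h))$ and $v$ such that $h\ba v$ has smaller $i$-rank. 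I fix one such pair per $a_i$ and put $i$ in the label of each chosen child.
\end{itemize}
$\Lab(h\ba v)$ is then the set of players assigned to $h\ba v$ in this way. Items~\ref{it:inclusion}, \ref{it:random_optimist} and~\ref{it:random_pessimist} hold by construction. Uniqueness in items~\ref{it:random_optimist} and~\ref{it:random_pessimist} holds because $i$ is placed on no other child. For pessimists, distinct $a_i$ give distinct children, since the chosen profiles differ in coordinate $i$.

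The main obstacle is item~\ref{it:anchorable}: a player handed down must actually be anchorable at the child, since otherwise the $i$-rank there is undefined and the induction breaks. In the construction, the child was only chosen to have a smaller $i$-rank in the sense of the definition, which presupposes anchorability. To close this gap I will prove directly that the chosen children are anchorable.

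For an optimist, the child of $i$-rank $k-1$ is reached on a path towards a terminal paying $z_i$, so $\X_i(\bsigma_{\restr h\ba v}) \ge z_i$. Conversely, $h \ba v$ is compatible with $\bsigma$ and reached with positive probability, so $\X_i(\bsigma_{\restr h\ba v}) \le \X_i(\bsigma) = z_i$. Hence equality holds, and $i$ is anchorable there.

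For a pessimist, suppose some $\sigma'_i$ from $h\ba v$ gave $\X_i > z_i$. Player $i$ could then deviate in $\bsigma$: follow $\sigma_i$ along $h$, play $a_i \in \Supp(\sigma_i(h))$, and switch to $\sigma'_i$ after $h\ba v$. Elsewhere i.e.\ off $h\ba v$, this deviation would keep following $\sigma_i$, so the worst outcome there is the one $\bsigma$ already gives. This does not immediately contradict anchorability at $h$, because other branches may still realise $z_i$. The fix is to choose the child more carefully. I will define the $i$-rank only among anchorable children, i.e.\ prove \Cref{lm:irank} with the recursion restricted to anchorable children. This follows the same argument: if, for some $a_i$, every child had no rank or was non-anchorable, then player $i$ could combine the improving deviations at the non-anchorable children with the rank-avoiding strategy at the others, and obtain a risk measure strictly greater than $z_i$ from $h$. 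That contradicts the anchorability of $i$ at $h$.
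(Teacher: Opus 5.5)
Your proposal is correct and follows the paper's own construction. You set $\Lab(v_0)=\Pi$, then build top-down, handing each anchored optimist to one child of smaller $i$-rank and each anchored pessimist to one smaller-rank child per support action, so that Properties~\ref{it:inclusion}, \ref{it:random_optimist} and~\ref{it:random_pessimist} hold by construction; the paper dismisses Property~\ref{it:anchorable} as holding ``by definition of $i$-rank'', which tacitly attaches ranks only to anchorable histories, and your explicit restriction of the rank recursion to anchorable children, re-justified in \Cref{lm:irank} by the combined-deviation argument, is exactly what makes that step sound.
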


Later on, we refer to these as Properties \ref{it:labelling_root}, \ref{it:inclusion}, \ref{it:anchorable}, \ref{it:random_optimist}, and \ref{it:random_pessimist}.

\subsection{About the number of anchored sets}

We now show that the number of distinct sets that are anchored in the labelling $\Lab$ is polynomial in the size of the game.
The reader may have noted that each pessimist that randomises after a history $h$ where they are anchored are are also anchored at $|\Supp(\sigma_i(h))|$ children of $h$.
Contrary to the turn-based case, that may cause an exponential blowup: the number of anchored sets may be exponential in the number of players.
However, a main observation is that this is only possible if many players randomise simultaneously on some vertex; in which case, the space needed to represent the game is also exponential in the number of players that randomise.

\begin{restatable}{lemma}{lmCounting}
\label{lm:counting}
    Let $p$ be the number of players in $\Game$.
    Then, the number of anchored sets is smaller than or equal to $(p^2 + 1) \| \Game \|$.
\end{restatable}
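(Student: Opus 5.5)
We plan to bound the number of distinct sets in the image of $\Lab$ by looking at the tree $H$ of histories, and at how labels can split. By Property~\ref{it:labelling_root}, the root carries $\Pi$, and by Property~\ref{it:inclusion} labels only shrink along branches. A new anchored set therefore only appears at a child $h\ba v$ whose label differs from $\Lab(h)$, and then it is a subset of $\Lab(h)$. We will count \emph{distinct} sets by charging each one to the step where it first appears as a strict shrinking. We will also use that a player is anchored at $h\ba v$ only if the child lies on the unique branch designated by Properties~\ref{it:random_optimist} and~\ref{it:random_pessimist}.

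\textbf{Step 1 (structure of a split).} Fix $h$ and $S=\Lab(h)$, with $v=\last(h)$. By Property~\ref{it:random_optimist}, each optimist $i\in S$ survives in exactly one child. By Property~\ref{it:random_pessimist}, each pessimist $i \in S$ survives in one child per action $a_i\in\Supp(\sigma_i(h))$: the survivor child for $a_i$ is determined by a completion $(\ba_{-i},w)$. Thus the label of a child $h\ba w$ is $\{i\in S : \text{the designated child for } a_i \text{ (or for } i \text{ if optimist) is } \ba w\}$.

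\textbf{Step 2 (counting children labels).} The distinct nonempty labels of children of $h$ are indexed by the designated pairs $(\ba,w)$. There are at most $|S|+\sum_{i\in S\cap P}|\Supp(\sigma_i(h))|$ of them, each a subset of $S$. The key observation from the discussion preceding the lemma is that $\sum_i |\Supp(\sigma_i(h))| \le p + \log_2 |\Av(v)|$-type bounds hold: the number of players randomising at $v$ is at most $\log_2$ of the number of action profiles at $v$, hence at most the size of the transition table of $v$. More crudely, each designated pair $(\ba,w)$ is an entry of $v$'s transition table, so there are at most $\|\Delta(v,\cdot)\|$ designated children. We then aim for a bound like ``at most $p^2$ (from optimists and deterministic pessimists, each contributing a singleton-indexed child, over at most $p$ levels) plus the table size of $v$'' new sets per vertex.

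\textbf{Step 3 (avoiding multiplication along the tree).} This step is the main obstacle: new sets are created at every splitting history, and naively the count multiplies along depth. We will argue with a laminar-family argument. All labels are subsets of $\Pi$ and form a tree under inclusion along branches. Sibling labels restricted to the players that appear in both are disjoint, except for pessimists duplicated across their support actions. A laminar family on $p$ elements has at most $2p$ sets. Each duplication of a pessimist $i$ at vertex $v$ uses a distinct table entry of $v$, which we charge to $\|\Game\|$. We will group histories by their last vertex and label, charge each further distinct set either to a laminar slot ($\le p^2$ per vertex after accounting for up to $p$ rank-driven splits) or to a table entry. Summing over vertices then gives $(p^2+1)\|\Game\|$. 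If charging per vertex fails because the same vertex and label recur with different splits, we first normalise $\Lab$, choosing designated completions depending only on $(\last(h),\Lab(h))$, which Lemma~\ref{lm:label}'s construction should permit.
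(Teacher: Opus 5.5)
\textbf{There is a genuine gap, in Step~3.} Steps~1 and~2 correctly describe a single split, but Step~3 is where the whole difficulty lies, and the argument there is missing. The anchored sets do not form a laminar family. Your plan to pay for the non-laminarity by charging each duplication to an entry of $v$'s transition table is not injective: $v$ is the last vertex of infinitely many histories, with different labels and different splits. The same designated pair $(\ba,w)$ can serve at all of them and produce a different set each time.

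In fact, no bound of the form ``$p^2$ plus the table size of $v$'' on the sets created at histories ending in $v$ follows from Properties~2, 4 and~5. Take a vertex with a single action profile and two successors. Every history labelled $\Pi$ ending there splits $\Pi$ into a partition over the two successors. Nothing prevents different such histories from using different partitions, so up to $2^p$ sets can be created at a vertex whose table is tiny. This is compatible with the lemma only because many incomparable histories labelled $\Pi$ force, at their branching point, every player to be a pessimist playing two distinct actions. That means some \emph{other} vertex has a table of size at least $2^p$. So the charging must be global, not per vertex.

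Your normalisation does not rescue this. Designated completions must lie in $\Supp(\bsigma(h))$ and strictly decrease ranks, and both depend on the whole history $h$, not only on $(\last(h),\Lab(h))$. Even granting it, you would be bounding the number of sets by the number of pairs $(\last(h),\Lab(h))$, which is essentially the quantity you are trying to bound. Separately, the inequality $\sum_i|\Supp(\sigma_i(h))|\le p+\log_2|\Av(v)|$ in Step~2 is false, e.g.\ for a single player with many actions. What does hold is that the number of players with at least two actions at $v$ is at most $\log_2|\Av(v)|$.

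\textbf{The missing idea.} Your sibling observation is the right local fact, but it has to be applied at the branching point of two \emph{arbitrary} histories, and combined with a bound on the possible intersections.
\begin{itemize}
\item Let $C_v$ be the set of pessimists with at least two available actions at $v$, and let $\Core=\bigcup_v 2^{C_v}$. This family is downward closed, and $|\Core|\le\sum_v 2^{|C_v|}\le\|\Game\|$; this is where your $\log_2$ observation belongs.
\item Suppose two anchored sets $A=\Lab(h_A)$ and $B=\Lab(h_B)$ cross, i.e.\ they intersect without being comparable. By Property~2, $h_A$ and $h_B$ are not prefix-related. At their longest common prefix $h$, every player of $A\cap B$ lies in the labels of two distinct children. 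By Properties~4 and~5 it is a pessimist using two different actions there, so $A\cap B\subseteq C_{\last(h)}$, hence $A\cap B\in\Core$.
\item Now let $C$ be a minimal set outside $\Core$, i.e.\ $C\setminus\{i\}\in\Core$ for every $i\in C$. Anchored sets containing $C$ cannot cross, so they form a chain of at most $p$ sets.
\item Every anchored set outside $\Core$ contains such a $C$, and there are at most $p|\Core|$ of them.
\end{itemize}
Altogether this gives at most $|\Core|+p^2|\Core|\le(p^2+1)\|\Game\|$ anchored sets.
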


\begin{proof}
    Given two sets $A$ and $B$, we say that $A$ and $B$ are \emph{comparable} if we have either $A \subseteq B$ or $B \subseteq A$, and that they \emph{cross} if they are not comparable but satisfy $A \cap B \neq \emptyset$.
    For each vertex $v$, let $C_v$ be the \emph{core} of $v$, i.e., the set of pessimists $i$ that have at least two distinct actions available in $v$.
    We now define the \emph{game core} as the set of subsets of $\Pi$ that are included in some vertex core, i.e., the set $\Core = \bigcup_{v \in V} 2^{C_v}$.
    Our proof relies on the following argument: the size of the game core is bounded by the game size.

    \begin{proposition} \label{prop:size_core}
        We have $|\Core| \leq \| \Game \|$.
    \end{proposition}

    \begin{proof}
        We have
        $|\Core| \leq \sum_v \left| 2^{C_v} \right|= \sum_v 2^{|C_v|}.$
        For each $v$, there are at least $2^{|C_v|}$ action profiles possible, hence the space required to represent the transition function is at least this sum.
    \end{proof}

    We now count separately the anchored sets in the game core, and those outside.

    \subp{In the game core}
    The number of sets $A \in \Core$ that are anchored is bounded by $|\Core|$.
    By \Cref{prop:size_core}, that quantity is smaller than or equal to the game size $\|\Game\|$.

    \subp{Outside of the game core}
    Let us first notice the following.

    \begin{proposition}\label{prop:crossing_sets}
        Let $A$ and $B$ be two crossing anchored sets.
        Then, we have $A \cap B \in \Core$.
    \end{proposition}

    \begin{proof}
        Let $h_A, h_B \in H$ be two histories such that $\Lab(h_A) = A$ and $\Lab(h_B) = B$.
        None of those histories is a prefix of the other one: otherwise, by Property~\ref{it:inclusion}, we would have $\Lab(h_A) \subseteq \Lab(h_B)$ or $\Lab(h_B) \subseteq \Lab(h_A)$, i.e. the sets $A$ and $B$ would be comparable, and thus would not cross.
        
        Let now $h$ be the longest common prefix of $h_A$ and $h_B$: the history $h$ has then two distinct children $h \ba v$ and $h \ba' v'$ such that $h \ba v$ is a prefix of $h_A$ and $h \ba' v'$ is a prefix of $h_B$.
        By Property~\ref{it:inclusion}, we have $A \subseteq \Lab(h \ba v)$ and $B \subseteq \Lab(h \ba' v')$, hence $A \cap B \subseteq \Lab(h \ba v) \cap \Lab(h \ba' v')$.

        Let then $i \in A \cap B$.
        By Properties~\ref{it:random_optimist}, player $i$ is necessarily a pessimist, and by Property~\ref{it:random_pessimist} we necessarily have $a_i \neq a_i'$.
        Thus, player $i$ belongs to the core of the vertex $\last(h)$.
        Which proves the inclusion $A \cap B \subseteq C_{\last(h)}$.
    \end{proof}

    Let us now call \emph{critical set} a set $C \subseteq \Pi$ such that $C \not\in \Core$, but $C \setminus \{i\} \in \Core$ for every player $i$.
    We then get the following result.

    \begin{proposition}
        Let $A$ and $B$ be two anchored sets.
        If there is a critical set $C$ with $C \subseteq A \cap B$, then the sets $A$ and $B$ are comparable.
    \end{proposition}

    \begin{proof}
        Since we have $\emptyset \in \Core$, the set $C$ is nonempty.
        The sets $A$ and $B$ are therefore either crossing or comparable.
        But if they were crossing, we would have $A \cap B \in \Core$, and therefore $C \in \Core$, which is false.
        Therefore, they are comparable.
    \end{proof}

    Consequently, for every critical set $C$, there can be at most $p$ distinct anchored sets containing $C$.
    On the other hand, there are at most $|\Core|p$ critical sets (to choose a critical set, one must choose a core set $C'$ and then a player $i$ such that $C' \cup \{i\} \not\in \Core$).
    Consequently, there are at most $|\Core|p^2 \leq \|\Game\|p^2$ anchored sets outside the core.

    \subp{Conclusion}
    There are at most $\|\Game\| + \|\Game\|p^2 = (p^2+1)\|\Game\|$ anchored sets, as desired.
\end{proof}

\subsection{A finite-memory strategy profile}

The labelling $\Lab$ provides us with the fundamental structure from which we can construct a finite-memory strategy profile.
\paragraph*{Tool box: memoryless strategies} Before we proceed to the lemma, we recall one lemma 
from a recent work~\cite{BHMST26}, that shows that if one player deviates in a way that can be observed by the other players, then the other players can form a team and punish the deviator using a memoryless randomised strategy. Earlier works~\cite{dAH00,AHK02} had considered concurrent games where players can form coalitions, but assumed that the 
players in a coalition are allowed to correlate their random choices, which makes it possible to reduce them to one meta-player;
that is not the case in this setting.

 \begin{restatable}[App.~\ref{app:value}]{lemma}{ValuePunish}\label{lem:value}\label{lem:punishment}
For every player $i$ and vertex $v$, the value at vertex $v$, written
 $\val_v(\bsigma) = \inf_{\substack{\bsigma_{-i}}}\ \sup_{\sigma_i}\  \X_i\!\big(\bsigma_{-i},\sigma_i\big)
$, where $\sigma_i$ and $\bsigma_{-i}$ range over strategies from vertex $v$, 
 is realised by a memoryless strategy profile.
 \end{restatable}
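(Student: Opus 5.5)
We prove \Cref{lem:value} by exploiting the fact that the extreme risk measures depend only on the \emph{support} of the play distribution, not on the actual probabilities. This lets us turn the value into a qualitative reachability quantity.

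Fix player $i$ and write $T_{>c}$, respectively $T_{\geq c}$, for the set of terminals $t$ with $\mu_i(t) > c$, respectively $\mu_i(t) \geq c$. Since payoffs are realised only at terminals and the game is stopping, $\X_i$ takes values in the finite set $\{\mu_i(t) \mid t \in T\}$. For a threshold $c$ in this set, the statement $\val_v \le c$ can therefore be rephrased qualitatively.

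\textbf{Optimist $i$.} Here $\sup_{\sigma_i}\X_i(\bsigma_{-i},\sigma_i) \le c$ holds iff, against $\bsigma_{-i}$, player $i$ cannot reach $T_{>c}$ with positive probability. This in turn holds iff no path to $T_{>c}$ exists in the graph where the opponents' actions are restricted to their supports, since player $i$ can follow any such path with positive probability. So the coalition only has to choose a support at each history.

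\textbf{Pessimist $i$.} Here $\val_v \le c$ holds iff for every $\sigma_i$ some terminal outside $T_{>c}$ is reached with positive probability. Since terminals are reached almost surely, this is the dual: player $i$ cannot force almost-sure reaching of $T_{>c}$.

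In both cases the coalition's goal is a qualitative objective (safety against a positive-probability reachability, or spoiling almost-sure reachability). The opponents only control their supports, so randomisation helps them without any correlation being needed. Each opponent $j$ can play the uniform distribution over its chosen support, and products of independent uniform distributions already realise every product support.

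The steps are as follows. First, for each threshold $c$ we compute the set $W_c$ of vertices from which the coalition can ensure $\X_i \le c$, by a fixpoint.

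For the optimist, $W_c$ is the greatest set $W \subseteq V\setminus T_{>c}$ such that from each $v\in W$ there exist nonempty supports $S_j\subseteq \Av_j(v)$, $j\neq i$, with $\Supp(\Delta(v,\ba))\subseteq W$ for all $a_i$ and all $\ba_{-i}\in\prod_j S_j$. The coalition then plays uniformly on these supports, which is a memoryless strategy. Outside $W_c$, player $i$ can escape against any supports by the complementary attractor argument, so $W_c$ is exactly the set of vertices where the value is at most $c$.

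For the pessimist, we use the classical characterisation of almost-sure reachability in concurrent games by nested fixpoints (de Alfaro--Henzinger--Kupferman): the complement of player $i$'s almost-sure winning region for $T_{>c}$ is the spoiling region. Here the spoiling strategy is the place where randomisation matters. It is classically memoryless and plays uniformly over the coalition's available actions in the relevant sets. Because only supports matter, the combined spoiling distribution can again be taken as a product of independent uniform distributions, so no shared randomness is needed.

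Second, we combine the thresholds. Let $c(v)=\min\{c \mid v\in W_c\}$. At each vertex the coalition plays the memoryless strategy associated with $W_{c(v)}$. The remaining point is to show consistency: from $v$, plays under these supports stay in $W_{c(v)}$, so they never visit a vertex requiring a larger threshold. This follows because the $W_c$ are monotone in $c$ and each per-threshold strategy keeps the play inside its own region.

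The main obstacle is the pessimistic case. The spoiling strategy in concurrent almost-sure reachability classically relies on a single player's randomisation. We must check that the coalition's product of uniform distributions gives the same support-level guarantees as correlated randomisation. We handle this by noting that the nested-fixpoint characterisation refers only to which action profiles have positive probability. A product of full-support uniform distributions over each opponent's safe action set yields exactly the product set of profiles, and that product set is all the fixpoint requires.
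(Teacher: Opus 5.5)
\textbf{There is a genuine gap, in the pessimist case.}

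Your optimist case is fine. There, $\sup_{\sigma_i}\X_i\le c$ means exactly that $T_{>c}$ is surely avoided, your greatest-fixpoint region $W_c$ is correct, and the threshold combination works: the chosen supports keep every play inside $W_{c(v)}$ surely, and $c(\cdot)$ is non-increasing along plays.

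The pessimist case is the one you flag as the main obstacle, but you handle it only by appeal, and three of its steps fail as written.
\begin{itemize}
\item Spoiling strategies against almost-sure reachability are \emph{not} classically memoryless. In the hide-or-run game, player~1 is not almost-sure winning. Yet against any memoryless opponent that throws with probability $q$, she wins almost surely: she hides forever if $q>0$ and runs if $q=0$. Spoiling there needs unbounded memory. You use the stopping hypothesis only to rephrase $\PM_i\le c$, never to recover memorylessness, so importing this ``classical'' fact is unjustified.
\item The claim that the nested fixpoint ``only requires the product set'' of profiles is asserted, not checked. The spoiler's moves in the de Alfaro--Henzinger--Kupferman construction are responses to player $i$'s support. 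For a correlated meta-player they form arbitrary subsets of $\prod_{j\neq i}\Av_j(v)$, and you do not show that replacing them by a product set preserves the guarantee.
\item Your consistency argument for combining thresholds (``each per-threshold strategy keeps the play inside its own region'') is false for spoiling strategies. Suppose that at $v$ player $i$ and a single opponent play matching pennies, with matched actions leading to a terminal in $T_{\le c}$ and mismatched ones to a terminal in $T_{>c}$. Then $v\in W_c$, but every spoiling strategy must randomise at $v$, and so leaves $W_c$ with positive probability.
\end{itemize}

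\textbf{The fix is monotonicity in supports.} This single observation removes all three problems and makes the pessimist case the easy one. In a stopping game, $\PM_i(\bsigma_{-i},\sigma_i)$ is the least payoff $\mu_i(t)$ over terminal histories compatible with $(\bsigma_{-i},\sigma_i)$. For fixed $\sigma_i$, every history compatible with $(\bsigma_{-i},\sigma_i)$ remains compatible when the opponents' supports are enlarged. Let $\bsigma^u_{-i}$ be the profile in which every $j\neq i$ plays uniformly over $\Av_j(w)$ at every vertex $w$. Then
\[\sup_{\sigma_i}\PM_i(\bsigma^u_{-i},\sigma_i)\le\sup_{\sigma_i}\PM_i(\bsigma_{-i},\sigma_i)\quad\text{for every } \bsigma_{-i}.\]
So $\bsigma^u_{-i}$ realises the infimum from every vertex simultaneously, is memoryless, and needs no shared randomness. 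No fixpoint, determinacy argument, product-support check or threshold combination is required.

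For comparison, the paper outsources both cases to recent team-reachability results without shared randomness. Your optimist argument together with this observation would give a self-contained proof in the stopping setting.
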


We also 
note a well-known result from the result from MDP literature that states that 
against a memoryless strategy profile, positional strategies are optimal.

\begin{lemma}[Positional strategy for one player]\label{lem:oneplayerMDP}
Let $\bsigma_{-i}$ be a memoryless strategy profile.
 Then, the supremum  $\sup_{\sigma_i} \X_i(\bsigma_{-i},\sigma_i)$ is realised by a positional strategy.
 \end{lemma}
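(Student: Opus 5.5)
The plan is to fix the memoryless profile $\bsigma_{-i}$ and reduce the problem to a finite graph game played by player $i$ against an adversarial ``nature''. Since the game is stopping, every play reaches $T$ almost surely. Moreover, $\X_i(\bsigma_{-i},\sigma_i)$ depends only on the set of terminals reached with positive probability. That set is determined by which finite paths have positive probability, i.e., by supports. Because $\bsigma_{-i}$ is memoryless, these supports depend only on the current vertex: set $S(v)=\prod_{j\neq i}\Supp(\sigma_j(v))$. Then from $v$, if player $i$ plays $a_i$, the possible successors are $\mathrm{Succ}(v,a_i)=\bigcup_{\ba_{-i}\in S(v)}\Supp(\Delta(v,(\ba_{-i},a_i)))$. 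A history is followed with positive probability under $(\bsigma_{-i},\sigma_i)$ if and only if it is a path in this graph whose $i$-actions lie in the supports of $\sigma_i$. I will then treat the two kinds of players separately.

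\emph{Optimist $i\in O$.} Let $x^*$ be the supremum. It is attained, since only finitely many payoffs exist. Let $T^*$ be the set of terminals $t$ with $\mu_i(t)=x^*$. Some strategy makes a path to $T^*$ positively probable, so that path, read in the graph above, gives a sequence of actions $a_i$ and successors $w\in\mathrm{Succ}(v,a_i)$ leading to $T^*$. Let $d(v)$ be the length of the shortest such path from $v$. The positional strategy plays, at each $v$ with $d(v)<\infty$, an action $a_i$ having a successor $w$ with $d(w)=d(v)-1$; elsewhere it plays arbitrarily. Under it, the shortest path from $v_0$ is positively probable, so $\OM$-wise we get $\X_i\geq x^*$, and hence equality.

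\emph{Pessimist $i\in P$.} Restricting a randomised strategy to a single action in its support (pointwise along histories) only shrinks the set of positively probable plays. It therefore cannot decrease the minimum reachable payoff, so it suffices to consider pure strategies. For a threshold $x$, let $B_x=\{t\in T\mid \mu_i(t)<x\}$. Then $\X_i\geq x$ holds exactly when player $i$ wins the safety game ``never reach $B_x$'' on the graph above, where nature chooses successors adversarially; positively probable plays are precisely the plays consistent with some nature choices. The stopping assumption guarantees that all plays are finite with positive probability, so no infinite plays have to be handled. Safety games on finite graphs admit positional winning strategies: at each vertex of the winning region, stay in the winning region. Taking $x$ to be the largest achievable threshold among the finitely many terminal payoffs then yields a positional optimal strategy.

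The main obstacle I anticipate is the bookkeeping that justifies the reduction to the support graph. Specifically, I must check that ``positive probability'' for finite histories coincides with path existence in that graph when $\bsigma_{-i}$ is memoryless, and that $\mathbb{PM}_i$ and $\mathbb{OM}_i$ are correctly expressed as a min and a max over these finite paths. This is where stopping is used: without it, infinite plays of measure zero would make the min/max characterisation less clean. Alternatively, one can cite the standard result that MDPs with reachability or safety objectives admit positional optimal strategies~\cite{DBLP:books/wi/Puterman94}; the argument above is the direct version of that citation.
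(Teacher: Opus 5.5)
Your proof is correct. The paper gives no argument for this lemma and simply invokes it as a well-known MDP fact. The implicit argument, visible in the proof of \cref{thm:Npcomplete}, is that the memoryless $\bsigma_{-i}$ induces a finite MDP. In that MDP, the optimist's and the pessimist's problems become positive-probability and almost-sure reachability of $\{t \in T \mid \mu_i(t) \geq x\}$, and both admit positional optimal strategies.

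Your argument is a self-contained version of that citation. Your optimist case is the standard shortest-path proof for positive-probability reachability. For the pessimist, you use the stopping hypothesis to replace almost-sure reachability by the purely graph-theoretic objective of never reaching $B_x$. Positionality then reduces to the trivial ``stay in the winning region'' strategy, rather than to the fixpoint characterisation of almost-sure winning in MDPs. The gain is that you need no probabilistic reasoning beyond supports. The cost is that this safety reformulation is only valid for stopping games, whereas positionality for almost-sure reachability in finite MDPs holds without that assumption.

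There is one point of bookkeeping you should make explicit. In the original game, player $i$ observes $\ba_{-i}$, which your support graph hides. So a strategy achieving $\X_i \geq x$ cannot be transported directly into your graph game. Argue the implication ``some strategy achieves $\X_i \geq x$ $\Rightarrow$ $v_0$ lies in player $i$'s winning region'' through the attractor of $B_x$: if $v_0$ lies in the attractor, then against any strategy of player $i$, whatever it observes, some positively probable path reaches $B_x$. Alternatively, let nature choose the pair $(\ba_{-i}, w)$ in your graph.

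Finally, stopping makes plays finite with probability one, not merely ``with positive probability''. That is exactly what justifies writing $\X_i$ as a minimum or maximum over the terminals reached with positive probability.
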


\paragraph*{The construction}

\begin{restatable}[App.~\ref{app:finite_memory}]{lemma}{lmFiniteMemory} \label{lm:finite_memory}
    There exists a finite-memory strategy profile $\bsigma^\star$ equivalent to $\bsigma$, with at most $p + (p^2+1) \|\Game\|$ memory states.
\end{restatable}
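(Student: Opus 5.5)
The memory states will be the anchored sets, i.e.\ the sets $\Lab(h)$ for $h \in H$; by \Cref{lm:counting} there are at most $(p^2+1)\|\Game\|$ of them. On top of these we add $p$ punishment states $\punish_i$, one per player. In a memory state $A$ at vertex $v$, the profile $\bsigma^\star$ imitates a \emph{representative} history. For each pair $(A,v)$ that occurs, we fix one $h_{A,v} \in H$ with $\Lab(h_{A,v}) = A$ and $\last(h_{A,v}) = v$, chosen to minimise the vector of $i$-ranks for $i \in A$ (in any fixed way, e.g.\ minimal $\max_{i\in A}$ rank, then minimal length). The choice function plays $\sigma^\mathsf{choice}_j(A,v) = \sigma_j(h_{A,v})$. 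The memory update from $(A,v)$ on action profile $\ba$ and new vertex $w$ is as follows.
\begin{itemize}
\item If $\ba \in \prod_j \Supp(\sigma_j(h_{A,v}))$, move to $\Lab(h_{A,v}\ba w)$.
\item If exactly one player $i$ played outside the support, move to $\punish_i$.
\item If several players did, move to $\punish_i$ for a fixed choice of such an $i$.
\end{itemize}
To keep the memory at $|M|$ states, we let the state be $A$ alone, together with the current vertex that the strategy reads anyway. In $\punish_i$, the other players play the memoryless punishing profile of \Cref{lem:punishment}, which realises the value $\val_v$ against $i$, and the state stays in $\punish_i$ forever.

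Next, I would prove that the risk measures are preserved, i.e.\ $\X(\bsigma^\star) = \bz$. Two facts need to be checked.
\begin{itemize}
\item \emph{Every player obtains payoff exactly $z_i$ with positive probability on the right side.} For $i \in \Lab$ at the root (all players, by Property~\ref{it:labelling_root}), follow the unique anchored child from Properties~\ref{it:random_optimist}/\ref{it:random_pessimist}. Each such step strictly decreases the $i$-rank, and choosing representatives of minimal rank ensures that the rank measured at the current representative keeps decreasing. Hence a terminal with payoff $z_i$ is reached with positive probability. For pessimists, the anchoring must hold against every action in their support, which Property~\ref{it:random_pessimist} provides.
\item \emph{No compatible play of $\bsigma^\star$ gives a pessimist less than $z_i$, or an optimist more than $z_i$.} Every history compatible with $\bsigma^\star$ is made of local moves that are compatible with $\bsigma$ from some history in $H$ with the same last vertex. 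Terminal payoffs seen along it are therefore payoffs that $\bsigma$ reaches with positive probability. For pessimists this gives $\ge z_i$. For optimists it gives $\le \X_i(\bsigma)=z_i$, because every terminal compatible with $\bsigma$ has payoff at most $z_i$.
\end{itemize}

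Then I would show the equilibrium property. Fix a deviation $\sigma'_i$ of player $i$. Deviations that stay inside the supports keep the play inside the simulated histories. For an optimist, such a deviation reaches only terminals compatible with $\bsigma$, hence gains nothing. For a pessimist $i$, the difficulty is the case $i \notin A$: Property~\ref{it:anchorable} no longer applies, so we instead use that $\bsigma$ is an XRSE at the representative history. A deviation that leaves the support is detected and triggers $\punish_i$. Here we need that $\val_v \le z_i$ at any history compatible with $\bsigma$. This holds because otherwise $i$ could follow $\bsigma_i$ up to that history and then deviate, obtaining more than $z_i$. For a pessimist, that argument uses the positive-probability reachability of the history together with the fact that terminals outside it satisfy $\ge z_i$. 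Since $\val_v \le z_i$, \Cref{lem:punishment} then bounds the deviator's measure by $z_i$.

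The main obstacle I expect is that deviations mix in-support manipulations with switching between representatives. A pessimist may steer within supports through the memory-collapsed structure and avoid all low terminals in a way impossible in $\bsigma$. For $i \in A$, anchorability rules this out: the rank argument of \Cref{lm:irank} shows that any in-support strategy meets a $z_i$ terminal. For $i \notin A$, I would try to show that the labelling can be chosen so that once $i$ leaves the anchored set, the representative's residual behaviour already contains a reachable $z_i$-terminal against every in-support strategy of $i$. If that fails, I would add the rank information to the state, which still fits in the stated count only if it is determined by $(A,v)$. The representative choice above is designed to make that true.
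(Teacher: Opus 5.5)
Your memory structure (anchored-set states plus $p$ punishment states, representatives of minimal maximal rank, switch to $\punish_i$ on a detected deviation) matches the paper's. So does your proof that $\X(\bsigma^\star)=\bz$. One small point there: the quantity that strictly decreases along the chain is the maximal rank over the representative's label, not the $i$-rank itself, but your tie-breaking already handles this. Your optimist argument is also fine: ``follow $\bsigma_i$ to the representative, then deviate'' gives $\val_i(w)\le z_i$ for every successor $w$, and it does not need $i\in A$. The gap is in the pessimist case.

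\textbf{The false claim.} For a pessimist, it is not true that $\val_i\le z_i$ after every history compatible with $\bsigma$, and your justification does not work. If $i$ deviates inside the subtree below $h$, the plays outside that subtree remain, and they may realise $z_i$ itself. So his measure becomes $\min(z_i,\cdot)=z_i$, and the XRSE property of $\bsigma$ is not contradicted.

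For example, let $v_0$ lead with probability $\frac12$ each to a terminal giving the pessimist $i$ payoff $0$, and to a vertex $u$ from which every terminal gives $i$ at least $1$. Then $z_i=0$ and $v_0\ba u\in H$, but $\val_i(u)\ge 1$. The same example refutes your fallback for $i\notin A$, since no $z_i$-terminal is reachable from $u$ at all.

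Even for $i\in A$, anchorability of $i$ at the representative $h_A^v$ does not give this for all successors. It only gives, for each action $a_i$, \emph{some} successor $w$ (with $\ba_{-i}\in\Supp(\bsigma_{-i}(h_A^v))$) such that $\val_i(w)\le z_i$. That weaker statement is the correct one, and it is all that is needed.

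\textbf{The missing idea.} You never need to analyse states with $i\notin A$. $\PM_i$ is a minimum, so a single positive-probability play with payoff at most $z_i$ suffices. Start from $\anchor_\Pi$ and follow the play while the state $\anchor_A$ satisfies $i\in A$. At each such step there are two cases.

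If $\sigma'_i$ gives positive probability to an action outside $\Supp(\sigma_i(h_A^v))$, the ``some successor'' statement gives, with positive probability, a vertex $w$ with $\val_i(w)\le z_i$, reached in state $\punish_i$. The punishing profile of \Cref{lem:punishment} then produces a positive-probability play with payoff at most $z_i$.

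Otherwise $\sigma'_i$ gives positive probability to an in-support action $a_i$. Property~\ref{it:random_pessimist} at $h_A^v$ then yields a positive-probability edge to a state whose label still contains $i$, with a representative of strictly smaller maximal rank.

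So the chain ends, either in punishment or at a terminal history $c$ with $i\in\Lab(c)$, where $i$ gets exactly $z_i$. Whatever $i$ does elsewhere, in particular in states not carrying $i$, cannot remove this play. Hence neither extra rank information in the memory nor a special choice of labelling is required.
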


\begin{proof}[Proof sketch]
    We construct from $\bsigma$ and $\Lab$ a memory structure with the states $\punish_i$, for each player $i$ ,
        and $\anchor_{A}$, for each anchored set $A$.
By \Cref{lm:counting}, we have at most $p + (p^2+1) \|\Game\|$ states.
    The initial state is $\anchor_\Pi$.

    
    Each memory state corresponds to a memoryless strategy until the memory is updated.
     In the state $\mathsf{punish}_i$, all players follow the strategy profile $\bsigma^{i}$, where, for each player $i$, we let $\bsigma_{-i}^i$ be the memoryless punishing strategy  from \Cref{lem:punishment} and where $\sigma_i^i$ is  an arbitrary memoryless strategy.
   
At memory state $\anc{A}$, we define the choice function. For every history $h \in H$, we define its \emph{absolute rank} $R(h)$ as the maximum $i$-rank of $h$, for $i$ ranging over $\Lab(h)$,
with the convention $R(h) = 0$ when $\Lambda(h) = \emptyset$. For memory state $\anc{A}$ and vertex $v$ we pick a history $h_A^v \in H$ with last vertex $v$, of minimal absolute rank. At state $\anchor_A$, from $v$ 
each player $i$ outputs the distribution $\sigma_i(h_A^v)$. 

Finally, let us define the update function.
From the state $\punish_i$, the memory always remains in the state $\punish_i$. From the state $\anchor_A$, when seeing an edge $(v, \ba, w)$, if there is a player $i$ such that $a_i \not\in \Supp(\sigma_i(h_A^v))$, then the memory switches to $\punish_i$. Otherwise, it switches to $\anchor_{\Lab(h_A^v \ba w)}$.

    

    We show in the appendix that this strategy profile is an XRSE and is equivalent to $\bsigma$.
\end{proof}






\subsection{Conclusion}
\begin{restatable}{theorem}{Npcomplete}
    \label{thm:Npcomplete}
    The constrained existence problem for XRSE in a concurrent stopping game is $\NP$-complete. 
\end{restatable}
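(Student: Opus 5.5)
NP-hardness is inherited directly. Turn-based stopping games are a special case of concurrent stopping games: at each vertex only one player has a non-trivial action set, so the transition matrix has polynomial size. The hardness reduction for turn-based games \cite[Lemma~20]{BHT25} yields stopping games, or can be made stopping by routing every cycle through a positive-probability exit. We would check this briefly, but no new work is expected there.

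For membership in NP, we use \Cref{lm:finite_memory}. If an XRSE $\bsigma$ with $x_i \le \X_i(\bsigma) \le y_i$ exists, then there is an equivalent one $\bsigma^\star$ defined by a memory structure with at most $p + (p^2+1)\|\Game\|$ states. Since $\bsigma^\star$ is equivalent, it satisfies the same constraints. The certificate therefore consists of:
\begin{itemize}
\item the memory structure, namely its states, initial state and update function, which is polynomial since its domain is $M \times V \times A \times V$ and $|A|$-sized tables are already part of $\|\Game\|$;
\item for each state and vertex, the \emph{support} of each player's choice distribution, rather than the exact probabilities.
\end{itemize}
The key observation is that $\X_i$ depends only on which plays have positive probability. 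The same holds for the deviation values: against a finite-memory profile, the best deviation of player $i$ is computed in the product MDP $\Game \times \mathcal{M}$. By \Cref{lem:oneplayerMDP}, positional strategies in that product are optimal, and the positional optimum depends only on the graph structure induced by the supports. So we may fix, say, uniform distributions on the guessed supports without loss.

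The verifier then works in polynomial time on the product $\Game\times\mathcal{M}$, whose size is polynomial.
\begin{itemize}
\item It computes $\X_i(\bsigma^\star)$ for every player by reachability in the support graph. For a pessimist this is the minimum payoff over terminals reachable with positive probability; for an optimist it is the maximum. This uses the stopping assumption, which makes every reachable terminal part of a positive-probability play.
\item For each player $i$ it computes $\sup_{\sigma_i'} \X_i(\bsigma^\star_{-i},\sigma'_i)$ in the MDP where $i$ controls its own action, with the others fixed via the product. For an optimist this is the maximum payoff over terminals reachable when $i$ may choose any action: plain reachability. For a pessimist it is the largest threshold $c$ such that player $i$ can avoid, with probability one, all terminals paying less than $c$. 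This is a safety game in which $i$ picks an action and an adversary resolves the others' supports and nature; safety games are solvable in polynomial time.
\end{itemize}
The verifier checks these deviation values against $\X_i(\bsigma^\star)$, and checks the constraints $x_i \le \X_i \le y_i$.

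The main obstacle is justifying that deviations are correctly evaluated on the product. A deviator's history may drive the memory into $\punish_i$ states. There the others play memoryless randomised punishments (\Cref{lem:punishment}), whose exact probabilities could matter in principle. However, since $\X$ is support-based and the game is stopping, only supports matter, so the finite product MDP analysis is exact. We would also need to check that the representation of $\Av$-indexed tables keeps everything polynomial in $\|\Game\|$, as in the discussion of representation size.
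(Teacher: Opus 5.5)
Your proposal is correct and follows the paper's proof. Hardness is inherited from the turn-based result of \cite{BHT25}, and membership comes from guessing the polynomial-size memory structure given by \Cref{lm:finite_memory}, then verifying it in polynomial time by reachability and qualitative MDP analysis on the induced product. Your two refinements are sound and, in stopping games, equivalent to the paper's steps: guessing only supports, with for instance uniform probabilities, makes the paper's ``can be guessed in polynomial time'' step rigorous, since the exact probabilities copied from $\bsigma$ need not be succinct; and checking the pessimist case as a safety game is the same as the paper's almost-sure reachability of $\{t \in T \mid \mu_i(t) > z_i\}$.
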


\begin{proof}
    Hardness is already known~\cite{BHT25}.
    As for easiness, by \Cref{lm:finite_memory}, we know that there exists an XRSE satisfying the specified constraints if and only if there is a finite-memory one, using at most $p + (p^2+1) \|\Game\|$ memory states.
    Such a strategy profile $\bsigma^\star$ can be guessed in polynomial time: what remains to be proven is that we can check in polynomial time that the generated risk measures satisfy the constraints, and that it is an XRSE.

    The strategy profile $\bsigma^\star$ induces a Markov chain.
    The profile $\bz = \X(\bsigma)$ can be computed by computing the set of terminals that are reached with positive probability, by standard reachability algorithms.
    Then, for each player $i$, the strategy profile $\bsigma^\star_{-i}$ induces an MDP.
    Deciding whether player $i$ has a profitable deviation in $\bsigma^\star$ amounts to deciding whether they can ensure reaching the set $T^\star = \{t \in T \mid \mu_i(t) > z_i\}$ with positive probability (if player $i$ is an optimist) or almost surely (if player $i$ is a pessimist).
    Both questions can be answered with standard polynomial-time algorithms~\cite[Section~10.6.1]{BK08}.
\end{proof}

%% file: 5Disc.tex
We have seen two different routes to circumvent the undecidability of the constrained existence of Nash equilibria in  concurrent  stopping games~\cite{UW11}.
The first
keeps the solution concept fixed and relaxes exactness: our algorithm decides the approximate problem in time exponential in the game and polynomial in the bit-size of~$\varepsilon$, with no restrictions on the memory of the players, and is accompanied by a $\PSPACE$ lower bound already in turn-based games. The second keeps exactness and
changes the players' risk attitude, replacing expected payoff by an extreme
measure; for this notion the constrained existence problem remains $\NP$-complete, even under concurrency. 


In both cases, a first limitation is that our algorithms are valid only in stopping games.
The approximate algorithm, in particular, relies heavily on that hypothesis, and decidability of the approximate problem in general concurrent games remains open.
A second limitation is the complexity gap the we leave for that approximate problem, which is now known to be $\PSPACE$-hard and in the class $\EXPTIME$: a polynomial-space algorithm, if it exists, would have to resort to completely different techniques.

Finally, we believe that a central question for future works is the exploration of equilibrium notions for which the constrained existence problem is decidable, with no restrictions on the strategies.
So far, XRSEs are the only such notion, but constitute a strong idealisation of the players' preferences, as they do not consider the probability distribution of payoffs at all---only its support.
Variants of XRSEs could be considered: for example, their \emph{subgame-perfect} version, where players maximise their risk measure after every possible history and not only in the whole game.
But the literature also suggests less \emph{extreme} risk measures, such as the value at risk (VaR), which can be seen as a generalisation of the pessimistic risk measure, where the players have some \emph{tolerance parameter} $p$ and can ignore bad payoffs occurring with probability lesser than $p$.
Decidability of the constrained existence problem of equilibria defined from that notion would then open new possibilities for the algorithmic study of equilibria in stochastic games.

%% file: 6Appendix3.tex
\subsection{Proof of \Cref{lm:truncation}} \label{app:truncation}

\lmTruncation*

\begin{proof}
    Let us build the strategy profile $\bsigma'$ as follows.
    On every history $h$ of length at most $kN$, we define $\bsigma'(h) = \bsigma(h)$.
    Let now $h$ be a history of length $kN$. For every history $h'$ of which the history $h$ is a prefix, we define $\bsigma'(h')$ arbitrarily.
    
    We know that there exists a memoryless strategy profile $\btau^h$ such that for every terminal vertex $t$, we have $\prob^{\last(h)}_{\btau}(\Diamond t) = \prob_{\bsigma}(\Diamond t \mid h)$~\cite[Theorem~3.2]{EKVY08}. But obtaining such exact strategies might not possible without correlated strategies.

    By construction, the strategy profile $\bsigma'$ has memory horizon $kN$, and each player $i$ has the same expected payoff as in $\bsigma$.
    Let us prove that $\bsigma'$ is a $\delta$-NE.

    Let $i$ be a player and let $\sigma''_i$ be a strategy for player $i$.
    We want:
    $$\Eb_i(\bsigma'_{-i}, \sigma''_i) \leq \Eb_i(\bsigma') + \delta.$$
    Let us decompose:
    $$\Eb_i(\bsigma'_{-i}, \sigma''_i) = \prob_{\bsigma'_{-i}, \sigma''_i}(V^{\leq kN} T) \Eb_i(\bsigma'_{-i}, \sigma''_i \mid V^{\leq kN} T) + \prob_{\bsigma'_{-i}, \sigma''_i}(V^{kN+1}) \Eb_i(\bsigma'_{-i}, \sigma''_i \mid V^{kN+1}).$$
    Note that on every history of length at most $kN$, the strategy profile $\bsigma'$ behaves exactly as $\bsigma$.
    Therefore, the probabilities of reaching or avoiding terminal vertices within the first $kN+1$ steps are the same in the two strategy profiles.
    Similarly, assuming a terminal vertex is reached within $kN+1$ steps, the expected payoff for player $i$ is the same.
    On the other hand, every expected payoff in the game $\Game$ is smaller than or equal to $R$.
    Hence:
    \begin{equation}\label{eq:expected_payoff_dev}
        \Eb_i(\bsigma'_{-i}, \sigma''_i) \leq \prob_{\bsigma_{-i}, \sigma''_i}(V^{\leq kN} T) \Eb_i(\bsigma_{-i}, \sigma''_i \mid V^{\leq kN} T) + \frac{\delta}{2R} R.
    \end{equation}
    Let us now apply the same decomposition when following the strategy profile $(\bsigma_{-i}, \sigma''_i)$:
    $$\Eb_i(\bsigma_{-i}, \sigma''_i) = \prob_{\bsigma_{-i}, \sigma''_i}(V^{\leq kN} T) \Eb_i(\bsigma_{-i}, \sigma''_i \mid V^{\leq kN} T) + \prob_{\bsigma'_{-i}, \sigma''_i}(V^{kN+1}) \Eb_i(\bsigma_{-i}, \sigma''_i \mid V^{kN+1}).$$
    And since every expected payoff is greater than or equal to $-R$:
    $$\Eb_i(\bsigma_{-i}, \sigma''_i) \geq \prob_{\bsigma_{-i}, \sigma''_i}(V^{\leq kN} T) \Eb_i(\bsigma_{-i}, \sigma''_i \mid V^{\leq kN} T) - \frac{\delta}{2R}R.$$
    By injecting this in \Cref{eq:expected_payoff_dev}, we obtain:
    $$\Eb_i(\bsigma'_{-i}, \sigma''_i) \leq \Eb_i(\bsigma_{-i}, \sigma''_i) + \frac{\delta}{2} + \frac{\delta}{2} = \Eb_i(\bsigma_{-i}, \sigma''_i) + \delta.$$
    Finally, since the strategy profile $\bsigma$ is an NE, we obtain:
    $$\Eb_i(\bsigma'_{-i}, \sigma''_i) \leq \Eb_i(\bsigma) + \delta = \Eb_i(\bsigma') + \delta.$$
    The strategy profile $\bsigma'$ is an $\delta$-NE.
\end{proof}

\subsection{Proof of \Cref{lm:Chin_meaning}} \label{app:Chin_meaning}

\lmChinMeaning*

\begin{proof}
    We proceed by induction on $k$.

    \subparagraph*{The case $k=0$.}
    Consider first the case $k=0$: strategy profiles with memory horizon $0$ are exactly the memoryless strategy profiles, hence if $\bsigma$ is a memoryless strategy profile from $v$, the cube $C \in \Cubes$ containing the vector $\bchi^\bsigma$ is included in the set $\Chi_0(v)$.
    Conversely, every vector $\bchi \in \Chi_0(v)$ is contained in a cube $C \in \Cubes$ that also contains a characteristic vector $\bchi^\bsigma$ where $\bsigma$ is a memoryless strategy profile: then, we have $\| \bchi - \bchi^\bsigma \| \leq \frac{R}{D}$.
    Let us now prove that the statement holds for $k>0$.

    \subparagraph*{For $k>0$, the set $\Chi_k(v)$ contains the characteristic vectors.}
    Let $\bsigma$ be a strategy profile from $v$, with memory horizon $k$.
    For every action profile $\ba$ available from $v$, and for every $w \in V$, the strategy profile $\btau^{\ba w}: (w, \ba_1, v_1 \dots, \ba_k, v_k) \mapsto \bsigma(v, \ba, w, \dots, v_k)$ has memory horizon $k-1$, and by induction hypothesis we have $\chi^{\ba w} = \chi^{\btau^{\ba w}} \in \Chi_{k-1}(w)$.
    Then, for every $i$, we have:
    $$\chi^\bsigma_{i \reg} = \Eb_i(\bsigma) = \sum_{\ba \in \Av(v)} \sum_{w \in V} \left(\prod_{j \in \Pi} \alpha_{aj} \right) \Delta(v, \ba)(w) \Eb_i(\btau^{\ba w}),$$
    where $\alpha_{aj} = \sigma_j(v)(a_j)$.
    An analogous equality holds for $\chi^\bsigma_{i \dev}$, hence the cube containing the vector $\bchi^\bsigma$ is contained in the set $\Chi_k(v)$, and therefore we have $\chi^\bsigma \in \Chi_k(v)$.

    \subparagraph*{For $k>0$, every vector in $\Chi_k(v)$ is close to some characteristic vector.}
    Conversely, let $\bchi \in \Chi_k(v)$, and let $C \in \Cubes$ be the cube containing $\bchi$.
    Then, the cube $C$ also contains a vector $\bchi'$ for which vectors $\bchi^{\ba w}$ and coefficients $\alpha_{ia}$ satisfying the equalities in \Cref{def:Chin}.
    By induction hypothesis, for each pair $(\ba, w)$, there is a strategy profile $\btau^{\ba w}$ with memory horizon $k-1$ satisfying:
    $$\left\| \bchi^{\ba w} - \bchi^{\btau^{\ba w}} \right\| \leq k\frac{R}{D}.$$
    Consider then the strategy profile $\bsigma$, defined by $\sigma_i(a) = \alpha_{ia}$ for each player $i$ and each action $a$, and that follows the strategy profile $\bsigma^{\ba w}$ after having performed the action profile $\ba$ and seen the vertex $w$.
    That strategy profile has memory horizon $k$.
    Moreover, for every $i$, we have:
    $$\left|\chi'_{i\reg} - \chi^\bsigma_{i\reg}\right| = \left|\sum_{\ba \in \Av(v)} \sum_{w \in V} \Delta(v, \ba)(w) \left(\prod_j \alpha_{ja_j} \right) \left( \chi^{\ba w}_{i\reg} -  \chi^{\btau^{\ba w}}_{i\reg} \right)\right|$$
    $$\leq \sum_{\ba \in \Av(v)} \sum_{w \in V} \Delta(v, \ba)(w) \left(\prod_j \alpha_{ja_j} \right) \left| \chi^{\ba w}_{i\reg} -  \chi^{\btau^{\ba w}}_{i\reg} \right|$$
    $$\leq \sum_{\ba \in \Av(v)} \sum_{w \in V} \Delta(v, \ba)(w) \left(\prod_j \alpha_{ja_j} \right) k\frac{R}{D} = k\frac{R}{D}.$$
    Analogously, we get $\left|\chi'_{i\dev} - \chi^\bsigma_{i\dev}\right| \leq k\frac{R}{D}$, and therefore $\left\| \bchi' - \bchi^\bsigma\right\| \leq k\frac{R}{D}$.
    Finally, using the triangular inequality, we have:
    $$\left\| \bchi - \bchi^\bsigma\right\| \leq \left\| \bchi - \bchi'\right\| + \left\| \bchi' - \bchi^\bsigma\right\|$$
    $$\leq \frac{R}{D} + k \frac{R}{D} = (k+1) \frac{R}{D},$$
    as desired.
\end{proof}

\subsection{Proof of \Cref{lm:computing_Chin}} \label{app:computing_Chin}

\lmComputingChin*

\begin{proof}
    Each set $\Chi_\l(v)$ is a union of cubes in $\Cubes$, and can therefore be described by the list of those cubes, which takes space $O(D^{2|\Pi|})$, which grows exponentially with the instance size (note that $D$ is already exponential in the instance size---but that does not affect the result).
    Assuming $\Chi_{\l-1}$ is known, computing the set $\Chi_\l(v)$ can be done by iterating all cubes $C \in \Cubes$ and all cubes $C_{\ba w} \in \Chi_{\l-1}(w)$ for each pair $(\ba, w)$, and then by checking whether there exist vectors $\bchi \in C$ and $\bchi^{\ba w} \in C_{\ba w}$ satisfying the following sentence, written in the existential theory of the reals:
    $$\left( \sum_{a  \in \Av_i(v)} \alpha_{ia} = 1 \right)$$
    $$\wedge \bigwedge_{i \in \Pi} \left( \chi_{i\reg} = \sum_{\ba \in \Av(v)} \sum_{w \in V} \Delta(v, \ba)(w) \left(\prod_j \alpha_{ja_j} \right) \chi^{\ba w}_{i\reg} \right)$$
    $$\wedge \bigwedge_{i \in \Pi} \bigwedge_{a_i \in \Av_i(v)} \left( \chi_{i\dev} \geq \sum_{\ba_{-i} \in \Av_{-i}(v)} \sum_{w \in V} \Delta(v, \ba)(w) \left(\prod_{j \neq i} \alpha_{ja_j} \right)  \chi^{\ba w}_{i\dev} \right)$$
    $$\wedge \bigwedge_{i \in \Pi} \bigvee_{a_i \in \Av_i(v)} \left( \chi_{i\dev} = \sum_{\ba_{-i} \in \Av_{-i}(v)} \sum_{w \in V} \Delta(v, \ba)(w) \left(\prod_{j \neq i} \alpha_{ja_j} \right)  \chi^{\ba w}_{i\dev} \right).$$

    This sentence has polynomial size (let us remember that each set $\Av_i(v)$ is explicitly described in the instance, even though it may be exponential in the number of players), and the cubes $C$ and $C_{\ba w}$ are also described by a polynomial number of inequations.
    Since the existential theory of the reals is decidable in $\PSPACE$ \cite{DBLP:conf/stoc/Canny88}, only polynomial space, and therefore exponential time, is needed to decide the existence of such a solution.
    The set $\Chi_\l(v)$ can therefore be constructed from $\Chi_{\l-1}$ in exponential time.
    By iterating that algorithm from $\l=1$ to $\l=k$, we obtain an exponential-time algorithm computing the set $\Chi_k(v)$.
\end{proof}

\subsection{Proof of \Cref{thm:pspace_hardness}} \label{app:pspace_hardness}

\thmPSPACEHardness*

    \begin{figure}
    \centering
    \begin{tikzpicture}[node distance = 2cm]
        \node[circlev, initial left] (x1?) {$x_1?$};
        \node[squarev, above right of=x1?] (x1) {$x_1$};
        \node[squarev, below right of=x1?] (nx1) {$\neg x_1$};
        \node[stoch, below right of=x1] (x2?) {$x_2?$};
        \node[squarev, above right of=x2?] (x2) {$x_2$};
        \node[squarev, below right of=x2?] (nx2) {$\neg x_2$};
        \node[below right of=x2] (dots) {\dots};
        \node[squarev, above right of=dots] (xn) {$x_n$};
        \node[squarev, below right of=dots] (nxn) {$\neg x_n$};
        \node[stoch, below right of=xn] (phi) {$\phi$};
        \node[right of=phi] (dots2) {\vdots};
        \node[circlev, above of=dots2] (C1) {$C_1$};
        \node[right of=C1] (dots3) {\vdots};
        \node[below of=dots3] (tL) {$t_L:~\stackrel{\bar{L}}{0}$};
        \node[above of=x2] (tb1) {$t_\bot:~\stackrel{\Eve}{0}$};
        \node[below of=nx2] (tb2) {$t_\bot:~\stackrel{\Eve}{0}$};

        \path[->] (x1?) edge (x1);
        \path[->] (x1?) edge (nx1);
        \path[->] (x1) edge (x2?);
        \path[->] (nx1) edge (x2?);
        \path[->] (x2?) edge node[above left] {$0.5$} (x2);
        \path[->] (x2?) edge node[below left] {$0.5$} (nx2);
        \path[->] (x2) edge (dots);
        \path[->] (nx2) edge (dots);
        \path[->] (x1) edge[bend left] (tb1);
        \path[->] (x2) edge (tb1);
        \path[->] (xn) edge[bend right] (tb1);
        \path[->] (nx1) edge[bend right] (tb2);
        \path[->] (nx2) edge (tb2);
        \path[->] (nxn) edge[bend left] (tb2);
        \path[->] (xn) edge (phi);
        \path[->] (nxn) edge (phi);
        \path[->] (phi) edge node[above left] {$\frac{1}{m}$} (C1);
        \path[->] (C1) edge (tL);
    \end{tikzpicture}
    \caption{A reduction from QBF}
    \label{fig:qbf}
\end{figure}

\begin{proof}
    We proceed by reduction from the $\PSPACE$-complete QSAT.
    Let us assume a formula $\phi = \exists x_1 \forall x_2 \dots \exists x_{n-1} \forall x_n \bigwedge_{i=1}^m C_i$, where each clause $C_i$ is a conjunction of three literals over the variables $x_1, \dots, x_n$.
    We construct a game $\Game$ and an error term $\epsilon$ such that if the formula $\phi$ is true, then there is a (pure) NE where the player Eve has expected payoff at least 1, and that if the formula $\phi$ is false, then there is no $\epsilon$-NE where Eve has expected payoff at least $1-\epsilon$.

    \paragraph*{The construction}
    The construction is depicted by \Cref{fig:qbf}.
    The game $\Game$ has $2n+1$ players, called $x_1$, $\neg x_1$, \dots, $x_n$, $\neg x_n$, and Eve, denoted $\Eve$.
    Intuitively, Eve tries to build a valuation satisfying all clauses; later on, for each clause, she is asked to select a literal that is satisfied by the generated valuation.
    Each literal player, on the other hand, will have a profitable deviation if and only if Eve wrongly claims that its negation is satisfied.
    On \Cref{fig:qbf}, all omitted rewards are $1$, and the terminal vertex $t_\bot$ has been depicted twice for clarity.
    Round vertices are controlled by Eve, square vertices by literal players, and black vertices are \emph{stochastic}: each player has only one action available.
    
    For each variable $x_k$, we define three vertices denoted by $x_k?$, $x_k$, and $\neg x_k$.
    If the variable $x_k$ is quantified existentially, then the vertex $x_k?$ is controlled by Eve, and she chooses deterministically to move the game to $x_k$ or to $\neg x_k$.
    If it is quantified universally, then the vertex $x_k?$ is stochastic: the game goes to $x_k$ or to $\neg x_k$ with probability $\frac{1}{2}$ each.

    Each literal vertex $L \in \{x_k, \neg x_k\}$ is controlled by the player $L$.
    There, that player can either go to the vertex $t_\bot$, or (if $k < n$) to the vertex $x_{k+1}$, or (if $k=n$) to the vertex $\phi$.

    The vertex $\phi$ is stochastic: from there, the game goes with uniform probability distribution to the vertices $C_1, \dots, C_m$.
    Each vertex $C_i$ is controlled by Eve, and from the vertex $C_i$, Eve can reach the terminal vertices $t_L$ for all literals $L$ of $C_i$.

    In the terminal vertex $t_\bot$, every player gets payoff $1$, except Eve, who gets payoff $0$.
    In the terminal vertex $t_L$, every player gets payoff $1$, except player $\bar{L}$ (the negation of $L$), who gets payoff $0$.
    The initial vertex is $x_1?$.
    We define $\epsilon = \frac{1}{3 \times2^n m + 1}$.

    \paragraph*{If the formula $\phi$ is true, then there is a (pure) NE where Eve gets expected payoff $1$.}

    Let us assume that the formula $\phi$ is true: then, there exist mappings $f_k: \{0, 1\}^{k-1} \to \{0,1\}$, for each odd $k$, such that for every valuation $\nu$ of the variables $x_2, x_3, \dots, x_n$, the valuation $\nu'$ that extends $\nu$ with $\nu'(x_k) = f_k(x_1, \dots, x_{k-1})$ for each even $k$ satisfies the conjunction $\bigwedge_i C_i$.

    Let $\bsigma$ be the strategy profile where the literal players never go to the vertex $t_\bot$, and where Eve plays according to the mappings $f_k$, i.e., after the history $hx_k?$, goes deterministically to the vertex $x_k$ if $f_k(b_1, \dots, b_{k-1}) = 1$ and to $\neg x_k$ otherwise, where $b_j = 1$ if the history $h$ visits the vertex $x_j$ and $0$ otherwise.
    Then, when reaching a vertex $C_i$ after a history $h$, that history defines a valuation of all variables, and by definition of the mappings $f_k$, that valuation satisfies $C_i$: Eve goes then deterministically to a terminal vertex $t_L$ such that the literal $L$ is satisfied.

    In this strategy profile, Eve has expected payoff $1$, sine the vertex $t_\bot$ is never reached.
    Therefore, she has no profitable deviation.
    As for every literal player $L$, every play compatible with $\bsigma$ that doe snot give player $L$ the payoff $1$, i.e. that reaches the terminal vertex $t_{\bar{L}}$, defines a valuation that satisfies the literal $\bar{L}$.
    Therefore, the vertex $L$ is not visited, and player $L$ has no profitable deviation.
    That strategy profile is a pure NE.

    \paragraph*{If the formula $\phi$ is false, then there is no $\epsilon$-NE where Eve gets expected payoff at least $1-\epsilon$.}

    Let $\bsigma$ be a strategy profile where Eve gets expected payoff at least $1-\epsilon$, and let us prove that it is not an $\epsilon$-NE.
    In this strategy profile, the probability of reaching $t_\bot$ is less than $\epsilon$.

    Let us define the mappings $f_k: \{0, 1\}^{k-1} \to \{0,1\}$, for each odd $k$, by $f_k(b_1, \dots, b_{k-1}) = 1$ if and only if after the history that visits exactly the vertices $x_j$ such that $b_j = 1$, Eve goes to the vertex $x_k$ with probability at least $\frac{1}{2}$.
    Since the formula $\phi$ is false, there is at least one valuation $\nu$ such that the corresponding valuation $\nu'$ does not satisfy the conjunction $\bigwedge_i C_i$, i.e., does not satisfy one specific clause $C_i$.
    Let $h$ be the history corresponding to that valuation, and that ends in the clause vertex $C_i$: under the strategy profile $\bsigma$, that history is followed with probability at least $\frac{(1-\epsilon)^n}{2^nm}$.
    Then, let $t_L$ be a terminal vertex that is reached with probability at least $\frac{1}{3}$.
    Along that play, player $\bar{L}$ gets payoff $0$.
    Since the valuation $\nu$ does not satisfy the clause $C_i$, it does not satisfy the literal $L$, which means that the vertex $\bar{L}$ has been visited along the history $h$.
    Then, by going to the vertex $t_\bot$, player $\bar{L}$ has a deviation that is profitable by at least:
    $$\frac{(1-\epsilon)^n}{3 \times 2^n m} \leq \frac{1-\epsilon}{3 \times 2^n m}$$
    $$= \frac{1-\frac{1}{3 \times2^n m + 1}}{3 \times 2^n m} = \frac{1}{3 \times2^n m + 1} = \epsilon.$$
    Therefore, the strategy profile $\bsigma$ is not an $\epsilon$-NE.
\end{proof}

%% file: 7Appendix4.tex
\subsection{Proof of \cref{lem:value}} \label{app:value}

\ValuePunish*

   \begin{proof}Let: $$z = \inf_{\substack{\bsigma_{-i} \\ \text{from } v}}\ \sup_{\sigma_i}\  \X_i\!\big(\bsigma_{-i},\sigma_i\big)$$

The infimum is realised, since extreme risk measure range over a finite set of values.
Therefore, there exists a strategy $\bsigma_{-i}$ such that for every strategy $\sigma_i$, we have $\X_i\left(\bsigma_{-i},\sigma_i\right) \leq z$.

If player $i$ is an optimist, and since the game is stopping, this means that the coalition $\Pi \setminus \{i\}$ has a strategy profile that guarantees almost-sure reachability of the set $\{t\in T:\mu_i(t)\le z\}$ (where all players randomise independently).
Then, by~\cite[Theorem~5]{BHMST26}, there exists a memoryless strategy profile with that property.

Similarly, if player $i$ is a pessimist, the coalition $\Pi \setminus \{i\}$ has a strategy profile that guarantees that the set $\{t\in T:\mu_i(t)\le z\}$ is reached with positive probability.
Then, by~\cite[Theorem~1]{BHMST26}, there exists a memoryless strategy profile with that property.
\end{proof}

\subsection{Proof of \Cref{lm:label}} \label{app:label}

\lmLabel*

\begin{proof}
We construct the labelling $\Lambda$ from the strategy $\bsigma$, and prove by an induction on the length of the histories that this function satisfies the properties required.

\subparagraph{Base case.}
Consider the history $v_0$. We define $\Lambda(v_0) = \Pi$.
This trivially satisfies Property~\ref{it:labelling_root}. For Property~\ref{it:anchorable}, observe that the definition of anchorability ensures that optimists get their optimistic expectation $z_i$, while for pessimists, it requires that they have no profitable deviation. Therefore Property~\ref{it:anchorable} follows since $\bsigma$ is an XRSE. We show that the labelling satisfies the rest of the properties in the next step of the proof.
\subparagraph{Induction step: $\Lambda(h)$ is defined for a history $h$}
Suppose the labelling $\Lab$ has been defined until a history $h$, then we define it for all children $h\ba w$. 
Since $\Lambda(h)$ is defined until history $h$ and satisfies Properties 1-5, we know that all players in $\Lambda(h)$ are anchorable. Since player $i \in \Lambda(h)$ is anchorable, we know from \cref{lm:irank} that the history $h$ has an $i$-rank. Recall that by the definition of $i$-rank, if player $i$ is an optimist, there is a child that has $i$-rank that is $k-1$ and if player $i$ is a pessimist, for every action $a_i\in \Supp(\sigma_i(h))$, there is an action profile $\ba_{-i}\in \Supp(\bsigma_{-i}(h))$ and a vertex $v\in \Supp(\Delta(\last(h),\ba))$ such that $h \ba v$ has $i$-rank less than $k$.

We first start by declaring $\Lambda(h\ba w) = \emptyset$ for each $\ba$ and $w$, and add a player to each extension one after another for each player as below.

\noindent \textbf{The player $i$ is an optimist.} 
We then know from \cref{lm:irank} there is a child that has $i$-rank that is $k-1$. We pick such an extension and add player $i$ to $\Lab(h \ba w)$.

\noindent\textbf{The player $i$ is a pessimist.}
For every action $a_i\in \Supp(\sigma_i(h))$, there is an action profile $\ba_{-i}\in \Supp(\bsigma_{-i}(h))$ and a vertex $v\in \Supp(\Delta(\last(h),\ba))$ such that $h \ba v$ has $i$-rank less than $k$.
We add player $i$ to the set $\Lab(h\ba w)$ for exactly one extension $h \ba v$  for each action $a_i$.


We now show that this label $\Lambda$ satisfies the Properties 2-5.
\begin{itemize}
    \item[2.]From construction we already know that Property \ref{it:inclusion} is satisfied. 
    \item[3.]We prove Property~\ref{it:anchorable} by first showing that every $i\in \Lambda(h\ba w)$ is anchorable. Any $i\in \Lambda(h\ba w)$ was added because the history $h$ had a finite $i$-rank, and therefore by definition of $i$-rank, it is anchorable.
\item[4.] Property~\ref{it:random_optimist} follows from construction of how optimists were added to $\Lambda(h\ba w)$.
\item[5.] Similarly, Property~\ref{it:random_pessimist} follows from construction of how pessimists were added to $\Lambda(h\ba w)$.
\end{itemize}

\end{proof}

\subsection{Proof of \Cref{lm:finite_memory}} \label{app:finite_memory}
\lmFiniteMemory*

\begin{proof}
We first define the strategy profile, then prove it is equivalent to $\bsigma$, and finally that it is an XRSE.

\paragraph*{Construction of the strategy profile $\bsigma^\star$}

We construct from $\bsigma$ and $\Lab$ a memory structure with the following states:
    \begin{itemize}
        \item for each player $i$, the state $\punish_i$;

        \item for each anchored set $A$, the state $\anchor_{A}$.
    \end{itemize}

    The initial state is $\anchor_\Pi$.
    By \Cref{lm:counting}, we have at most $p + (p^2+1) \|\Game\|$ states.
    
    

    Let us now define the choice functions.
    For each player $i$, let $\bsigma_{-i}^i$ be the memoryless punishing strategy from \Cref{lem:punishment}.
    Let $\sigma_i^i$ be an arbitrary memoryless strategy.
    In the state $\mathsf{punish}_i$, all players follow the strategy profile $\bsigma^{i}$.

    For every history $h \in H$, we define its \emph{absolute rank} as
$$
  R(h) \;=\; \max_{j \in \Lambda(h)} j\text{-rank of history }h,
$$
with the convention $R(h) = 0$ when $\Lambda(h) = \emptyset$. By
Property~\ref{it:anchorable}, every player $j \in \Lambda(h)$ is anchorable at $h$, and
therefore has a finite $j$-rank by Lemma~\ref{lm:irank}; hence
$R(h) \in \mathbb{N}$. For every anchored set $A$ and every vertex $v$
such that the set
$H_{A,v} \;=\; \{\, h \in H \mid \Lambda(h) = A
    \text{ and } \last(h) = v \,\}$
is nonempty, we fix a representative
$
  h^v_A \;\in\; \argmin_{h \in H_{A,v}} R(h),
$
which exists since $\mathbb{N}$ is well-ordered, and we write
$\Phi(A, v) = R(h^v_A)$. Representatives are needed only for the pairs
$(A, v)$ that are reachable under the update function defined below;
the claim in the proof of
Proposition~\ref{prop:pess} shows that $H_{A,v} \neq \emptyset$
holds for all such pairs. On the remaining pairs, the choice functions
are defined arbitrarily. Then we define the choice function for $\anchor_A$: when seeing a vertex $v$, each player $i$ outputs the distribution $\sigma_i(h_A^v)$.
    
    Finally, let us define the update function.
    From the state $\punish_i$, the memory always remains in the state $\punish_i$.
    From the state $\anchor_A$, when seeing an edge $(v, \ba, w)$, if there is a player $j$ such that $a_j \not\in \Supp(\sigma_j(h_A^v))$, then the memory switches to $\punish_j$.
    Otherwise, it switches to $\anchor_{\Lab(h_A^v \ba w)}$.

\paragraph*{The strategy profile $\bsigma^\star$ is equivalent to $\bsigma$}
\begin{proposition}\label{prop:preserve}
    We have $\X_i(\bsigma^\star) = z_i$.
\end{proposition}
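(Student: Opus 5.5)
We need to show that for every player $i$, $\X_i(\bsigma^\star) = z_i$, where $\bsigma^\star$ is the non-deviating profile and the memory never enters a $\punish$ state. The plan is to relate the plays of $\bsigma^\star$ to histories of $H$, and then use the labelling $\Lab$ to show that the value $z_i$ is both attained with positive probability and never beaten in the wrong direction.

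\textbf{The simulation map.} I will define, along every history $g$ compatible with $\bsigma^\star$, a history $\rho(g) \in H$. Set $\rho(v_0) = v_0$ and $\rho(g\ba w) = h^{v}_{A}\ba w$, where $\anchor_A$ is the memory state after $g$ and $v = \last(g)$.

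\textbf{Invariant.} By induction I will show two things. First, the memory state after $g$ is $\anchor_{\Lab(\rho(g))}$. Second, $\rho(g)\in H$ and $\last(\rho(g)) = \last(g)$. Since no deviation occurs, every action profile played lies in $\Supp(\bsigma(h^v_A))$. Hence $h^v_A\ba w \in H$ and the update goes to $\anchor_{\Lab(h^v_A\ba w)}$, which maintains both parts. This also shows $H_{A,v}\neq\emptyset$ for all reachable pairs. A consequence is that every terminal reached by $\bsigma^\star$ with positive probability is also reached by some history of $H$.

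\textbf{Pessimists, no outcome below $z_i$.} For a pessimist $i$, every terminal in $H$ gives $\mu_i \ge z_i$, because $\X_i(\bsigma)=z_i$. So by the consequence above, $\X_i(\bsigma^\star)\ge z_i$.

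\textbf{Optimists, no outcome above $z_i$.} For an optimist $i$, every terminal in $H$ gives $\mu_i \le z_i$. So $\X_i(\bsigma^\star)\le z_i$.

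\textbf{The matching inequalities.} This is the main obstacle: we must show the terminal with payoff exactly $z_i$ is reached with positive probability. I would use a potential argument on the absolute rank.

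Take $i\in\Pi=\Lab(v_0)$ and follow plays of $\bsigma^\star$ in which $i$ stays anchored. At a memory state $\anchor_A$ with $i\in A$ at vertex $v$, Property~\ref{it:random_optimist} (optimist) or Property~\ref{it:random_pessimist} (pessimist) applied to $h^v_A$ gives a child $h^v_A\ba w$. This child is reached with positive probability under $\bsigma^\star$: for a pessimist, pick any $a_i$ in the support. It still contains $i$ in its label, and it has strictly smaller $i$-rank.

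The difficulty is that the next representative $h^w_{A'}$ is chosen to minimise the absolute rank, not the $i$-rank. So I would track the potential $\Phi(A,v)=R(h^v_A)$.

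For the optimist case, I want $R$ to decrease. The child has $R(h^v_A\ba w)$ bounded by the maximum over $j\in A'$ of the $j$-ranks. Minimality of the representative gives $\Phi(A',w)\le R(h^v_A\ba w)$. I would refine this by choosing, at each step, the player of $A'$ attaining the maximum rank. The step I would check most carefully is that the ranks of all anchored players in the chosen child drop, which may require choosing the child jointly.

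If this is too weak, my fallback is an induction on $|A|$ together with $\Phi$. Once $i$ drops from the label, it is handled by a smaller set, but we know $i$ remained anchored until a rank-$0$ history. Since ranks are finite and the game is stopping, along this positive-probability path the $i$-rank reaches $0$ while $i$ is still anchored. Then the path ends in a terminal where $i$ is anchorable. Anchorability at a terminal forces $\mu_i = z_i$: for an optimist directly, and for a pessimist because a terminal with larger payoff would be a profitable outcome. This gives $\X_i(\bsigma^\star)=z_i$.
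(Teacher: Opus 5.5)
Your first half is sound. The simulation map $\rho$, with the invariant that the memory after $g$ is $\anchor_{\Lab(\rho(g))}$, correctly shows three things: every terminal reached by $\bsigma^\star$ is reached by $\bsigma$, hence $\X_i(\bsigma^\star)\ge z_i$ for pessimists and $\le z_i$ for optimists, and the representatives $h^v_A$ exist wherever they are needed.

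The gap is in the matching inequality, at exactly the step you flag. Your walk is: at $(\anchor_A,v)$, take the labelled child $c=h^v_A\ba w$ with $i\in\Lab(c)$, then jump to the representative $h^w_{\Lab(c)}$. This only produces a terminal if the walk stops, and you never establish $R(c)<R(h^v_A)$. You leave open whether all anchored players' ranks drop, and suggest the child may have to be chosen jointly.

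Your fallback does not repair this, for two reasons.
\begin{itemize}
\item The $i$-rank is not monotone along the walk. It drops from $h^v_A$ to $c$. But the next representative $h^w_{A'}$ minimises $R$, not the $i$-rank, so you only know $\irank{i}{h^w_{A'}}\le R(h^w_{A'})\le R(c)$, and this bound can exceed $\irank{i}{c}$. The walk over the finitely many pairs $(A,v)$ could therefore cycle.
\item Stopping does not exclude this. Stopping is almost-sure termination, so a path built step by step, each of whose prefixes has positive probability, may still be infinite (with probability $0$). To get a positive-probability terminal you need a bound on the length of the walk.
\end{itemize}

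The missing fact follows directly from the ``exactly one'' clauses of Properties~\ref{it:random_optimist} and~\ref{it:random_pessimist}; no joint choice of the child is needed. Let $c=h\ba w\in H$ and $j\in\Lab(c)$. Then $j\in\Lab(h)$ by Property~\ref{it:inclusion}.
\begin{itemize}
\item If $j$ is an optimist, $c$ is the unique child of $h$ whose label contains $j$.
\item If $j$ is a pessimist, then $a_j\in\Supp(\sigma_j(h))$, and $c$ is the unique child with $j$th action $a_j$ whose label contains $j$.
\end{itemize}
In either case that property says this child has smaller $j$-rank. Hence $\irank{j}{c}<\irank{j}{h}\le R(h)$ for every $j\in\Lab(c)$. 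Since $i\in\Lab(c)\neq\emptyset$, this gives $R(c)<R(h)$.

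Combined with your minimality observation $\Phi(A',w)\le R(c)$, this shows $\Phi$ strictly decreases. So the walk reaches a terminal history $c$ with $i\in\Lab(c)$ within $\Phi(\Pi,v_0)$ steps, and there $\mu_i=z_i$ as you argue. This is the paper's route: its Claim~\ref{cl:descent} plus the descent of $R$. The argument works identically for pessimists and optimists, so there is no need to treat the optimist case separately.
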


Let us recall that we had $\bz = \X(\bsigma)$.
Let $i$ be a player, and let us prove that $\X(\bsigma^\star) = z_i$.

Whenever the strategy profile is in a memory state $\anchor_A$ with $i \in A$ and see the vertex $v$, the players play the distribution $\bsigma(h_A^v)$.
In a play where there is no deviation, the punish state is not entered. By construction, all terminals that are reachable are reached also by the original strategy $\bsigma$. For an optimist $i$, this implies that $\X_i(\bsigma^\star) \leq z_i$ and for pessimists $\X_i(\bsigma) \geq z_i$.

The other direction of the proof is structured around these claims.
 
\begin{claim}[Rank reduces]\label{cl:descent}
  Let $h \in H$, let $h\bar{a}w$ be a child of $h$, and let
  $j \in \Lambda(h\bar{a}w)$. Then
  $\irank{j}{h\bar{a}w} < \irank{j}{h}$.
\end{claim}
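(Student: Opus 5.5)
The plan is to read the claim off the construction of the labelling $\Lab$ in the proof of \Cref{lm:label}, rather than off Properties~\ref{it:random_optimist} and~\ref{it:random_pessimist} alone. Those two properties, as stated, only say that \emph{some} child containing $j$ has smaller $j$-rank. They do not by themselves exclude other children that contain $j$ without a rank decrease. The construction, however, is more precise: every child label starts out as $\emptyset$, and a player is inserted into $\Lab(h\ba w)$ only in two situations, namely for the designated extension(s) chosen because their rank is smaller.

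First I would check that both quantities in the claim are well defined. Since $j \in \Lab(h\ba w)$, Property~\ref{it:inclusion} gives $j \in \Lab(h)$. Property~\ref{it:anchorable} then makes $j$ anchorable at both $h$ and $h\ba w$, so by \Cref{lm:irank} both $\irank{j}{h}$ and $\irank{j}{h\ba w}$ exist. Moreover $h$ is not terminal, because it has a child, so $\irank{j}{h} = k \geq 1$ is given by the recursive clause of the definition.

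Next I would split into cases according to the type of $j$, following the induction step of the proof of \Cref{lm:label}.
\begin{itemize}
\item If $j$ is an optimist, then $j$ was added to exactly one child of $h$, chosen to have $j$-rank $k-1$. So $h\ba w$ is that child, and its $j$-rank is $k-1 < k$.
\item If $j$ is a pessimist, then $j$ was added, for each $a_j \in \Supp(\sigma_j(h))$, to exactly one child $h\ba v$ whose $j$-rank is below $k$. Again $h\ba w$ is one of these, so its $j$-rank is below $k$.
\end{itemize}
In either case we get $\irank{j}{h\ba w} < \irank{j}{h}$.

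The main obstacle is this dependence on the construction rather than on the stated properties. To keep the argument clean, I would state explicitly, perhaps as a strengthening of Properties~\ref{it:random_optimist} and~\ref{it:random_pessimist}, that $j \in \Lab(h\ba w)$ holds \emph{only} for the selected children, and that these are exactly the children with decreased rank.

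A minor point also needs care. The choice of child made in the construction uses the $j$-rank of $h$ as it is defined from $\bsigma$, which is independent of the labelling; this is what makes the comparison above meaningful. Terminal histories need no special treatment, since they have no children.
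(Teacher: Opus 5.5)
Your proof is correct and follows the paper's argument: the paper also splits on whether $j$ is an optimist or a pessimist. In the optimist case it uses that $h\ba w$ is the unique child of $h$ whose label contains $j$. In the pessimist case it uses that $h\ba w$ is the unique such child with $j$th action $a_j \in \Supp(\sigma_j(h))$. In both cases that child has smaller $j$-rank.

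The only difference is where the uniqueness comes from. The paper reads it directly off Properties~\ref{it:random_optimist} and~\ref{it:random_pessimist}, taking them to mean ``the unique child whose label contains $j$, and that child has smaller rank''. You instead justify it from the construction in the proof of \Cref{lm:label}. That is a sound way to settle how those properties are meant, given how they are phrased.
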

\begin{claimproof}
  By Property~2, we have $j \in \Lambda(h)$. If $j$ is an optimist,
  then by Property~\ref{it:random_optimist} the history $h\bar{a}w$ is the unique child of $h$
  whose label contains $j$, and that child has $j$-rank smaller than
  $h$. If $j$ is a pessimist, then $a_j \in \Supp(\sigma_j(h))$, and by
  Property~\ref{it:random_pessimist}, applied to the action $a_j$, the history $h\bar{a}w$ is
  the unique child of $h$ with $j$th action $a_j$ whose label contains
  $j$; that child has $j$-rank smaller than $h$. 
\end{claimproof}

\begin{claim}[Payoffs at anchored terminals]\label{cl:terminal}
  Let $c \in H$ be a history ending in a terminal vertex $t$, and let
  $i \in \Lambda(c)$. Then $\mu_i(t) = z_i$.
\end{claim}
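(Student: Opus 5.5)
The claim follows directly from the definitions once we use that $c$ ends in a terminal. By Property~\ref{it:anchorable}, player $i$ is anchorable at $c$, and we split according to whether $i$ is an optimist or a pessimist. The key observation is that, since $\last(c) = t$ is terminal, the residual profile $\bsigma_{\restr c}$ (and any $(\bsigma_{-i\restr c}, \sigma'_i)$) yields the single play consisting of $t$, with probability one. Hence every such residual profile gives player $i$ the risk measure $\mu_i(t)$, whether it is measured by $\mathbb{PM}_i$ or by $\mathbb{OM}_i$.

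\emph{Optimist case.} Anchorability at $c$ means $\X_i(\bsigma_{\restr c}) = z_i$. Since $\X_i(\bsigma_{\restr c}) = \mu_i(t)$, we get $\mu_i(t) = z_i$ immediately.

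\emph{Pessimist case.} Anchorability gives $\X_i(\bsigma_{-i\restr c}, \sigma'_i) \le z_i$ for every $\sigma'_i$, which by the observation above yields $\mu_i(t) \le z_i$. For the reverse inequality we use that $c \in H$ is compatible with $\bsigma$. Every finite history compatible with $\bsigma$ has positive probability, and in a stopping game the payoff-$\mathbb{PM}_i$ is the minimum of $\mu_i$ over plays with positive probability. Here the play $c$ itself has positive probability under $\bsigma$, so $z_i = \mathbb{PM}_i(\bsigma) \le \mu_i(t)$. Combining the two inequalities gives $\mu_i(t) = z_i$.

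The only delicate point is making precise that the residual strategy at a terminal history induces the trivial play. If the convention for strategies from terminal vertices is an issue, we can bypass it using the $i$-rank machinery: $c$ has $i$-rank $0$ by definition, and by Claim~\ref{cl:descent} and Property~\ref{it:inclusion}, $i$ was anchored along all prefixes of $c$. The pessimist lower bound still comes from compatibility of $c$, and the optimist case still reads $\X_i$ off the degenerate residual play.
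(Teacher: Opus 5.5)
Your proof is correct and is the argument the paper intends when it says the claim follows from the definitions of anchored sets and ranks: anchorability of $i$ at the terminal history $c$ (Property~3) forces $\X_i$ of the degenerate residual play to equal $z_i$ for an optimist and gives $\mu_i(t) \le z_i$ for a pessimist, while the reverse inequality comes from $c$ having positive probability under $\bsigma$. Your fallback paragraph is unnecessary and does not actually avoid the degenerate-residual convention, since the optimist case still relies on it, so you can simply drop it.
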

\begin{claimproof}
    This claim follows from the definition of anchored sets and ranks. 
\end{claimproof}

In both cases of optimists and pessimists, we show that there is a positive probability of obtaining $z_i$. 
Let $i$ be a player. We construct, step by step, a play prefix
compatible with $\bar{\sigma}^\star$ and of positive probability,
visiting pairs $(A_t, v_t)$ with $i \in A_t$ and
$H_{A_t, v_t} \neq \emptyset$, where the memory state at vertex $v_t$
is $\anc{A_t}$. We initialise with $(A_0, v_0) = (\Pi, v_0)$,
which satisfies those conditions by Property~1. (Assume $v_0 \notin T$.)
 
Assume the pair $(A_t, v_t)$ has been constructed with
$v_t \notin T$, and write $h_t = h^{v_t}_{A_t}$. If $i$ is an
optimist, let $c_t = h_t\bar{a}w$ be the unique child of $h_t$ whose
label contains $i$, given by Property~\ref{it:random_optimist}. If $i$ is a pessimist, pick
an action $a_i \in \Supp(\sigma_i(h_t))$, and let $c_t = h_t\bar{a}w$
be the unique child of $h_t$ with $i$th action $a_i$ whose label
contains $i$, given by Property~\ref{it:random_pessimist}. In both cases,
$\bar{a} \in \Supp(\bar{\sigma}(h_t))$, which is exactly the support
prescribed by $\bar{\sigma}^\star$ in the memory state
$\anc{A_t}$ at $v_t$, and $w \in \Supp(\Delta(v_t, \bar{a}))$:
the edge $(v_t, \bar{a}, w)$ is taken with positive probability, no
player leaves the prescribed support, and the memory is updated to
$\anc{\Lambda(c_t)}$, with $i \in \Lambda(c_t)$.

The play stops at $t\in T$.
The prefix built so far, a finite concatenation
of positive-probability transitions, has positive probability, and by
Claim~\ref{cl:terminal} for $c_t$, player $i$ receives the
payoff $z_i$. Otherwise, we set $(A_{t+1}, v_{t+1}) = (\Lambda(c_t),
w)$: the history $c_t$ witnesses $H_{A_{t+1}, v_{t+1}} \neq
\emptyset$, and by minimality of the representative $h_{t+1}$ and
Claim~\ref{cl:descent}, we have $R(h_{t+1}) \le R(c_t) < R(h_t)$. The
sequence $(R(h_t))_t$ is a strictly decreasing sequence of natural
numbers: the construction therefore stops, after at most $R(h_0)$
steps, at a terminal vertex where player $i$ receives the payoff
$z_i$.



\paragraph*{The strategy profile $\bsigma^\star$ is an XRSE.}

We split this part into two propositions and prove them in 
\cref{prop:opt,prop:pess}.
We now state a claim that we use in \cref{prop:opt,prop:pess}. Throught the claim and the propositions, we write $\rho^A$ to be the memoryless strategy played at memory state $\anchor_A$ by $\bsigma^\star$. 
\begin{restatable}[Post-deviation values are capped]{claim}{cappingvalues}\label{cl:devvalue}
Let $h$ be a $\bsigma^{\star}$-compatible history whose memory state is
$\mathsf{anchor}_A$ at $v=\last(h)$, with $i\in A$. We have 
\begin{enumerate}
\item[(i)] \textup{(optimist)} if player $i\in O$, then $\mathrm{val}_i(w)\le z_i$
for every joint action $\ba$ with $\ba_{-i}\in\Supp(\rho^{A}_{-i}(v))$ and
every $w\in\Supp(\Delta(v,\ba))$;
\item[(ii)] \textup{(pessimist)} if $i\in P$, then for every action $a_i$
available to player $i$ at $v$ there exist a joint action $\ba$ with
$\ba_i=a_i$ and $\ba_{-i}\in\Supp(
\rho^{A}_{-i}(v))$ and a successor
$w\in\Supp(\Delta(v,\ba))$ such that $\mathrm{val}_i(w)\le z_i$.
\end{enumerate}
    
\end{restatable}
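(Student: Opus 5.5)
The plan is to transfer the claim from $\bsigma^\star$ back to the original XRSE $\bsigma$ through the representative history $h_A^v$. The key idea is that deviations from $\bsigma$ after $h_A^v$ are unprofitable because player $i$ is anchorable there. We then use the fact that the value $\val_i(w)$ is the best player $i$ can guarantee against \emph{every} coalition strategy from $w$, in particular against the continuation $\bsigma_{-i\restr h_A^v\ba w}$. Write $g=h_A^v$. Because the memory state at $\last(h)$ is $\anchor_A$, we have $\rho^A(v)=\bsigma(g)$, so the supports in the claim are exactly $\Supp(\bsigma_{-i}(g))$. Moreover, $g\in H$, $\Lab(g)=A\ni i$, and by Property~\ref{it:anchorable} player $i$ is anchorable at $g$. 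The basic inequality we will use repeatedly is
\[
\val_i(w)\;\le\;\sup_{\sigma'_i}\X_i\bigl(\bsigma_{-i\restr g\ba w},\sigma'_i\bigr),
\]
which holds because $\val_i(w)$ is an infimum over coalition strategies from $w$.

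For the optimist case (i), fix $\ba$ with $\ba_{-i}\in\Supp(\bsigma_{-i}(g))$ and $w\in\Supp(\Delta(v,\ba))$. Suppose for contradiction that $\val_i(w)>z_i$. Then, by the inequality above, some $\sigma'_i$ from $w$ reaches a terminal with $\mu_i>z_i$ with positive probability against $\bsigma_{-i\restr g\ba w}$. Now define a deviation $\tau_i$ of player $i$ in the whole game: it follows $\sigma_i$ until $g$ is reached, then plays $a_i$, and then plays $\sigma'_i$ after $g\ba w$. The history $g$ has positive probability under $\bsigma$. The profile $\ba_{-i}$ is in the support, $a_i$ is played with probability one, and $w$ occurs with positive probability. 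Hence $\X_i(\bsigma_{-i},\tau_i)>z_i$, contradicting the fact that $\bsigma$ is an XRSE. Note that anchorability is not even needed here; compatibility of $g$ with $\bsigma$ suffices.

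For the pessimist case (ii), fix an action $a_i$ available at $v$. Suppose that for every $\ba_{-i}\in\Supp(\bsigma_{-i}(g))$ and every $w\in\Supp(\Delta(v,\ba))$ we had $\val_i(w)>z_i$. For each such pair, the inequality above gives a strategy $\sigma'^{\ba w}_i$ from $w$ guaranteeing player $i$ a pessimistic measure strictly above $z_i$ against $\bsigma_{-i\restr g\ba w}$. Since all payoffs are finite values, each such guarantee is in fact at least the next payoff value above $z_i$. The deviation $\sigma''_i$ from $g$ plays $a_i$ and then $\sigma'^{\ba w}_i$ according to the realised $(\ba,w)$. Every play from $g$ compatible with $(\bsigma_{-i\restr g},\sigma''_i)$ goes through one of these finitely many children. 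Therefore $\X_i(\bsigma_{-i\restr g},\sigma''_i)>z_i$, which contradicts the anchorability of $i$ at $g$ as a pessimist.

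The main obstacle is the \emph{exact} bookkeeping behind the identification $\rho^A(v)=\bsigma(g)$, together with the fact that the claim is stated for a $\bsigma^\star$-compatible history $h$ and not for $g$ itself. I will check that the memory state $\anchor_A$ at $v$ forces $H_{A,v}\neq\emptyset$, so that $g$ is genuinely defined. This reachability fact is also invoked in the proof of Proposition~\ref{prop:pess}, and I would prove it by induction along $h$ using the update rule, which always moves to $\anchor_{\Lab(h_A^v\ba w)}$ with $h_A^v\ba w\in H$. A second minor point is that $\val_i(w)$ must be read as the value for the deviator $i$ with respect to the fixed risk measure $\X_i$, which is how Lemma~\ref{lem:value} defines it.
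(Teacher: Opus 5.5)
Your proposal is correct and follows the paper's proof essentially step for step. You pass to the representative $h_A^v$, use $\val_i(w)\le\sup_{\tau_i}\X_i(\bsigma_{-i\restr h_A^v\ba w},\tau_i)$, and then contradict the XRSE property of $\bsigma$ for optimists and the anchorability of $i$ at $h_A^v$ for pessimists. Your extra induction, showing that a $\bsigma^\star$-compatible history in state $\anchor_A$ at $v$ forces $H_{A,v}\neq\emptyset$, fills in a detail that the paper only asserts by cross-reference.
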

\begin{claimproof}
Suppose $h$ be a $\bsigma^{\star}$-compatible history whose memory state is
$\mathsf{anchor}_A$ at $v=\last(h)$, with $i\in A$. Consider the candidate history $h_A^v$ whose last vertex is $v$, that was chosen to define the strategy at vertex $v$.
Such a $h_A^v$ is an anchorable history compatible with $\bsigma$ and $\Lambda(h_A^v) = A$. We call this history $h' = h_A^v$. 

We note the following statement that we use: if some $w\in\Supp(\Delta(v,\ba))$ with
$\ba_{-i}\in\Supp(\bsigma_{-i}(h'))=\Supp(\rho^{A}_{-i}(v))$ has
$\mathrm{val}_i(w)>z_i$, then since $\mathrm{val}_i(w)$ is the least value
the opponents can enforce on $i$ from $w$ and $\bsigma_{-i\restr h'\ba w}$
is one such opponent profile,
$$\sup_{\tau_i}\X_i(\bsigma_{-i\restr h'\ba w},\tau_i)\ge\mathrm{val}_i(w)>z_i,$$
so player $i$ has a strategy $\theta_i^{\ba w}$ with
$\X_i(\bsigma_{-i\restr h'\ba w},\theta_i^{\ba w})>z_i$.

\emph{(i).} \textbf{Optimists. } Consider some vertex $w$ in the game as in the statement has $\mathrm{val}_i(w)>z_i$,
and let $\theta_i$ be the corresponding strategy above. Consider the
deviation $\tau_i$ that agrees with $\bsigma_i$ along $h'$, plays $\ba_i$
at $v$, and follows $\theta_i$ once $(\ba,w)$ is realised. Under
$(\bsigma_{-i},\tau_i)$ the history $h'$ is reached with positive
probability; at $v$ the opponents play $\bsigma_{-i}(h')$, whose support is
$\Supp(\rho^{A}_{-i}(v))$, so the pair $(\ba_{-i},w)$ is realised with
positive probability; and from $w$ player $i$ secures a payoff $>z_i$ with
positive probability. As $\X_i$ is the greatest payoff obtained with
positive probability for optimists, $\X_i(\bsigma_{-i},\tau_i)>z_i$, contradicting that
$\bsigma$ is an XRSE for optimists. Hence $\mathrm{val}_i(w)\le z_i$.

\emph{(ii).} \textbf{Pessimists. } Fix an available action $a_i$ and suppose, for contradiction,
that $\mathrm{val}_i(w)>z_i$ for every realised pair $(\ba,w)$ with
$\ba=(a_i,\ba_{-i})$, $\ba_{-i}\in\Supp(\bsigma_{-i}(h'))$ and
$w\in\Supp(\Delta(v,\ba))$. For each such pair pick $\theta_i^{\ba w}$ as
above, and let $\tau_i$ play $a_i$ at $v$ and then follow $\theta_i^{\ba w}$
from $(\ba,w)$. The pessimistic risk
is the minimum of the realisable payoffs, so
$$
\X_i(\bsigma_{-i\restr h'},\tau_i)
=\min_{(\ba,w)}\X_i(\bsigma_{-i\restr h'\ba w},\theta_i^{\ba w})>z_i,
$$
contradicting that $i$ is anchored at $h'$. Hence some
vertex $w$ that is reached by $\bsigma$ has $\mathrm{val}_i(w)\le z_i$.
\end{claimproof}

\begin{restatable}{proposition}{optimistdeviation}\label{prop:opt}
No optimist has a profitable deviation in $\bsigma^{\star}$.
\end{restatable}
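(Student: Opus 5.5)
We need to show that for every optimist $i$ and every strategy $\tau_i$, we have $\X_i(\bsigma^\star_{-i},\tau_i) \leq z_i$, i.e., no play of positive probability under $(\bsigma^\star_{-i},\tau_i)$ reaches a terminal $t$ with $\mu_i(t) > z_i$. The plan is to follow an arbitrary positive-probability play $\pi$ under $(\bsigma^\star_{-i},\tau_i)$ and track the memory states it visits. There are three phases to distinguish: (a) the memory stays in states $\anchor_A$ with $i \in A$; (b) the memory enters some $\anchor_B$ with $i \notin B$ without any deviation detected; (c) the memory enters some $\punish_j$.

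\textbf{Case (c).} If $j \neq i$, a player following $\bsigma^\star$ left the prescribed support, which happens with probability $0$; so only $\punish_i$ matters. It is entered from some state $\anchor_A$ at a vertex $v$ via an edge $(v,\ba,w)$ with $\ba_{-i} \in \Supp(\rho^A_{-i}(v))$. If $i \in A$, Claim~\ref{cl:devvalue}(i) gives $\val_i(w) \leq z_i$. From $w$ on, the coalition plays the memoryless punishing profile $\bsigma^i_{-i}$ of \Cref{lem:punishment}, which realises this value. Hence every play continuing from $w$ with positive probability gets $\mu_i \leq z_i$.

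\textbf{Case (a).} If the play stays forever in anchor states containing $i$ and reaches a terminal $t$, I would show $\mu_i(t) \leq z_i$. Let $(v,\ba,w)$ be the last edge: it follows the support of $\sigma_j(h_A^v)$ for all $j$, so $h_A^v \ba w \in H$ and $t$ is reachable under $\bsigma$ from $v_0$. Since $\bsigma$ is an XRSE, the optimistic measure gives $\mu_i(t) \leq \X_i(\bsigma) = z_i$. This argument works for any non-punishing edge, regardless of anchoring, so it also covers terminals reached in phase (b).

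\textbf{Case (b).} This is the main obstacle. Player $i$ plays within the support, but the resulting edge leads to $\Lab(h_A^v\ba w) = B \not\ni i$, after which the coalition's behaviour may offer $i$ more than $z_i$ once $i$ later deviates from states $\anchor_B$. My plan is to prove the stronger invariant that, at every anchor state $\anchor_A$ reached at vertex $v$ (whether or not $i \in A$), the coalition's support satisfies the conclusion of Claim~\ref{cl:devvalue}(i). I would re-run that claim's argument using only that $h_A^v \in H$ is reached with positive probability under $\bsigma$ and that $\bsigma$ is an XRSE, in which case anchoring is not needed for optimists. Concretely, if some edge $(v,\ba,w)$ with $\ba_{-i}$ in the opponents' support had $\val_i(w) > z_i$, player $i$ could follow $\sigma_i$ up to $h_A^v$, play $a_i$, and then use the strategy $\theta_i$ beating $\bsigma_{-i\restr h_A^v\ba w}$. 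This would give $\X_i > z_i$ against $\bsigma_{-i}$, contradicting that $\bsigma$ is an XRSE. With this invariant, every entry into $\punish_i$ is handled as in case (c), and every terminal reached without punishment is handled as in case (a).

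By the stopping hypothesis, every positive-probability play ends in a terminal, so $\X_i(\bsigma^\star_{-i},\tau_i) \leq z_i = \X_i(\bsigma^\star)$, using \Cref{prop:preserve}.
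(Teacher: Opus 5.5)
Your proof is correct and follows the paper's route. Plays that never enter $\punish_i$ are compatible with $\bsigma$, so any terminal they reach gives $\mu_i \le z_i$ simply by definition of $z_i = \X_i(\bsigma)$; the XRSE property you invoke in case~(a) is not needed there. Plays that enter $\punish_i$ are handled by Claim~\ref{cl:devvalue}(i) together with the memoryless punishing profile.

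Your case~(b) is an improvement in rigour over the paper. The paper applies Claim~\ref{cl:devvalue}(i) at the anchor state just before the switch to $\punish_i$ without checking that $i$ belongs to that state's label. You correctly observe that the optimist part of the claim only uses $h_A^v \in H$ and the XRSE property of $\bsigma$, never $i \in A$, and this closes that omission.
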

\begin{proof}
Let player $i\in O$ be an optimist and let $\sigma_i'$ be a deviation, with
$z'=\X_i(\bsigma^{\star}_{-i},\sigma_i')$. Along any play compatible with
$\bsigma^{\star}_{-i}$ the shared memory can only
\begin{itemize}
\item stay among states $\mathsf{anchor}_A$ for a fixed $A$;
\item move from $\mathsf{anchor}_A$ to $\mathsf{anchor}_B$ with
$B\subsetneq A$;
\item move to $\mathsf{punish}_i$, and then remain there.
\end{itemize}
Since the anchored sets strictly shrink along the second kind of step and
$\Pi$ is finite, and since $\Game$ is stopping, every play stabilises
either in $\mathsf{punish}_i$ or in some $\mathsf{anchor}_A$. As $\X_i$
equals the greatest payoff that player $i$ obtains with positive probability, it suffices to bound by $z_i$ the payoff of every play that
occurs with positive probability under $(\bsigma^{\star}_{-i},\sigma_i')$.

If such a play never enters $\mathsf{punish}_i$, then player $i$ never
played outside its prescribed support, so the play is compatible with
$\bsigma^{\star}$ itself; since we have shown  \cref{prop:preserve}, every payoff that
$\bsigma^{\star}$ realises with positive probability is at most $z_i$ for
the optimist~$i$.

If the play enters $\mathsf{punish}_i$, let $v$ be the last vertex at which
the memory is still an anchoring state $\mathsf{anchor}_A$, and let
$v\,\ba\,w$ be the step that triggers the switch: it means $i$ played the
out-of-support action $\ba_i$ while the opponents played within
$\Supp(\bsigma_{-i}(h_A^v))$. From $w$ onward the opponents follow the
value-optimal punishment $\bsigma^{\dagger i}_{-i}$, so every continuation
gives player $i$ at most
$\sup_{\tau_i}\X_i\big((\bsigma^{\dagger i}_{-i},\tau_i)_{\restr w}\big)\le\mathrm{val}_i(w)$,
which is at most $z_i$ by \cref{cl:devvalue}(i). Hence this play, too,
yields player $i$ at most $z_i$.

In both cases $z'\le z_i$, so $\sigma_i'$ is not a profitable deviation.
\end{proof}

\begin{restatable}{proposition}{pessimistdeviation}\label{prop:pess}
No pessimist has a profitable deviation in $\bsigma^{\star}$.
\end{restatable}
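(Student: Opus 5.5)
Fix a pessimist $i \in P$ and a deviation $\sigma'_i$. The goal is to show that $\X_i(\bsigma^\star_{-i}, \sigma'_i) \leq z_i$, i.e., that some play of positive probability under $(\bsigma^\star_{-i}, \sigma'_i)$ ends in a terminal $t$ with $\mu_i(t) \leq z_i$. Since $\X_i$ is a minimum over realisable payoffs, it suffices to build one such play. I will construct it step by step, mirroring the construction in the proof of \Cref{prop:preserve}, while maintaining the following invariant: the memory is in a state $\anchor_A$ with $i \in A$ and $H_{A,v} \neq \emptyset$, and the play prefix has positive probability. Along the way I will also establish the claim that $H_{A,v} \neq \emptyset$ for every reachable pair $(A,v)$ with the memory in $\anchor_A$. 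This holds by induction: the update to $\anchor_{\Lab(h^v_A \ba w)}$ is witnessed by the history $h^v_A \ba w$ itself, whose last vertex is $w$. The induction starts from $(\Pi, v_0)$, which is valid by Property~\ref{it:labelling_root}.

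At the current pair $(A_t, v_t)$, let $h_t = h^{v_t}_{A_t}$. Fix an action $a_i$ in the support of $\sigma'_i$ at the current history; it is played with positive probability. There are two cases.

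\textbf{In-support case.} Suppose $a_i \in \Supp(\sigma_i(h_t))$. By Property~\ref{it:random_pessimist}, there is a unique child $c_t = h_t \ba w$ with $i$th action $a_i$ and $i \in \Lab(c_t)$, and its opponents' actions $\ba_{-i}$ lie in $\Supp(\bsigma_{-i}(h_t))$. Since $\bsigma^\star$ plays $\rho^{A_t}_{-i}(v_t) = \bsigma_{-i}(h_t)$, this edge is taken with positive probability and no one leaves the support. The memory therefore moves to $\anchor_{\Lab(c_t)}$, which still contains $i$. If $w \in T$, then \Cref{cl:terminal} gives payoff $z_i$ and we stop. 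Otherwise, as before, \Cref{cl:descent} and the minimality of representatives give $R(h_{t+1}) \leq R(c_t) < R(h_t)$.

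\textbf{Out-of-support case.} Suppose $a_i \notin \Supp(\sigma_i(h_t))$. Then \Cref{cl:devvalue}(ii) yields $\ba_{-i} \in \Supp(\rho^{A_t}_{-i}(v_t))$ and $w \in \Supp(\Delta(v_t, \ba))$ with $\val_i(w) \leq z_i$. This edge is realised with positive probability, and the memory switches to $\punish_i$. From $w$ the opponents play the memoryless punishment of \Cref{lem:punishment}, which realises $\val_i(w)$. Hence, against any continuation of $\sigma'_i$, some terminal with payoff at most $\val_i(w) \leq z_i$ is reached with positive probability. Here the stopping assumption guarantees that the residual game ends, so the pessimistic measure is a genuine minimum over reached terminals. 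The resulting prefix plus continuation gives the required play.

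Since $R(h_t)$ strictly decreases in the in-support case, that case can occur only finitely often before a terminal is hit or an out-of-support action occurs. Either way we obtain a positive-probability play with payoff at most $z_i$, so $\X_i(\bsigma^\star_{-i}, \sigma'_i) \leq z_i = \X_i(\bsigma^\star)$, the last equality by \Cref{prop:preserve}.

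\textbf{Main obstacle.} The delicate point is the punishment step. One must check that, from $w$, the punishment profile forces a payoff of at most $\val_i(w)$ with positive probability against \emph{every} (possibly history-dependent) continuation of player $i$, including continuations started in the middle of a play; this uses that the value in \Cref{lem:punishment} is defined with an infimum over profiles and a supremum over all strategies of $i$. A secondary point is that the deviation chosen when $a_i$ is in support still keeps the play inside $H$-representatives. This is exactly what the uniqueness in Property~\ref{it:random_pessimist} and the representative choice ensure.
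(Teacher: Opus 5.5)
Your proof is correct and is essentially the paper's argument. Out-of-support actions are handled through Claim~\ref{cl:devvalue}(ii) followed by the memoryless punishment, and in-support actions through the rank-descent path of Proposition~\ref{prop:preserve} (Property~\ref{it:random_pessimist}, Claim~\ref{cl:descent}, Claim~\ref{cl:terminal}). The only difference is organisational, and slightly in your favour: you make the case split locally along one positive-probability path rather than splitting globally into ``some detectable deviation'' versus ``none''. This makes explicit both that the out-of-support action occurs at a history actually reached under $(\bsigma^\star_{-i},\sigma'_i)$ and that $H_{A,v}\neq\emptyset$ for reachable pairs, and it avoids the paper's unused reduction to positional deviations.
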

\begin{proof}
Let $i\in P$ and let $\sigma_i'$ be a deviation; by \cref{lem:oneplayerMDP}
we may take $\sigma_i'$ positional on the support arena. We show that
player $i$ still obtains a payoff $\le z_i$ with positive probability, i.e.\
$\PM_i(\bsigma^{\star}_{-i},\sigma_i')\le z_i$, so the deviation does not
raise $i$'s pessimistic value above $z_i$. By construction, and as in \cref{prop:opt}, along any play compatible with $\bsigma^{\star}_{-i}$ the shared memory stabilises in $\mathsf{punish}_i$ or in some $\mathsf{anchor}_A$.

\emph{A detectable deviation.} Suppose that at
some history $h$ compatible with $\bsigma^{\star}$, whose memory state is
$\mathsf{anchor}_A$ with $i\in A$ at $v=\last(h)$, the strategy $\sigma_i'$
plays an action $a_i\notin\Supp(\rho^{A}_i(v))$, so the memory switches to
$\mathsf{punish}_i$. By \cref{cl:devvalue}(ii) some successor $w$ reached by
this action satisfies $\mathrm{val}_i(w)\le z_i$; this $w$ occurs with
positive probability, and from $w$ the punishment $\bsigma^{\dagger i}_{-i}$
guarantees, against every strategy of player~$i$, a positive-probability
play of payoff $\le z_i$ (\cref{lem:value}, case~(P)). Hence
$\X_i(\bsigma^{\star}_{-i},\sigma_i')\le z_i$, and the deviation is not
profitable.

\emph{No detectable deviation.} Otherwise, the strategy $\sigma'_i$ never leaves the prescribed
support at an anchoring state carrying $i$: whenever the memory is in
a state $\anc{A}$ with $i \in A$ at a vertex $v$, every action
played with positive probability by $\sigma'_i$ lies in
$\Supp(\sigma_i(h^v_A))$. The argument of \cref{prop:preserve} is used on the profile
$(\bar{\sigma}^\star_{-i}, \sigma'_i)$: at each such pair $(\anc{A}, v)$, pick an action $a_i$ played with positive
probability by $\sigma'_i$; since $a_i \in \Supp(\sigma_i(h^v_A))$,
Property~\ref{it:random_pessimist} ensures that the unique child $h^v_A\bar{a}w$ with $i^\text{th}$ action
$a_i$ whose label contains $i$, and its opponent part $\bar{a}_{-i}$
lies in $\Supp(\bar{\sigma}_{-i}(h^v_A))$, which is exactly the support played
by $\bar{\sigma}^\star_{-i}$. The corresponding edge is therefore
taken with positive probability under
$(\bar{\sigma}^\star_{-i}, \sigma'_i)$, and the rank of the
representatives strictly decreases as in the proof of
Proposition~\ref{prop:preserve}. Hence, with positive probability, a
terminal history $c$ with $i \in \Lambda(c)$ is reached, at which
player $i$ receives the payoff $z_i$ by
Claim~\ref{cl:terminal}. Consequently,
$X_i(\bar{\sigma}^\star_{-i}, \sigma'_i) \le z_i$, and the deviation
is not profitable.

\end{proof}
\end{proof}